\def\scal#1#2{\langle #1\bv#2 \rangle}
\def\sscal#1#2{\langle #1\bv\!\!\bv#2 \rangle}
\def\shuffle{\mathop{_{^{\sqcup\!\sqcup}}}}
\def\hshuffle{\mathop{\,\underline{\!\shuffle\!}\,}}
\def\halfshuffle{\hshuffle\limits_2} 
\def\ncp#1#2{#1\langle #2\rangle}
\def\ncs#1#2{#1\langle \!\langle #2\rangle \!\rangle}
\newcommand{\calA}{{\mathcal A}}
\newcommand{\calB}{{\mathcal B}}
\newcommand{\calC}{{\mathcal C}}
\newcommand{\calD}{{\mathcal D}}
\newcommand{\calF}{{\mathcal F}}
\newcommand{\calH}{{\mathcal H}}
\newcommand{\calI}{{\mathcal I}}
\newcommand{\calJ}{{\mathcal J}}
\newcommand{\calK}{{\mathcal K}}
\newcommand{\calL}{{\mathcal L}}
\newcommand{\calM}{{\mathcal M}}
\newcommand{\calR}{{\mathcal R}}
\newcommand{\calT}{{\mathcal T}}
\newcommand{\calU}{{\mathcal U}}
\newcommand{\calV}{{\mathcal V}}
\newcommand{\calY}{{\mathcal Y}}
\newcommand{\N}{{\mathbb N}}
\newcommand{\Q}{{\mathbb Q}}
\newcommand{\R}{{\mathbb R}}
\newcommand{\C}{{\mathbb C}}
\newcommand{\Frac}[2]{\displaystyle \frac{#1}{#2}}
\newcommand{\Sum}[2]{\displaystyle{\sum_{#1}^{#2}}}
\newcommand{\Prod}[2]{\displaystyle{\prod_{#1}^{#2}}}
\def\Lyn{{\mathcal Lyn}}
\def\abs#1{|#1|}
\def\absv#1{\|#1\|}
 \def\shuffle{\mathop{_{^{\sqcup\!\sqcup}}}}
\newtheorem{corollary}{Corollary}
\newtheorem{proposition}{Proposition}
\newtheorem{theorem}{Theorem}
\newtheorem{lemma}{Lemma}
\newtheorem{definition}{Definition}
\newtheorem{example}{Example}
\newtheorem{remark}{Remark}
\newcommand{\Li}{\operatorname{Li}}
\newcommand{\ad}{\operatorname{ad}}
\def\L{\mathrm{L}}
\def\F{\mathrm{F}}
\def\G{\mathrm{G}}
\def\Sum{\displaystyle\sum}
\def\Prod{\displaystyle\prod}
\def\Frac{\displaystyle\frac}
\def\path{\rightsquigarrow}
\def\bv{\mid}
\def\abs#1{\bv\!#1\!\bv}
\begin{document}
\title[Solutions of Universal Differential Equation]{\small On The Solutions of Universal Differential Equation by Noncommutative Picard-Vessiot Theory}

\author{V.C. Bui}
\address{University of Sciences, Hue University, 77 - Nguyen Hue street - Hue city, Vietnam.}
\curraddr{University of Sciences, Hue University, 77 - Nguyen Hue street - Hue city, Vietnam.}
\email{bvchien@hueuni.edu.vn}
\thanks{}

%    Information for second author
\author{V. Hoang Ngoc Minh}
\address{Universit\'e de Lille, 1 Place D\'eliot, 59024 Lille, France.}
\curraddr{Universit\'e de Lille, 1 Place D\'eliot, 59024 Lille, France.}
\email{vincel.hoang-ngoc-minh@univ-lille.fr}
\thanks{}

%    Information for third author
\author{Q.H. Ng\^o}
\address{Hanoi University of Science and Technology, 1 Dai Co Viet, Hai Ba Trung, Ha Noi, Viet Nam.}
\curraddr{Hanoi University of Science and Technology, 1 Dai Co Viet, Hai Ba Trung, Ha Noi, Viet Nam.}
\email{hoan.ngoquoc@hust.edu.vn}
\thanks{}

%    Information for forth author
\author{V. Nguyen Dinh}
\address{University of Science and Technology of Hanoi, 18 Hoang Quoc Viet, Cau Giay, Ha Noi, VietNam.}
\curraddr{University of Science and Technology of Hanoi, 18 Hoang Quoc Viet, Cau Giay, Ha Noi, Viet Nam.}
\email{nguyen-dinh.vu@usth.edu.vn}
\thanks{}

%    General info
\subjclass[2020]{Primary 54C40, 14E20; Secondary 46E25, 20C20}

%\date{}

%\dedicatory{Dedicatod to Pierre Cartier (1932-2024)}

\keywords{}

\begin{abstract}
Basing on the algebraic combinatorics on noncommutative series with holomorphic coefficients, various recursive constructions of sequences of grouplike series converging to solutions of universal differential equation are proposed.
Basing on monoidal factorizations, these constructions intensively use diagonal series and various pairs of bases in duality, in concatenation-shuffle
bialgebra and in a Loday's generalized bialgebra. As applications, the unique solution, satisfying asymptotic conditions, of  \textit{universal}
Knizhnik-Zamolodchikov equation is provided by \textit{d\'evissage}.
\end{abstract}
\maketitle
%\tableofcontents

\section{Introduction}\label{intro}
Providing more explanations concerning the short text \cite{QTS12} and continuing the work of \cite{CM}, this work consists of expliciting solutions of universal differential equation (see \eqref{NCDE} below, when the solutions exist) using in particular Volterra expansions for the Chen series. Ultimately, applied to the universal Knizhnik-Zamolodchikov (see \eqref{KZn} below, \cite{etingof}), this provides by \textit{d\'evissage} (\textit{i.e.} solutions of\footnote{KZ is an abbreviation of V. Knizhnik and A. Zamolodchikov.} $KZ_n$ are obtained by use of solutions of $KZ_{n-1}$ and the noncommutative generating series of hyperlogarithms \cite{Linz}) the unique grouplike solution satisfying asymptotic conditions. These solutions use a Picard-Vessiot theory of noncommutative differential equations \cite{orlando} and various factorizations of Chen series, for which, in Section \ref{combinatorialframeworks} below, almost notations of formal series, on the noncommutative variables belonging to the alphabet $\calT_n=\{t_{i,j}\}_{1\le i<j\le n}$ and with coefficients in a ring $(\calA,1_{\calA})$, arise in \cite{berstel,lothaire,reutenauer,viennotgerard}. In particular, the rings \footnote{\label{discretetopology}The ring $\ncs{\calA}{\calT_n}$ is additionally endowed with the discrete topology, \textit{i.e.} $\abs{S-T}=2^{-\varpi(S-T)}$, for $S,T\in\ncs{\calA}{\calT_n}$, where $\varpi(S)$ is the valuation of a series $S$ \cite{berstel}.} of (Lie) series and of (Lie) polynomials over $\calT_n$, are denoted, respectively, by ($\ncs{\calL ie_{\calA}}{\calT_n}$ and $\ncp{\calL ie_{\calA}}{\calT_n}$) $\ncs{\calA}{\calT_n}$ and $\ncp{\calA}{\calT_n}$. According to different contexts in Section \ref{universalequation} below, the ring $\calA$ can be incarnated in the ring of complex numbers, $(\C,1)$, or in
the ring of holomorphic functions over $\calV$ (denoted by $(\calH(\calV),1_{\calH(\calV)})$), or in the wedge algebra of holomorphic forms over $\calV$ (denoted by $\Omega(\calV)$),
where $\calV$ is a simply connected differentiable manifold of $\C^n$.

The coefficients of $S$, \textit{i.e.} $\{\scal{S}{w}\}_{w\in\calT_n^*}$ belonging to $\calH(\calV)$, are holomorphic and the partial differentiations $\{\partial_i\scal{S}{w}\}_{1\le i\le n}$ are well defined. So is the differential
$d\scal{S}{w}=\partial_1\scal{S}{w}dz_1+\cdots+\partial_n\scal{S}{w}dz_n$. Hence, in Sections \ref{universalequation}--\ref{free} below, one can study the following first order noncommutative differential equation \cite{orlando}, so-called \textit{universal} differential equation, over $\ncs{\calH(\calV)}{\calT_n}$,
\begin{eqnarray}\label{NCDE}
{\bf d}S=\sum_{w\in\calT_n^*}d\scal{S}{w}w=M_nS,
&\mbox{where}&M_n
=\sum_{1\le i<j\le n}\omega_{i,j}t_{i,j}\in\ncp{\calL ie_{\Omega(\calV)}}{\calT_n}.
\end{eqnarray}
\textit{Universality} can be seen as, replacing each letter $t_{i,j}\in\calT_n$ by a constant matrix $\calM(t_{i,j})$ (resp. a holomorphic vector field $\calY(t_{i,j})$), one obtains a linear (resp. nonlinear) differential equation \cite{deligne,dyson,FPSAC96} (resp. \cite{Linz,ACA,CM}).

In particular, to the partition $\calT_n=T_n\sqcup\calT_{n-1}$, onto $\calT_{n-1}$ and $T_n=\{t_{k,n}\}_{1\le k\le n-1}$, corresponds the split of the universal connection
$M_n$, onto $M_{n-1}$ and $\bar M_n$:
\begin{eqnarray}\label{split}
M_n=\bar M_n+M_{n-1},&\mbox{where}&\bar M_n:=\sum_{k=1}^{n-1}\omega_{k,n}\;t_{k,n}\in\ncp{\calL ie_{\Omega(\calV)}}{T_n}.
\end{eqnarray}

Note that one can use the following intermediate alphabet in bijection with $\calT_n$
\begin{eqnarray}\label{intermediatealphabet}
X=\{x_k\}_{1\le j\le N},&\mbox{with}&N=n(n-1)/2\ge1,
\end{eqnarray}
for which one can use the diffential forms $\{\omega_{i}\}_{1\le i\le N}$ in bijection with $X$ and then (see also \eqref{Mn}--\eqref{Mn2} below)
\begin{eqnarray}
{\bf d}S=M_nS,&\mbox{where}&M_n:=\sum_{i=1}^N\omega_{i}x_j\in\ncp{\calL ie_{\Omega(\calV)}}{X}.
\end{eqnarray}
It follows that (see also \eqref{splitbis} below for example)
\begin{eqnarray}\label{Mn}
M_n=\sum_{1\le i<j\le n}\omega_{i,j}\;t_{i,j}=\sum_{1\le k\le N}F_k\;x_k=\sum_{1\le l\le n}U_l\;dz_l,
\end{eqnarray}
where
\begin{eqnarray}\label{Mn2}
F_k=\sum_{1\le l\le n}f_{l,k}\;dz_j&\mbox{and then}&U_l=\sum_{1\le k\le N}f_{l,k}\;x_k.
\end{eqnarray}

For any $S\neq0$ belonging to the integral ring $\ncs{\calH(\calV)}{\calT_n}$,
if $S$ is solution of \eqref{NCDE} then, by \eqref{Mn}--\eqref{Mn2}, one might have
\begin{eqnarray}\label{dSbis}
{\bf d}S=M_nS=\sum_{1\le l\le n}({\bf\partial}_lS)\;dz_l,&\mbox{with}&{\bf\partial}_lS=U_lS.
\end{eqnarray}
Since ${\bf\partial}_j{\bf\partial}_iS=(({\bf\partial}_jU_i)+U_iU_j)S$
and ${\bf\partial}_i{\bf\partial}_jS={\bf\partial}_j{\bf\partial}_iS$ then $(({\bf\partial}_jU_i)-({\bf\partial}_iU_j)+[U_i,U_j])S=0$ and then
${\bf\partial}_iU_j-{\bf\partial}_jU_i=[U_i,U_j]$, $1\le i,j\le n$. Or equivalently,
${\bf d}M_n=M_n\wedge M_n$ inducing a Lie ideal of relators on $\{t_{i,j}\}_{1\le i<j\le n}$, $\calJ_n$, and \eqref{NCDE} is solved over $\ncs{\calH(\calV)}{\calT_n}$ and then $\ncs{\calH(\calV)}{\calT_n}/\calJ_n$ as explained in Section \ref{Chenseries} below.

According to \cite{drinfeld1}, $M_n$ is said to be \textit{flat} and \eqref{NCDE} is said to be \textit{completely integrable}.

With the discrete topology, solution of \eqref{NCDE}, when exists, can be usually computed by the following convergent Picard's iteration over the topological basis $\{w\}_{w\in\calT_n^*}$
\begin{eqnarray}\label{picard0}
&F_0(\varsigma,z)=1_{\calH(\calV)},
&F_i(\varsigma,z)=F_{i-1}(\varsigma,z)+\int_{\varsigma}^zM_n(s)F_{i-1}(s),i\ge1,
\end{eqnarray}
and the sequence $\{F_k\}_{k\ge0}$ admits the limit, also called Chen series (see \cite{cartier1,Chen1954,Mathieu} and their bibliographie) of the holomorphic $1$-forms $\{\omega_{i,j}\}_{1\le i<j\le n}$ and along a path $\varsigma\path z$ over $\calV$, modulo $\calJ_n$,
is viewed as the fundamental solution of \eqref{NCDE}.

More generally, by a Ree's theorem Chen series is grouplike (see \cite{cartier1,reutenauer}), belonging to $e^{\ncs{\calL ie_{\calH(\calV)}}{\calT_n}}$, and can be put in the MRS\footnote{MRS is an abbreviation of G. M\'elan\c{c}on, C. Reutenauer and M.P. Sch\"utzenberger.} factorization form \cite{hoangjacoboussous,CM} (see Proposition \ref{MRSBTT} and Corollary \ref{Chensolution} below) and, since the rank of the module of solutions of \eqref{NCDE} is at most equals $1$ then, under the action of the Haussdorf group, \textit{i.e.} $e^{\ncs{\calL ie_{\C}}{\calT_n}}$ playing the r\^ole of the differential Galois group of \eqref{NCDE}, any grouplike solution of \eqref{NCDE} can be computed by multiplying on the right of the previous Chen series, modulo $\calJ_n$, by an element of Haussdorf group (containing the monodromy group of \eqref{NCDE}, see \cite{orlando,legsloops}). In practice, infinite solutions of \eqref{NCDE} can be computed using convergent iterations over $\ncs{\calH(\calV)}{\calT_n}$ and then $\ncs{\calH(\calV)}{\calT_n}/\calJ_n$.

A challenge is then to explicitly and exactly compute (and to study) these limits of convergent sequences of (not necessarily grouplike) series on the dual topological ring and over various corresponding dual topological bases. For that, on the one hand, thanks to the algebraic combinatorics on noncommutative series (recalled in Section \ref{combinatorialframeworks} below) and, on the other hand, by means of a noncommutative symbolic calculus (introduced in Section \ref{Iterated integrals} below) and a Picard-Vessiot theory of noncommutative differential equations (outlined in Section \ref{eqdiffnoncommutatice} below), solutions of \eqref{NCDE} are explicitly computed (in Section \ref{Chenseries} below). Applying \eqref{NCDE}--\eqref{split} and \eqref{picard0}, in Section \ref{Knizhnik-Zamolodchikov} below, substituting $t_{i,j}$ by $t_{i,j}/2\mathrm{i}\pi$ and specializing $\omega_{i,j}$ to $d\log(z_i-z_j)$ and then $\calV$ to the universal covering, $\widetilde{\C_*^n}$, of the configuration space of $n$ points on the plane \cite{kohno3,kohno4}, $\C_*^n:=\{z=(z_1,\ldots,z_n)\in\C^n|z_i\ne z_j\mbox{ for }i\neq j\}$, various expansions of Chen series over $\ncs{\calH(\widetilde{\C_*^n})}{\calT_n}$ (see Theorem \ref{Chen_braids} and Corollary \ref{finitefactorization} below) will provide solutions of the following noncommutative differential equation\footnote{So-called $KZ_n$ equation and $\Omega_n$ is called universal KZ connection form with $N$ (determined in \eqref{intermediatealphabet}) logarithmic singularities.} (given in Theorem \ref{KZsol} and Corollaries \ref{P}--\ref{C} below)
\begin{eqnarray}\label{KZn}
{\bf d}F=\Omega_nF,&\mbox{where}&\Omega_n(z):=\sum_{1\le i<j\le n}\frac{t_{i,j}}{2{\rm i}\pi}d\log(z_i-z_j),
\end{eqnarray}
and is splitting as follows (Proposition \ref{crochet} below will examine the flatness $\Omega_n$ and integrability conditions of \eqref{KZn}, see also Lemma \ref{condition} and Remark \ref{BE} below)
\begin{eqnarray}\label{N}
\Omega_n=\bar{\Omega}_n+\Omega_{n-1},&\mbox{where}&
\bar{\Omega}_n(z):=\sum_{k=1}^{n-1}\Frac{t_{k,n}}{2{\rm i}\pi}d\log(z_k-z_n).
\end{eqnarray}
In particular, let $\Sigma_{n-2}=\{z_1,\ldots,z_{n-2}\}\cup\{0\}$ (one puts $z_{n-1}=0$) be the set of singularities and $s=z_n$. For\footnote{\label{NOTE} $z_n$ is variate moving towards $z_{n-1}$ and $z_k=a_k$ is fixed and then $d(z_n-z_k)=dz_n=ds$.} $z_n\to z_{n-1}$, the connection $\bar{\Omega}_n$ behaves as $(2{\rm i}\pi)^{-1}N_{n-1}$, where $N_{n-1}$ is nothing but the connection of the differential equation satisfied by the noncommutative generating series of hyperlogarithms (see \eqref{sghyperlog}--\eqref{DEN} below)
\begin{eqnarray}\label{Nbis}
N_{n-1}(s):=t_{n-1,n}\frac{ds}{s}-\sum_{k=1}^{n-2}t_{k,n}\frac{ds}{z_k-s}
\in\ncp{\calL ie_{\Omega(\widetilde{\C\setminus\Sigma_{n-2}})}}{T_n}.
\end{eqnarray}

\begin{example}\label{$KZ_3$}
\begin{itemize}
\item If $n=2$ then $\calT_2=\{t_{1,2}\}$ and $\Omega_2(z)=(t_{1,2}/2{\rm i}\pi)d\log(z_1-z_2)$.
A solution of ${\bf d}F=\Omega_2F$ is
$F(z_1,z_2)=e^{(t_{1,2}/2{\rm i}\pi)\log(z_1-z_2)}=(z_1-z_2)^{t_{1,2}/2{\rm i}\pi}$
and it belongs to $\ncs{\calH(\widetilde{\C_*^2})}{\calT_2}$,

\item For $n=3,\calT_3=\{t_{1,2},t_{1,3},t_{2,3}\}$ and $\Omega_3(z)=\bar{\Omega}_3+\Omega_2(z)$, where
$\bar{\Omega}_3=(t_{1,3}d\log(z_1-z_3)+t_{2,3}d\log(z_2-z_3))/{2{\rm i}\pi}
\in\ncp{\calL ie_{\Omega(\widetilde{\C\setminus\{0,z_1\}})}}{t_{1,2},t_{2,3}}$, which behaves as $N_2(s)=(t_{1,2}s^{-1}{ds}-t_{2,3}(z_1-s)^{-1}{ds})/2{\rm i}\pi$,
by putting $z_2=0$ and $z_1=1$, see also Appendix \ref{AppendixB}.
\end{itemize}
\end{example}

\begin{example}\label{$KZ_3$bis}
\begin{itemize}
\item Solution of ${\bf d}F=\Omega_3F$ can be computed as limit of the sequence $\{F_l\}_{l\ge0}$,
in $\ncs{\calH(\widetilde{\C_*^3})}{\calT_3}$, by convergent Picard's iteration as in \eqref{picard0}
\begin{eqnarray*}
&F_0(z^0,z)=1_{\calH(\widetilde{\C_*^n})},&F_i(z^0,z)=F_{i-1}(z^0,z)+\int_{z^0}^z\Omega_3(s)F_{i-1}(s),i\ge1.
\end{eqnarray*}

\item Let us compute, by another way, a solution of ${\bf d}F=\Omega_3F$ thanks to the sequence $\{V_l\}_{l\ge0}$,
in $\ncs{\calH(\widetilde{\C_*^3})}{\calT_3}$, satisfying the following recursion\footnote{This recursion is
different with respect to the exposure pattern in \eqref{S_k} below.}
\begin{eqnarray*}
V_0(z)&=&e^{(t_{1,2}/2{\rm i}\pi)\log(z_1-z_2)},\\
V_l(z)&=&V_0(z)\int_0^zV_0^{-1}(s)\Big(\frac{t_{1,3}}{2{\rm i}\pi}d\log(z_1-z_3)+\frac{t_{2,3}}{2{\rm i}\pi}d\log(z_2-z_3)\Big)V_{l-1}(s)\\
&=&e^{(t_{1,2}/2{\rm i}\pi)\log(z_1-z_3)}\int_0^ze^{-(t_{1,2}/2{\rm i}\pi)\log(s_1-s_2)}\bar{\Omega}_3(s)V_{l-1}(s).
\end{eqnarray*}
\end{itemize}
\end{example}

The Chen series, of the holomorphic $1$-forms $\{d\log(z_i-z_j)\}_{1\le i<j\le n}$ and along the path $z^0\path z$
over universal covering $\widetilde{\C_*^n}$, can be used to determine solutions of \eqref{KZn} and depends on the
differences $\{z_i-z_j\}_{1\le i<j\le n}$, as will be treated in Section \ref{free} below to illustrate our purposes.
Furthermore, the universal KZ connection form $\Omega_n$ satisfies the following identity \cite{drinfeld1}
(see also Proposition \ref{crochet} below)
\begin{eqnarray}\label{integrable}
{\bf d}\Omega_n-\Omega_n\wedge\Omega_n=0
\end{eqnarray}
then $\Omega_n$ is flat and \eqref{KZn} is completely integrable. It turns out that \eqref{integrable} induces 
the relators associated to following relations on $\{t_{i,j}\}_{1\le i<j\le n}$ \cite{kohno1,kohno2,kohno3}.
\begin{eqnarray}\label{braidbis}
\qquad
\calR_n=
\left\{\begin{array}{rll}
[t_{i,k}+t_{j,k},t_{i,j}]=0&\mbox{for distinct }i,j,k,&1\le i<j<k\le n,\cr
[t_{i,j}+t_{i,k},t_{j,k}]=0&\mbox{for distinct }i,j,k,&1\le i<j<k\le n,\cr
[t_{i,j},t_{k,l}]=0&\mbox{for distinct }i,j,k,l,&
\left\{{\displaystyle1\le i<j\le n,\atop\displaystyle1\le k<l\le n,}\right.
\end{array}\right.
\end{eqnarray}
generating the Lie ideal $\calJ_{\calR_n}$, of $\ncp{\calL ie_{\calH(\calV)}}{\calT_n}$, seemingly different to
the relators associated to the infinitesimal braid relators on $\{t_{i,j}\}_{1\le i,j\le n}$ \cite{drinfeld1}:
\begin{eqnarray}\label{braid}
\calR'_n=\left\{\begin{array}{rcllll}
t_{i,j}&=&0&\mbox{for }i=j,\cr
t_{i,j}&=&t_{j,i}&\mbox{for distinct }i,j,\cr
[t_{i,k}+t_{j,k},t_{i,j}]&=&0&\mbox{for distinct }i,j,k,\cr
[t_{i,j},t_{k,l}]&=&0&\mbox{for distinct }i,j,k,l.
\end{array}\right.
\end{eqnarray}
Solutions of \eqref{KZn} will be then expected belonging to $\ncs{\calH(\widetilde{\C_*^n})}{\calT_n}/\calJ_{\calR_n}$ and
the logarithm of grouplike solutions will be expected in $\ncs{\calL ie_{\calH(\widetilde{\C_*^n})}}{\calT_n}/\calJ_{\calR_n}$.
These expressions will be explicitly computed (see Section \ref{free} below).

Now, let us explain a strategy for solving \eqref{NCDE} throughout the universal KZ equation \eqref{KZn}. This involves in high energy physics \cite{Weinzierl} and has applications on representation theory of affine Lie algebra and quantum groups, braid groups, topology of hyperplane complements, knot theory \cite{cartier1,cartier,cartier2,drinfeld1,drinfeld2,Furusho,Jimbo1,Jimbo2,kohno1, kohno2,Mathieu}:
\begin{itemize}
\item According to \cite{Chen1954}, the Chen series $C_{\varsigma\path z}$, of $\{d\log(z_i-z_j)\}_{1\le i<j\le n}$ and along the concatenation of the paths $\varsigma\path z^0$ and $z^0\path z$ over $\calV$ is followed
\begin{eqnarray}\label{normalization0}
C_{\varsigma\path z}&=&C_{z^0\path z}C_{\varsigma\path z^0},\quad\mbox{or equivalently},\cr
\forall w\in\calT_n^*,\quad
\scal{C_{\varsigma\path z}}{w}&=&\sum_{u,v\in\calT_n^*,uv=w}\scal{C_{z^0\path z}}{u}\scal{C_{\varsigma\path z^0}}{v}.
\end{eqnarray}
On the other side, the coefficients of the Chen series, along $0\path z$ and of $\{d\log(z_i-z_j)\}_{1\le i<j\le n}$, are not well defined. For example, for any $1\le i<j\le n$, the integral $\int\limits_0^zd\log(z_i-z_j)$ is not defined. In general, strategies that are widely used in the literature are tangential base points\footnote{\textit{i.e.} simply connected regions in the neighborhood of the divisor at infinity.} \cite{deligne}.

Hence, in Section \ref{free} below, as an extension of the treatment on polylogarithms in \eqref{Li}
(resp. hyperlogarithms in \eqref{hyperlogarithms}) we will construct an other grouplike series for
computing solution of \eqref{KZn}, denoted by $\F_{KZ_n}$, such that
\begin{eqnarray}\label{normalization}
\F_{KZ_n}(z)=C_{z^0\path z}\F_{KZ_n}(z^0).
\end{eqnarray}
$\F_{KZ_n}(z)$ will normalize $C_{0\path z}$ (see Definitions \ref{Chenseriesdef} and \ref{F},
Corollaries \ref{P}--\ref{C} below) and, as a counter term, $\F_{KZ_n}(z^0)$ belongs to
$\{e^C\}_{C\in\ncs{\calL ie_{\C}}{\calT_n}}$.
These will be obtained as image, by tensor of morphisms of algebras, of the diagonal series over $\calT_n=T_n\sqcup\calT_{n-1}$ (see Lemma \ref{factorization}, Propositions \ref{dual_algebras}--\ref{expression} and Theorem
\ref{diagonal} below) over $(\ncp{\Q}{T_n},{\tt conc},1_{T_n^*},\Delta_{\shuffle})$ (resp. $(\ncp{\Q}{\calT_{n-1}},{\tt conc},1_{\calT_{n-1}^*},\Delta_{\shuffle})$) endowed pair
of dual bases $\{P_l\}_{l\in\Lyn T_n}$ and $\{S_l\}_{l\in\Lyn T_n}$ (resp. $\{P_l\}_{l\in\Lyn\calT_{n-1}}$ and $\{S_l\}_{l\in\Lyn\calT_{n-1}}$),  indexed by Lyndon words over $T_n$ (resp. $\calT_{n-1}$):
\begin{eqnarray}
\calD_{\calT_n}&=&\calD_{\calT_{n-1}}\Prod_{l=l_1l_2\atop{l_2\in\Lyn\calT_{n-1},l_1\in\Lyn T_n}}^{\searrow}e^{S_l\otimes P_l}
\calD_{T_n}\quad{\mbox{(decreasing lexicographical}\atop\mbox{ordered product)}}\\
&=&\calD_{T_n}\Big(1_{\calT_n^*}\otimes1_{\calT_n^*}\\
&+&\sum_{k\ge1}\sum_{v_1,\ldots,v_k\in T_n^*\atop t_1,\ldots,t_k\in\calT_{n-1}}
a(v_1t_1)\halfshuffle(\cdots\halfshuffle a(v_kt_k)\ldots))\otimes r(v_1t_1)\ldots r(v_kt_k)\Big),\cr
\calD_{T_n}&=&\prod_{l\in\Lyn T_n}^{\searrow}e^{S_l\otimes P_l}\quad{\mbox{(decreasing lexicographical}\atop\mbox{ordered product),}}
\end{eqnarray}
where $\halfshuffle$ is the half-shuffle product \cite{Loday} and, for any $w=t_1\ldots t_m\in\calT_n^*$,
$a(w)=(-1)^mt_m\ldots t_1$ and $r(w)=\ad_{t_1}\circ\cdots\circ\ad_{t_{m-1}}t_m$.

Furthermore, considering $\calI_n$, the sub Lie algebra of $\ncs{\calL ie_{\Q}}{\calT_n}$ generated by
$\{\ad^k_{-T_n}t\}_{t\in\calT_{n-1}}^{k\ge0}$, the enveloping algebra $\calU(\calI_n)$ and its dual
$\calU(\calI_n)^{\vee}$ are generated by the dual bases (see Section \ref{moreabout} below)
\begin{eqnarray}
\calB&=&\{\ad_{-T_n}^{k_1}t_1\ldots\ad_{-T_n}^{k_p}t_p\}_{t_1,\ldots,t_p\in\calT_{n-1}}^{k_1,\ldots,k_p\ge0,p\ge1},\\
\calB^{\vee}&=&\{a(T_n^{k_1}t_1)\halfshuffle(\cdots\halfshuffle a(T_n^{k_p}t_p)\ldots))
\}_{t_1,\ldots,t_p\in\calT_{n-1}}^{k_1,\ldots,k_p\ge0,p\ge1}.
\end{eqnarray}

\item With the previous expressions of the diagonal series $\calD_{\calT_n}$, for $z_n\to z_{n-1}$, grouplike solutions of \eqref{KZn}--\eqref{N} will be of the form
$h(z_n)H(z_1,\ldots,z_{n-1})$ (see Note \ref{NOTE} and Proposition \ref{volterraexp}--\ref{sol},
Theorems \ref{Chen_braids}--\ref{KZsol}, Corollary \ref{P} below) such that
\begin{itemize}
\item $h$ is solution of $df=(2{\rm i}\pi)^{-1}N_{n-1}f$, where $N_{n-1}$ is the connection determined
in \eqref{Nbis}. Hence, $h(z_n)\sim_{z_n\to z_{n-1}}(z_{n-1}-z_n)^{t_{n-1,n}/2\mathrm{i}\pi}$.

\item $H(z_1,\ldots,z_{n-1})$ is solution of ${\bf d}S=\Omega_{n-1}^{\varphi_n}S$, where
\begin{eqnarray}
&\Omega_{n-1}^{\varphi_n}(z)=\sum_{1\le i<j\le n-1}d\log(z_i-z_j)\varphi_n^{(z^0,z)}(t_{i,j})/2\mathrm{i}\pi,&\\
&\varphi_n^{(z^0,z)}(t_{i,j})\sim_{z_n\to z_{n-1}}e^{\ad_{-\log(z_{n-1}-z_n))t_{n-1,n}/2\mathrm{i}\pi}}t_{i,j}\mod\calJ_n.&
\end{eqnarray}
\end{itemize}

\item With the discrete topology, an explict computation of the actual solution, $\F_{KZ_n}$, uses the following recursion
\begin{eqnarray}\label{S_k}
V_k(\varsigma,z)=V_0(\varsigma,z)\sum_{t_{i,j}\in\calT_{n-1}}
\int_{\varsigma}^z\omega_{i,j}(s)V_0^{-1}(\varsigma,s)t_{i,j}V_{k-1}(\varsigma,s)
\end{eqnarray}
and considers two different cases of starting condition, $V_0$, for \eqref{S_k}:
\begin{itemize}
\item as the grouplike series $(\alpha_{\varsigma}^z\otimes\mathrm{Id})\calD_{T_n}$.
In this case, $\{V_k\}_{k\ge0}$ converges to the unique
solution satisfying asymptotic conditions achieving the \textit{d\'evissage} (using the decreasing lexicographical order product):
\begin{eqnarray}\label{lavraie}
\F_{KZ_n}&=&\prod_{l\in\Lyn T_n}^{\searrow}e^{F_{S_l}P_l}\cr
&\times&\underbrace{\Big(1_{\calT_n^*}
+\sum_{v_1,\ldots,v_k\in T_n^*,k\ge1\atop t_1,\ldots,t_k\in\calT_{n-1}}
F_{a(v_1t_1)\halfshuffle\ldots\halfshuffle a(v_kt_k)}r(v_1t_1)\ldots r(v_kt_k)\Big)}_{\mbox{functional expansion of solution of $KZ_{n-1}$}}\cr
&=&\prod_{l\in\Lyn\calT_{n-1}}^{\searrow}e^{F_{S_l}P_l}
\Big(\prod_{l=l_1l_2\atop{l_2\in\Lyn\calT_{n-1},l_1\in\Lyn T_n}}^{\searrow}e^{F_{S_l}P_l}\Big)
\prod_{l\in\Lyn T_n}^{\searrow}e^{F_{S_l}P_l},
\end{eqnarray}

\item as $(\alpha_{\varsigma}^z\otimes\mathrm{Id})\calD_{T_n}\mod[\ncs{\calL ie_{\calH(\calV)}}{T_n},\ncs{\calL ie_{\calH(\calV)}}{T_n}]$
(see also Remarks \ref{nil_app} and \ref{BE} below). In this case, extending the treatment in \cite{drinfeld1} and considered in \eqref{abelianization} below, one gets an approximation of \eqref{lavraie}:
\begin{eqnarray}
\F_{KZ_n}&\equiv&e^{\sum_{t\in T_n}F_tt}\Big(1_{\calT_n^*}\cr
&+&\sum_{v_1,\ldots,v_k\in T_n^*,k\ge1\atop t_1,\ldots,t_k\in\calT_{n-1}}
F_{a(\hat v_1t_1)\halfshuffle(\ldots\halfshuffle(a(\hat v_kt_k))\ldots)}r(v_1t_1)\ldots r(v_kt_k)\Big),
\end{eqnarray}
where, for $w=t_1\ldots t_m\in\calT_n^*,\hat w=t_1\shuffle\ldots\shuffle t_m$.
\end{itemize}
Specializing the convergent case to \eqref{normalization}, it will illustrate, in Section \ref{Appendices},
with the cases of $KZ_4$ and, in a similar way, $KZ_3$ (achieving Example \ref{$KZ_3$bis}).
\end{itemize}

The organization of this paper is as follows
\begin{itemize}
\item In Section \ref{combinatorialframeworks}, some algebraic combinatorics of the diagonal series, on the concatenation-shuffle bialgebra and on a Loday's generalized bialgebra, will be recalled briefly by Theorem \ref{diagonal}. In particular, we will insist on Lazard and  Sch\"utzenberger monoidal factorizations leading to various dual topological bases on which will base the computations of the next sections.

\item In Section \ref{universalequation}, various expansions of Chen series will be provided by Propositions \ref{MRSBTT}--\ref{volterraexp}, Theorem \ref{Chen_braids} and Corollary \ref{finitefactorization} to obtain grouplike solutions of \eqref{NCDE} in the factorized forms, over $\ncs{\calH(\calV)}{\calT_n}$ and then over $\ncs{\calH(\calV)}{\calT_n}/\calJ_{\calR_n}$. In particular, by \eqref{split}, finite factorization is similar to \textit{d\'evissage}\footnote{See Note \ref{encore} below and the descryption in the begining of Section \ref{intro}.} of $KZ_n$.

\item In Section \ref{free}, some consequences for grouplike solutions of \eqref{KZn}, satisfying
asymptotic conditions, will be examined by Theorem \ref{KZsol} and Corollaries \ref{P}--\ref{C}.
\begin{example}
Grouplike solution of $KZ_3$ admits polylogarithms as local coordinates and solutions of $KZ_2$
(admitting elementary transcendental functions $\{\log(z_i-z_j)\}_{1\le i<j\le n}$ as  coordinates)
as in Example \ref{$KZ_3$}.
\end{example}
\end{itemize}

\section{Combinatorial frameworks}\label{combinatorialframeworks}
\subsection{Algebraic combinatorics on noncommutative series}\label{combinatorics}
Now, for fixed  $n$ and $T_k:=\{t_{j,k}\}_{1\le j\le k-1}$ ($2\le k\le n$), by \eqref{split} let us consider\footnote{In terms of cardinality, one has $\sharp{\calT_n}=n(n-1)/2$ and $\sharp{T_n}=n-1$. If $n\ge4$ then $\sharp{\calT_{n-1}}\ge\sharp{T_n}$.} $\calT_k=T_k\sqcup\calT_{k-1}$.
\begin{example}
\begin{enumerate}
\item $\calT_5=\{t_{1,2},t_{1,3},t_{1,4},t_{1,5},t_{2,3},t_{2,4},t_{2,5},t_{3,4},t_{3,5},t_{4,4}\}$, one has $T_5=\{t_{1,5},t_{2,5},t_{3,5},t_{4,5}\}$ and $\calT_4$,
\item $\calT_4=\{t_{1,2},t_{1,3},t_{1,4},t_{2,3},t_{2,4},t_{3,4}\}$, one has $T_4=\{t_{1,4},t_{2,4},t_{3,4}\}$ and $\calT_3$,
\item $\calT_3=\{t_{1,2},t_{1,3},t_{2,3}\}$, one has $T_3=\{t_{1,3},t_{2,3}\}$ and $\calT_2=\{t_{1,2}\}$.
\end{enumerate}
\end{example}

Let us consider the following total order $\calT_n$ and then over the sets of Lyndon words \cite{lothaire,reutenauer} $\Lyn\calT$ and $\Lyn\calT_n$ as follows (for $2\le k\le n$)
\begin{eqnarray}\label{orderLyndon}
T_2\succ\ldots\succ T_n,&t_{1,k}\succ\ldots\succ t_{k-1,k},&\Lyn T_2\succ\ldots\succ\Lyn T_n.
\end{eqnarray}
According to the Chen-Fox-Lyndon theorem \cite{lothaire,reutenauer,viennotgerard}, with the ordering in \eqref{orderLyndon}, there is a unique way to get the standard factorization of $l\in\Lyn\calT_n$, \textit{i.e.}%\footnote{It leads to Lie brackets constructing basis of free Lie algebra and envelopping algebra in \eqref{basisP}.}
$st(l)=(l_1,l_2)$, where $l_2$ is the longest nontrivial proper right factor of $l$ or equivalently its smallest such for the lexicographic ordering \cite{lothaire}. Then
\begin{eqnarray}\label{orderLyndon3}
\Lyn\calT_{n-1}\succ\Lyn T_n.\Lyn\calT_{n-1}\succ\Lyn T_n,
\end{eqnarray}
More generally, for any $(t_1,t_2)\in T_{k_1}\times T_{k_2},2\le k_1<k_2\le n$, one also has
\begin{eqnarray}\label{orderLyndon4}
t_2t_1\in\Lyn\calT_{k_2}\subset\Lyn\calT_n&\mbox{and}&t_2\prec t_2t_1\prec t_1.
\end{eqnarray}

Hence, as consequences of \eqref{orderLyndon}--\eqref{orderLyndon3}, one obtains
\begin{itemize}
\item If $l\in\Lyn T_{k-1}$ and $t\in T_k,2\le k\le n$ then $tl\in\Lyn\calT_n$ and $t\prec tl\prec l$.
\item If $l_1\in\Lyn T_{k_1}$ and $l_2\in\Lyn T_{k_2}$ (for $2\le k_1<k_2\le n$) then
$l_2l_1\in\Lyn\calT_{k_2}\subset\Lyn\calT_n$ and $l_2\prec l_2l_1\prec l_1$.
\item If $l_1\in\Lyn T_k$ and $l_2\in\Lyn\calT_{k-1}$ (for $2\le k_1<k_2\le n$) then
$l_1l_2\in\Lyn\calT_n$ and $l_1\prec l_1l_2\prec l_2$.
\end{itemize}

In this Section, $\calA$ is a commutative integral ring containing $\Q$ and, by notations in \cite{berstel,lothaire,reutenauer},
$(\calT_n^*,1_{\calT_n^*})$ is the free monoid generated by $\calT_n$, for the concatenation denoted by $\tt conc$
(and it will be omitted when there is non ambiguity). The set of polynomials (resp. series) over
$\calT_n$ is denoted by $\ncp{\calA}{\calT_n}$ (resp. $\ncs{\calA}{\calT_n}$) and
$\ncs{\calA}{\calT_n}=\ncp{\calA}{\calT_n}^{\vee}$ (\textit{i.e} $\ncs{\calA}{\calT_n}$ is dual to
$\ncp{\calA}{\calT_n}$), via the following pairing
\begin{eqnarray}\label{pairing}
\ncs{\calA}{\calT_n}\otimes_{\calA}\ncp{\calA}{\calT_n}\longrightarrow\calA,&
T\otimes_{\calA}P\longmapsto\scal{T}{P}:=\Sum_{w\in\calT_n^*}\scal{T}{w}\scal{P}{w}.
\end{eqnarray}
In the sequel, all algebras, linear maps and tensor signs that appear in the following are over $\calA$ unless specified otherwise. The set of Lie polynomials (resp. Lie series), over $\calT_n$ with coefficients in $\calA$, is denoted by $\ncp{\calL ie_{\calA}}{\calT_n}$ (resp. $\ncs{\calL ie_{\calA}}{\calT_n}$). For convenience, the set of exponentials of Lie series will be denoted by
$e^{\ncs{\calL ie_{\calA}}{\calT_n}}=\{e^C\}_{C\in\ncs{\calL ie_{\calA}}{\calT_n}}$. The smallest algebra containing $\ncp{\calA}{\calT_n}$ and closed by rational operations (\textit{i.e.} addition, concatenation, Kleene star) is denoted by $\ncs{\calA^{\mathrm{rat}}}{\calT_n}$. Any $S\in\ncs{\calA^{\mathrm{rat}}}{\calT_n}$ is said to be rational and, by a Sch\"utzenberger's theorem \cite{berstel}, there is a linear representation $(\beta,\mu,\eta)$ of dimension $k\ge0$ such that (and conversely)
\begin{eqnarray}\label{recognized}
S=\beta((\mathrm{Id}\otimes\mu){\calD}_{\calT_n})\eta=\Sum_{w\in\calT_n^*}(\beta\mu(w)\eta)w,
\end{eqnarray}
where $\mu$ is the morphism of monoids from $X^*$ to ${\calM}_{k,k}(\calA)$, mappping each letter to a $k\times k$-matrix, $\beta$ is a column matrix in ${\calM}_{k,1}(\calA)$ and $\eta$ is a raw matrix in ${\calM}_{1,k}(\calA)$.

\begin{example}[\cite{orlando}]\label{automate}
To simplify, let $X$ be the alphabet $\{x_0,x_1\}$. The rational series $(t^2x_0x_1)^*$ and $(-t^2x_0x_1)^*$ admit, respectively, $(\nu_1,\{\mu_1(x_0),\mu_1(x_1)\},\eta_1)$ and $(\nu_2,\{\mu_2(x_0),\mu_2(x_1)\},\eta_2)$ as the linear representations given by
$$\begin{array}{rlrl}
\nu_1=\begin{pmatrix}1&0\end{pmatrix},
&\mu_1(x_0)=\begin{pmatrix}0&t\cr0&0\end{pmatrix},
&\mu_1(x_1)=\begin{pmatrix}0&0\cr t&0\end{pmatrix},
&\eta_1=\begin{pmatrix}1\cr0\end{pmatrix},\cr
\nu_2=\begin{pmatrix}1&0\end{pmatrix},
&\mu_2(x_0)=\begin{pmatrix}0&\mathrm{i}t\cr0&0\end{pmatrix},
&\mu_2(x_1)=\begin{pmatrix}0&0\cr\mathrm{i}t&0\end{pmatrix},
&\eta_2=\begin{pmatrix}1\cr0\end{pmatrix}.
\end{array}$$
\end{example}

Recall that $\ncs{\calA^{\mathrm{rat}}}{\calT_n}$ is also closed by shuffle which is denoted by $\shuffle$ and defined recursively, for any letters $x,y\in\calT_n$ and words $u,v\in\calT_n^*$, as follows \cite{berstel}
\begin{eqnarray}\label{recursion}
u\shuffle 1_{\calT_n^*}=1_{\calT_n^*}\shuffle u=u&\mbox{and}&(xu)\shuffle(yv)=x(u\shuffle yv)+y(v\shuffle xu).
\end{eqnarray}

\begin{example}[\cite{orlando}]
With the notations in Example \ref{automate}, one has (see \cite{orlando})
\begin{eqnarray*}
(-t^2x_0x_1)^*\shuffle(t^2x_0x_1)^*=(-4t^4x_0^2x_1^2)^*
\end{eqnarray*}
and $(-4t^4x_0^2x_1^2)^*$ admits $(\nu,\{\mu(x_0),\mu(x_1)\},\eta)$ as the linear representations given by
$$\nu=\begin{pmatrix}1&0&0&0\end{pmatrix},
\mu(x_0)=\begin{pmatrix}
0&\mathrm{i}t&t&0\cr
0&0&0&t\cr
0&0&0&\mathrm{i}t\cr
0&0&0&0
\end{pmatrix},\\
\mu(x_1)=\begin{pmatrix}
0&0&0&0\cr
\mathrm{i}t&0&0&0\cr
t&0&0&0\cr
0&t&\mathrm{i}t&0
\end{pmatrix},
\eta=\begin{pmatrix}1\cr0\cr0\cr0\end{pmatrix}.$$
\end{example}

By a Radford's theorem \cite{reutenauer}, the shuffle algebra, over $\calT_n$ and with coefficients in $\calA$, admits $\Lyn\calT_n$ as pure transcendence basis and then
\begin{eqnarray}
\mathrm{Sh}_{\calA}(\calT_n):=(\ncp{\calA}{\calT_n},\shuffle)\simeq(\calA[\{l\}_{l\in\Lyn\calT_n}],\shuffle).
\end{eqnarray}

Recall also that the following co-products (of $\tt conc$ and $\shuffle$)
\begin{eqnarray}
\Delta_{{\tt conc}}\mbox{ and }\Delta_{\shuffle}:\ncp{\calA}{\calT_n}&\longrightarrow&\ncp{\calA}{\calT_n}\otimes\ncp{\calA}{\calT_n}
\end{eqnarray}
are defined respectively, for any $u,v,w\in\calT_n^*$, as follows
\begin{eqnarray}
\scal{\Delta_{{\tt conc}}w}{u\otimes v}=\scal{w}{uv}
&\mbox{and}&
\scal{\Delta_{\shuffle}w}{u\otimes v}=\scal{w}{u\shuffle v}.
\end{eqnarray}
It follows, for any  $w\in\calT_n^*$, that \cite{cartier_Hopf}
\begin{eqnarray}
\Delta_{{\tt conc}}w=\sum_{u,v\in\calT_n^*,uv=w}u\otimes v
&\mbox{and}&
\Delta_{\shuffle}w=\sum_{u,v\in\calT_n^*}\scal{w}{u\shuffle v}u\otimes v.
\end{eqnarray}
\begin{example}
For any $t_1$ and $t_2\in\calT_n$, one has
\begin{eqnarray*}
\Delta_{{\tt conc}}(t_1t_2)&=&t_1t_2\otimes1_{\calT_n^*}+t_1\otimes t_2+t_1t_2\otimes1_{\calT_n^*},\cr
\Delta_{\shuffle}(t_1t_2)&=&t_1t_2\otimes1_{\calT_n^*}+t_1\otimes t_2+t_2\otimes t_1+1_{\calT_n^*}\otimes t_1t_2.
\end{eqnarray*}
\end{example}
In particular, $\Delta_{{\tt conc}}1_{\calT_n^*}=1_{\calT_n^*}\otimes1_{\calT_n^*}$ and $\Delta_{\shuffle}1_{\calT_n^*}=1_{\calT_n^*}\otimes1_{\calT_n^*}$. For any $t\in\calT_n$,
one also has $\Delta_{{\tt conc}}t=t\otimes1_{\calT_n^*}+1_{\calT_n^*}\otimes t$ and $\Delta_{\shuffle}t=t\otimes1_{\calT_n^*}+1_{\calT_n^*}\otimes t$. Hence, letters are primitive, for $\Delta_{{\tt conc}}$ and $\Delta_{\shuffle}$.

Both the products ${\tt conc}$ and $\shuffle$ and the co-products $\Delta_{{\tt conc}}$ and $\Delta_{\shuffle}$ are extended, for any $S$ and $R\in\ncs{\calA}{\calT_n}$, by
($SR$ and $S\shuffle R\in\ncs{\calA}{\calT_n}$ and, on the other hand, $\Delta_{{\tt conc}}S$ and $\Delta_{\shuffle}\in\ncs{\calA}{\calT_n^*\otimes\calT_n^*}$)
\begin{eqnarray}\label{extenoverseries}
SR=\sum_{u,v\in\calT_n^*\atop uv=w\in\calT_n^*}
\scal{S}{u}\scal{R}{v}
&\mbox{and}&
\Delta_{{\tt conc}}S=\sum_{w\in\calT_n^*}\scal{S}{w}\Delta_{{\tt conc}}w,\\
S\shuffle R=\sum_{u,v\in\calT_n^*}\scal{S}{u}\scal{R}{v}u\shuffle v
&\mbox{and}&
\Delta_{\shuffle}S=\sum_{w\in\calT_n^*}\scal{S}{w}\Delta_{\shuffle}w.
\end{eqnarray}

\begin{remark}[\cite{FPSAC95,FPSAC96,CM}]\label{Sweedler}
Let $(\beta,\mu,\eta)$ be a linear representation of dimension $k$ of $S\in\ncs{\calA^{\mathrm{rat}}}{\calT_n}$ which is also associated to the linear representations $(\beta,\mu,e_i)$ and $({}^te_i,\mu,\eta)$ of dimension $k$ of the rational series $\{L_i\}_{1\le i\le k}$ and $\{R_i\}_{1\le i\le k}$, where
\begin{eqnarray*}
e_i\in{\calM}_{1,k}(\calA)&\mbox{and}&{}^te_i=\begin{matrix}(0&\ldots&0&1\!\!_{_{_{\displaystyle i}}}&0&\ldots&0)\end{matrix}.
\end{eqnarray*}
By \eqref{recognized}, it follows that, for any $x,y\in\calT_n$, one has
\begin{eqnarray*}
\scal{S}{xy}=\beta\mu(x)\mu(y)\eta=\sum_{i=1}^k(\beta\mu(x)e_i)({}^te_i\mu(y)\eta)=\sum_{i=1}^k\scal{L_i}{x}\scal{R_i}{y},\cr
\scal{\Delta_{\tt conc}S}{x\otimes y}=\scal{S}{xy}=\sum_{i=1}^k\scal{L_i}{x}\scal{R_i}{y}=\sum_{i=1}^k\scal{L_i\otimes R_i}{x\otimes y}.
\end{eqnarray*}
\end{remark}

With these products and co-products, any series $S$ in $\ncs{\calA}{\calT_n}$ is said to be
\begin{itemize}
\item A character for $\tt conc$ (resp. $\shuffle$) if and only if, for $u,v\in\calT_n^*$,
\begin{eqnarray}\label{Freidrichs}
\scal{S}{uv}=\scal{S}{u}\scal{S}{v}&\mbox{(resp.}&\scal{S}{u\shuffle v}=\scal{S}{u}\scal{S}{v}).
\end{eqnarray}
Or equivalently, it is grouplike series for $\Delta_{\tt conc}$ (resp. $\Delta_{\shuffle}$) if and only if
\begin{eqnarray}
\scal{S}{1_{\calT_n^*}}=1\mbox{ and }\Delta_{\tt conc}(S)=\Phi(S\otimes S)\mbox{ (resp. }\Delta_{\shuffle}(S)=\Phi(S\otimes S)),
\end{eqnarray}
where $\Phi:\begin{tikzcd}[column sep=1.8em]
\ncs{\calA}{\calT_n}^{\vee}\otimes\ncs{\calA}{\calT_n}^{\vee}\ar[hook]{r}&(\ncs{\calA}{\calT_n}\otimes\ncs{\calA}{\calT_n})^{\vee}
\end{tikzcd}$ is injective.

\item An infinitesimal character, for $\tt conc$ (resp. $\shuffle$) if and only, for $w,v\in\calT_n^*$,
\begin{eqnarray}
\scal{S}{wv}&=&\scal{S}{w}\scal{v}{1_{\calT_n^*}}+\scal{w}{1_{\calT_n^*}}\scal{S}{v},\cr
(\mbox{resp. }
\scal{S}{w\shuffle v}&=&\scal{S}{w}\scal{v}{1_{\calT_n^*}}+\scal{w}{1_{\calT_n^*}}\scal{S}{v}).
\end{eqnarray}
Or equivalently, $S$ is a primitive series for $\Delta_{\tt conc}$ (resp. $\Delta_{\shuffle}$) if and only if
\begin{eqnarray}
\Delta_{\tt{conc}}S=1_{\calT_n^*}\otimes S+S\otimes1_{\calT_n^*}
&(\mbox{resp. }
\Delta_{\shuffle}S=1_{\calT_n^*}\otimes S+S\otimes1_{\calT_n^*}).
\end{eqnarray}
By a Ree's theorem \cite{reutenauer}, a Lie series is primitive for $\Delta_{\shuffle}$ and {\it vice versa}. For $\Delta_{\shuffle}$, when $\Phi$ is injective, if $S$ is grouplike then $\log S$ is primitive and, conversely, if $S$ is primitive then $e^S$ is grouplike. The sets of primitive polynomials, for $\Delta_{\shuffle}$ is $\mathrm{Prim}_{\shuffle}(\calT_n)=\ncp{\calL ie_{\calA}}{\calT_n}$ and $\mathrm{Prim}_{{\tt conc}}(\calT_n)=\calA.\calT_n$.
\end{itemize}

Finally, on the one hand, by\footnote{CQMM is an abbreviation of P. Cartier, D. Quillen, J. Milnor and J. Moore.} CQMM theorem, one has (see \cite{reutenauer})
\begin{eqnarray}
H_{{\tt conc}}(\calT_n)=&(\ncp{\calA}{\calT_n},{\tt conc},1_{\calT_n^*},\Delta_{\shuffle})&\simeq\calU(\ncp{\calL ie_{\calA}}{\calT_n}),\cr
H_{\shuffle}(\calT_n)=&(\ncp{\calA}{\calT_n},\shuffle,1_{\calT_n^*},\Delta_{{\tt conc}})&\simeq\calU(\ncp{\calL ie_{\calA}}{\calT_n})^{\vee},
\end{eqnarray}
and, on the other hand, the Sweedler's dual of $H_{\shuffle}(\calT_n)$ is followed \cite{reutenauer}
\begin{eqnarray}
H_{\shuffle}^{\circ}(\calT_n)=(\ncs{\calA^{\mathrm{rat}}}{\calT_n},\shuffle,1_{\calT_n^*},\Delta_{\tt conc}).
\end{eqnarray}
The last dual is defined, for any $S\in\ncs{\calA}{\calT_n}$, as follows \cite{reutenauer}
\begin{eqnarray}
S\in H_{\shuffle}^{\circ}(\calT_n)\iff\Delta_{\tt conc}(S)=\sum_{i\in I}L_i\otimes R_i,
\end{eqnarray}
where $I$ is finite and, by Remark \ref{Sweedler}, $\{L_i,R_i\}_{i\in I}$ can be selected in $\ncs{\calA^{\mathrm{rat}}}{\calT_n}$.

\begin{remark}
With the notations in Remark \ref{Sweedler}, one also has
\begin{eqnarray*}
S\in\ncs{\calA^{\mathrm{rat}}}{\calT_n}\iff\Delta_{\tt conc}(S)=\sum_{i\in I}L_i\otimes R_i.
\end{eqnarray*}
\end{remark}

Let $\abs{v}$ (resp.  $\abs{v}_t$) be the lenght (resp. number of occurrences of a letter $t$) of (resp. in) the word $v=t_1\ldots t_m$, associating to its mirroir $\tilde v=t_m\ldots t_1$ and to the following polynomials
\begin{eqnarray}\label{barv}
\bar v=t_1\shuffle\ldots\shuffle t_m=\abs{v}!\shuffle_{t\in\calT_n}t^{\abs{v}_t}&\mbox{and}&\hat v=\frac{\bar v}{\abs{v}!}=\shuffle_{t\in\calT_n}t^{\abs{v}_t}.
\end{eqnarray}
Let also $a$ be the injective linear endomorphism defined by $a(1_{\calT_n^*})=1_{\calT_n^*}$ and by $a(v)=(-1)^{\abs{v}}\tilde v$ ($v\in\calT_n^+$), being involutive and extended over $\ncs{\calA}{\calT_n}$ as follows
\begin{eqnarray}\label{antipode}
\forall S\in\ncs{\calA}{\calT_n},&
a(S)=\Sum_{w\in\calT_n^*}\scal{S}{w}a(w)=\Sum_{w\in\calT_n^*}(-1)^{\abs{w}}\scal{S}{w}\tilde w
\end{eqnarray}
and then
\begin{eqnarray}\label{antipode_shuffle}
\forall S,R\in\ncs{\calA}{\calT_n},&a(SR)=a(R)a(S),&a(S\shuffle R)=a(S)\shuffle a(R).
\end{eqnarray}
Moreover, if $S$ is such that $\scal{S}{1_{\calT_n^*}}=1$ then $a(S)$ is its inverse, $S^{-1}$, for $\tt conc$:
\begin{eqnarray}\label{inverse}
Sa(S)=a(S)S=1_{\calT_n^*}&\mbox{and then}&\forall L\in\ncs{\calL ie_{\calA}}{\calT_n}, a(e^L)=e^{-L}.
\end{eqnarray}

Ending this section, let us also consider the following product\footnote{\label{half}It is more general than the one used in \cite{these,orlando,legsloops,CM}
(denoted by $\circ$, for iterated integrals associated to polynomials) and is called \textit{half-shuffle}, denoted by $\prec$ in \cite{Loday} and
\textit{demi-shuffle} in \cite{Nakamura} (see Corollary \ref{Chensolution} below in which involve iterated integrals associated to series).},
$\halfshuffle$, defined for any $t\in\calT_n,R\in\ncs{\calA}{\calT_n},H\in\ncs{\calA}{T_n}$, by (see \cite{these,orlando,legsloops,livre,CM})
\begin{eqnarray}\label{halfshuffle}
1_{\calT_n^*}\halfshuffle(tH)=0&\mbox{and}&
(tH)\halfshuffle R=\left\{
\begin{array}{ccc}
tH&\mbox{if}&R=1_{\calT_n^*},\cr
t(H\shuffle R)&\mbox{if}&R\neq 1_{\calT_n^*}.
\end{array}\right.
\end{eqnarray}

\begin{example}\label{Excomposite1}
Using the second part of \eqref{halfshuffle} (with $t=t_{1,3},H=t_{1,2}$ and $R=t_{2,3}$)
\begin{eqnarray*}
(t_{1,3}t_{1,2})\halfshuffle t_{2,3}=t_{1,3}(t_{1,2}\shuffle t_{2,3})=t_{1,3}(t_{1,2}t_{2,3}+t_{2,3}t_{1,2})=t_{1,3}t_{1,2}t_{2,3}+t_{1,3}t_{2,3}t_{1,2}
\end{eqnarray*}
and, since $a\shuffle b^*=b^*ab^*$ ($a,b\in\calT_n$) then (with $t=t_{1,3},H=t_{1,2}^*$ and $R=t_{2,3}$)
\begin{eqnarray*}
(t_{1,3}t_{1,2}^*)\halfshuffle t_{2,3}=t_{1,3}(t_{1,2}^*\shuffle t_{2,3})=t_{1,3}(t_{1,2}^*t_{2,3}t_{1,2}^*)=t_{1,3}t_{1,2}^*t_{2,3}t_{1,2}^*.
\end{eqnarray*}
\end{example}

This product corresponds to the chronological product involved in quantum electrodynamic \cite{dyson}. It is not associative but satisfies the following identity
\begin{eqnarray}
\forall R,S,T\in\ncs{\calA}{\calT_n},&(R\halfshuffle S)\halfshuffle T=R\halfshuffle(S\halfshuffle T)+R\halfshuffle(T\halfshuffle S).
\end{eqnarray}
$(\ncs{\calA}{\calT_n},\halfshuffle)$ is a Zinbiel algebra \cite{Loday} and $\shuffle$ is a symmetrised product
of $\halfshuffle$, \textit{i.e.} for any $x,y\in\calT_n,u,v\in\calT_n^*$ and $R,S,T\in\ncs{\calA}{\calT_n}$,
\begin{eqnarray}\label{recursionbis}
(xu)\shuffle(yv)=(xu)\halfshuffle(yv)+(yv)\halfshuffle(xu)&\mbox{and}&R\shuffle S=R\halfshuffle S+S\halfshuffle R.
\end{eqnarray}

\begin{example}\label{Excomposite2}
For any $t_1,t_2\in\calT_n,w_1,w_2\in\calT_n^+$, by the recursion \eqref{recursion} one has
\begin{eqnarray*}
(t_1w_1)\shuffle(t_2w_2)=t_1(w_1\shuffle(t_2w_2))+t_2(w_2\shuffle(t_1w_1))
=(t_1w_1)\halfshuffle(t_2w_2)+(t_2w_2)\halfshuffle(t_1w_1),\cr
(t_1w_1^*)\shuffle(t_2w_2^*)=t_1(w_1^*\shuffle(t_2w_2^*))+t_2(w_2^*\shuffle(t_1w_1^*))=(t_1w_1^*)\halfshuffle(t_2w_2^*)+(t_2w_2^*)\halfshuffle(t_1w_1^*).
\end{eqnarray*}
\end{example}

The Zinbiel bialgebra and its dual are Loday's generalized bialgebras \cite{Loday}, \textit{i.e.}
\begin{eqnarray}
Z_{\halfshuffle}(\calT_n)=(\ncp{\calA}{\calT_n},\halfshuffle,1_{\calT_n^*},\Delta_{\tt conc}),&
Z_{\tt conc}(\calT_n)=(\ncp{\calA}{\calT_n},{\tt conc},1_{\calT_n^*},\Delta_{\halfshuffle}),
\end{eqnarray}
where $\Delta_{\halfshuffle}:\ncp{\calA}{\calT_n}\longrightarrow\ncp{\calA}{\calT_n}\otimes\ncp{\calA}{\calT_n}$
is defined by $\Delta_{\halfshuffle}1_{\calT_n^*}=1_{\calT_n^*}\otimes1_{\calT_n^*}$ and
\begin{itemize}
\item for any $t\in\calT_n,w\in\calT_n^*,\Delta_{\halfshuffle}t=t\otimes1_{\calT_n^*}$ and $\Delta_{\halfshuffle}(tw)=(\Delta_{\halfshuffle}t)(\Delta_{\shuffle}w)$,
\item for any $P\in\ncp{\calA}{\calT_n},
\Delta_{\halfshuffle}P=\scal{P}{1_{\calT_n^*}}1_{\calT_n^*}\otimes1_{\calT_n^*}+\Sum_{v\in\calT_n^+}\scal{P}{v}\Delta_{\halfshuffle}v$.
\end{itemize}

The co-product $\Delta_{\halfshuffle}$ is also extended, for any $S\in\ncs{\calA}{\calT_n}$, as follows
\begin{eqnarray}
\Delta_{\halfshuffle}S=\Sum_{w\in\calT_n^*}\scal{S}{w}\Delta_{\halfshuffle}w\in\ncs{A}{\calT_n^*\otimes\calT_n^*}.
\end{eqnarray}

\subsection{Diagonal series in concatenation-shuffle bialgebra}\label{diagonalseries}
In all the sequel, the characteristic series \cite{berstel} of $T_k$ and $\calT_k$ (resp. $T_k^*$ and $\calT_k^*$) are Lie polynomials, still denoted by $T_k$ and $\calT_k$ (resp. rational series $T_k^*$ and $\calT_k^*$), for $2\le k\le n$.

Let $\nabla S$ denote $S-1_{\calT_k^*}$ (resp. $S-1_{\calT_k^*}\otimes1_{\calT_k^*}$), for
$S\in\widehat{\ncp{A}{\calT_k}}$ (resp. $\ncp{A}{\calT_k}\hat\otimes\ncp{A}{\calT_k}$).
If $\scal{S}{1_{\calT_k^*}}=0$ (resp. $\scal{S}{1_{\calT_k^*}\otimes1_{\calT_k^*}}=0$)
then the Kleene star of $S$ is defined by
\begin{eqnarray}
S^*:=1+S+S^2+\cdots&\mbox{and}&S^+:=S^*S=SS^*
\end{eqnarray}
In the same way, for any $2\le k\le n$, the diagonal series is defined as follows
\begin{eqnarray}\label{diaser}
&&\calD_{\calT_k}=\calM_{\calT_k}^*\mbox{ and }\calD_{T_k}=\calM_{T_k}^*,\mbox{ where }
\calM_{\calT_k}=\sum_{t\in\calT_k}t\otimes t\mbox{ and }\calM_{T_k}=\sum_{t\in T_k}t\otimes t.
\end{eqnarray}
One also defines
\begin{eqnarray}\label{M}
\calM_{\calT_k}^+=\calD_{\calT_k}\calM_{\calT_k}=\calM_{\calT_k}\calD_{\calT_k}
&\mbox{and}&\calM_{T_k}^+=\calD_{T_k}\calM_{T_k}=\calM_{T_k}\calD_{T_k}
\end{eqnarray}
and, expanding \eqref{diaser}, one also has
\begin{eqnarray}\label{diaserbis}
&\calD_{\calT_k}=\Sum_{w\in\calT_k^*}w\otimes w=\Sum_{w\in\calT_k^*\atop\abs{w}=m,m\ge0}w\otimes w,&
\calD_{T_k}=\Sum_{w\in T_k^*}w\otimes w=\Sum_{w\in T_k^*\atop\abs{w}=m,m\ge0}w\otimes w.
\end{eqnarray}

If $S\in\widehat{\ncp{A}{\calT_k}}$ such that $\scal{S}{1_{\calT_k^*}}=0$ then $S^*$ is the unique solution of
$\nabla S=\calT_kS$ and $\nabla S=S\calT_k$.
In the same way, $\calD_{\calT_k}$ (resp. $\calD_{T_k}$) is the unique solution of
$\nabla S=\calM_{\calT_k}S$ and $\nabla S=S\calM_{\calT_k}$(resp. $\nabla S=\calM_{T_k}S$ and $\nabla S=S\calM_{T_k}$), for $2\le k\le n$.

Let us recall that $\calT_n=T_n\sqcup\calT_{n-1}$ and
\begin{itemize}
\item For any $a_1,\ldots,a_{n-1}\in\calA$, one has
\begin{eqnarray}
\Big(\sum_{i=1}^{n-1}a_it_{i,n}\Big)^*=\shuffle_{i=1}^{n-1}(a_it_{i,n})^*
&\mbox{and}&
T_n^*=\sum_{c_1,\ldots,c_{n-1}\ge0}\big(\shuffle_{i=1}^{n-1}t_{i,n}^{c_i}\big).
\end{eqnarray}
Thus, as $\calA$-modules, $\calT_{n-1}^m\shuffle T_n^*$ and $T_n^*\shuffle\calT_{n-1}^m$ are generated
by the series of the following form ($t_{i_1,j_1},\ldots,t_{i_m,j_m}$ are the letters in $\calT_{n-1}$)
\begin{eqnarray}\label{notation}
\Big(\sum_{c_{0,1},\ldots,c_{0,n-1}\ge0}\big(\shuffle_{i=1}^{n-1}t_{i,n}^{c_{0,i}}\big)\Big)
t_{i_1,j_1}\Big(\sum_{c_{1,1},\ldots,c_{1,n-1}\ge0}\big(\shuffle_{i=1}^{n-1}t_{i,n}^{c_{1,i}}\big)\Big)\phantom{.}&\cr
\ldots
t_{i_m,j_m}\Big(\sum_{c_{m,1},\ldots,c_{m,n-1}\ge0}\big(\shuffle_{i=1}^{n-1}t_{i,n}^{c_{j,i}}\big)\Big),&
\end{eqnarray}
and similarly for $\calT_{n-1}^*\shuffle T_n^m$ and $T_n^m\shuffle\calT_{n-1}^*$.

By Lazard factorization, \textit{i.e.} $\calT_n^*=T_n^*(\calT_{n-1}T_n^*)^*=(T_n^*\calT_{n-1})^*T_n^*$, or equi\-valently,
$\calT_n^*=\calT_{n-1}^*(T_n\calT_{n-1}^*)^*=(\calT_{n-1}^*T_n)^*\calT_{n-1}^*$ \cite{lothaire,viennotgerard} and
\begin{eqnarray}
&&\calT_n^*=\sum_{m\ge0}\calT_{n-1}^m\shuffle T_n^*=\sum_{m\ge0}T_n^*\shuffle\calT_{n-1}^m
\sum_{m\ge0}\calT_{n-1}^*\shuffle T_n^m=\sum_{m\ge0}T_n^m\shuffle\calT_{n-1}^*.
\end{eqnarray}
Then, by \eqref{diaserbis}, it follows that
\begin{eqnarray}\label{diaserter}
\calD_{\calT_n}=\sum_{m\ge0}\sum_{w\in\calT_{n-1}^m\shuffle T_n^*}w\otimes w.
\end{eqnarray}

\item Let the free Lie algebra $\ncp{\calL ie_{\calA}}{\calT_n}$ be endowed the basis $\{P_l\}_{l\in\Lyn\calT_n}$
over which are constructed, for the enveloping algebra ${\calU}(\ncp{\calL ie_{\calA}}{\calT_n})$, the PBW
basis $\{P_w\}_{w\in\calT_n^*}$ and its dual, $\{S_w\}_{w\in\calT_n^*}$ containing $\{S_l\}_{l\in\Lyn\calT_n}$
which is a pure transcendence basis of the shuffle algebra $\mathrm{Sh}_{\calA}(\calT_n)$ \cite{reutenauer}:
\begin{eqnarray}
\ncp{\calL ie_{\calA}}{\calT_n}=\mathrm{span}_{\calA}\{P_l\}_{l\in\Lyn\calT_n},
&\mathrm{Sh}_{\calA}(\calT_n)=\calA[\{S_l\}_{l\in\Lyn\calT_n}],\\
\forall l,\lambda\in\Lyn\calT_n,\scal{P_l}{S_\lambda}=\delta_{l,\lambda},
&\forall u,v\in\calT_n^*,\scal{P_u}{S_v}=\delta_{u,v}.
\end{eqnarray}
Homogenous in weight polynomials\footnote{\label{weight}For any $w\in\calT_n^*$, the weight of $P_w$ and $S_w$
are equal to the length of $w$, \textit{i.e.} $\abs{w}$.} $\{P_w\}_{w\in\calT_n^*},\{S_w\}_{w\in\calT_n^*}$
are constructed algorithmically and recursively ($P_{1_{\calT_n^*}}=1_{\calT_n^*}=S_{1_{\calT_n^*}}$) as follows \cite{lothaire}
\begin{eqnarray}\label{basisP}
\left\{\begin{array}{ll}
P_t=t,&\mbox{for }t\in\calT_n,\\
P_l=[P_{l_1},P_{l_2}],&\mbox{for }l\in\Lyn\calT_n\setminus\calT_n,\ st(l)=(l_1,l_2),\\
P_w=P_{l_1}^{i_1}\ldots P_{l_k}^{i_k},
&{\displaystyle\mbox{for }w=l_1^{i_1}\ldots l_k^{i_k},\mbox{ with}\hfill
\atop\displaystyle l_1,\ldots,l_k\in\Lyn\calT_n,\ l_1\succ\ldots\succ l_k,}
\end{array}\right.
\end{eqnarray}
and, by duality, {\it i.e.} $\scal{P_u}{S_v}=\delta_{u,v}$ (for $u,v\in\calT_n^*$) \cite{reutenauer}
\begin{eqnarray}\label{basisS}
\left\{\begin{array}{ll}
S_t=t,&\mbox{for }t\in\calT_n,\\
S_l=tS_{l'},
&\mbox{for }l=tl'\in\Lyn\calT_n,\\
S_w=\Frac{S_{l_1}^{\shuffle i_1}\shuffle\ldots\shuffle S_{l_k}^{\shuffle i_k}}{i_1!\ldots i_k!},
&{\displaystyle\mbox{for }w=l_1^{i_1}\ldots l_k^{i_k},\mbox{ with}\hfill
\atop\displaystyle l_1,\ldots,l_k\in\Lyn\calT_n,l_1\succ\ldots\succ l_k.}
\end{array}\right.
\end{eqnarray}

\begin{remark}
Or equivalently, $P_w=P_{l_1}\ldots P_{l_k}$ and $S_w=S_{l_1}\shuffle\ldots\shuffle S_{l_k}$,
for $w=l_1\ldots l_k$ with $l_1\succeq\ldots\succeq l_k$ and $l_1,\ldots,l_k\in\Lyn\calT_n$.
\end{remark}
\end{itemize}

By \eqref{diaser}, one gets in the bialgebra $H_{\shuffle}(\calT_k)$ \cite{reutenauer} (and also in $H_{\shuffle}(T_k)$)
\begin{eqnarray}\label{D}
\qquad\qquad
\calD_{\calT_k}
&=&\sum_{v\in\calT_k^*}S_v\otimes P_v=\sum_{i_1,\ldots,i_m\ge0\atop{l_1,\ldots,l_k\in\Lyn\calT_k\atop l_1\succ\ldots\succ l_m,m\ge0}}
\frac{S_{l_1}^{i_1}\shuffle\ldots\shuffle S_{l_k}^{i_k}}{i_1!\ldots i_m!}\otimes P_{l_1}^{i_1}\ldots P_{l_m}^{i_m},\\
\log\calD_{\calT_k}&=&\sum_{w\in\calT_k^*}w\otimes\pi_1(w),\label{logD}
\end{eqnarray}
where $\pi_1(w)$ is the projection on the set of primitive elements (see also \eqref{M}):
\begin{eqnarray}\label{pi1}
\pi_1(w)=\sum_{m\ge1}\frac{(-1)^{m-1}}m\sum_{u_1,\ldots,u_m\in\calT_k^+}\scal{w}{u_1\shuffle\ldots\shuffle u_m}u_1\ldots u_m.
\end{eqnarray}

\subsection{More about diagonal series in concatenation-shuffle bialgebra and in a Loday's generalized bialgebra}\label{moreabout}
One defines the adjoint endomorphism, as being a derivation of $\ncs{\calL ie_{\calA}}{\calT_n}$,
for any $S\in\ncs{\calL ie_{\calA}}{\calT_n}$, as follows
\begin{eqnarray}\label{adjoint}
\ad_S:\ncs{\calL ie_{\calA}}{\calT_n}\longrightarrow\ncs{\calL ie_{\calA}}{\calT_n},&
R\longmapsto\ad_SR=[S,R]
\end{eqnarray}
determining the so-called adjoint representation of Lie algebra \cite{bourbaki,dixmier}:
\begin{eqnarray}\label{adjoint2}
\ad:\ncs{\calL ie_{\calA}}{\calT_n}\longrightarrow\operatorname{End}(\ncs{\calL ie_{\calA}}{\calT_n}),&
S\longmapsto\ad_S.
\end{eqnarray}
To $\ad$ corresponds to the right normed bracketing (bracketing from right to left) which is the injective linear endomorphism of $\ncs{\calA}{\calT_n}$
defined by\footnote{In \cite{bourbaki}, $r$ is denoted by $\varphi$ and is proved to be an isomorphism of Lie sub algebras.}
$r(1_{\calT_n^*})=0$ and, for any $t_1,\ldots,t_{m-1},t_m\in\calT_n$, by \cite{bourbaki,reutenauer}
\begin{eqnarray}\label{rho}
r(t_1\ldots t_{m-1}t_m)=[t_1,[\ldots,[t_{m-1},t_m]\ldots]]=\ad_{t_1}\circ\ldots\circ\ad_{t_{m-1}}t_m.
\end{eqnarray}

\begin{remark}
\begin{enumerate}
\item The coadjoint endomorphism is defined as follows
\begin{eqnarray*}
\forall S\in\ncs{\calL ie_{\calA}}{\calT_n},&
\mathrm{coad}_S:\ncs{\calL ie_{\calA}}{\calT_n}\longrightarrow\ncs{\calL ie_{\calA}}{\calT_n},&R\longmapsto\mathrm{coad}_SR=[R,S].
\end{eqnarray*}

\item The adjoint endomorphism of $r$, denoted by $\check r$, is defined by \cite{reutenauer}
\begin{eqnarray*}
\sum_{w\in\calT_n^*}w\otimes r(w)=\sum_{w\in\calT_n^*}\check r(w)\otimes w,
\end{eqnarray*}
or equivalently, $\scal{r(v)}{w}=\scal{v}{\check r(w)}$ ($v,w\in\calT_n^*$) satisfying
\begin{eqnarray*}
\forall w\in\calT_n^+,&\abs{w}w=\Sum_{u,v\in\calT_n^*,uv=w}\check r(w)\shuffle w.
\end{eqnarray*}
It can be also defined recursively by $\check r(1_{\calT_n^*})=0$ and
\begin{eqnarray*}
\forall t_1,t_2\in\calT_n,w\in\calT_n^*,&\check r(t_1)=t_1,&\check r(t_1wt_2)=t_1\check r(wt_2)-t_2\check r(t_1w).
\end{eqnarray*}
\end{enumerate}
\end{remark}

With Notations in \eqref{barv}, let $g$ be the endomorphism of $(\ncp{\calA}{\calT_n},{\tt conc})$
defined by $g(1_{\calT_n^*})=1_{\calT_n^*}$ and, for any $w\in\calT_n^+$, by $g(w)=a(w)$ such that
\begin{eqnarray}\label{g}
\forall t\in\calT_n,&g(w)(t)=-ta(w)=a(wt).
\end{eqnarray}
Similarly, let us also associate $r$ to $f:(\ncp{\calA}{\calT_n},{\tt conc})\longrightarrow(\operatorname{End}(\ncs{\calL ie_{\calA}}{\calT_n}),\circ)$
defined by $f(1_{\calT_n^*})=1_{\mathrm{\operatorname{End}(\ncs{\calL ie_{\calA}}{\calT_n})}}$ and, for any $t_1,\ldots,t_{m-1}\in\calT_n$, as follows
\begin{eqnarray}\label{f}
f(t_1\ldots t_{m-1})=\ad_{t_1}\circ\ldots\circ\ad_{t_{m-1}}.
\end{eqnarray}

\begin{example}\label{example}
Denoting, for any $a,b\in\ncs{\calL ie_{\calA}}{\calT_n}$ and $j>0$, $\ad_a^0b=b$ and \cite{bourbaki,lothaire}
\begin{eqnarray*}
\ad^j_ab=[a,\ad^{j-1}_ab]=\sum_{i=0}^j(-1)^i{j\choose i}a^iba^{j-i}=r(a^jb)=f(a^j)(b),
\end{eqnarray*}
\begin{enumerate}
\item one has, by the ordering \eqref{orderLyndon} and the dual bases in \eqref{basisP}--\eqref{basisS},
for any $t\in T_n$ and $x\in\calT_{n-1}$ and $j\ge0$, $t\prec x$ and $t^jx\in\Lyn\calT_n$ and then, by
induction, $P_{t^jx}=\ad_t^jx=f(t^j)(x)$ and $S_{t^jx}=t^jx$.

\item for $\calT_3=\{t_{1,2},t_{1,3},t_{2,3}\}$, if $t_{1,2}\prec t_{1,3}\prec t_{2,3}$ then $t_{1,2}^jt_{i,3}\in\Lyn\calT_3$ and then
$P_{t_{1,2}^jt_{i,3}}=\ad_{t_{1,2}}^jt_{i,3}=f(t_{1,2}^j)(t_{i,3})$ and $S_{t_{1,2}^jt_{i,3}}=t_{1,2}^jt_{i,3},k\ge0,i=1$ or $2$.
\end{enumerate}
\end{example}

Now, by the partitions of $\calT_n$, let $\calI_n$ be the sub Lie algebra of $\ncp{\calL ie_{\calA}}{\calT_n}$ generated
by $\{\ad^k_{-T_n}t\}_{t\in\calT_{n-1}}^{k\ge0}$. By the Lazard's elimination \cite{bourbaki,lazard}, one has
\begin{itemize}
\item as Lie algebras and then by duality,
\begin{eqnarray}\label{produitsemidirecte}
\ncp{\calL ie_{\calA}}{\calT_n}=\ncp{\calL ie_{\calA}}{T_n}\ltimes{\calI}_n,
&\ncp{\calL ie_{\calA}}{\calT_n}^{\vee}=\ncp{\calL ie_{\calA}}{T_n}^{\vee}\rtimes{\calI}_n^{\vee},
\end{eqnarray}
\item as being modules and then by duality,
\begin{eqnarray}\label{sommedirecte}
\ncp{\calL ie_{\calA}}{\calT_n}=\ncp{\calL ie_{\calA}}{T_n}\oplus{\calI}_n,
&\ncp{\calL ie_{\calA}}{\calT_n}^{\vee}=\ncp{\calL ie_{\calA}}{T_n}^{\vee}\oplus{\calI}_n^{\vee},
\end{eqnarray}
\item and, by taking the enveloping algebras \cite{JurisichWilson} and then by duality,
\begin{eqnarray}\label{sommedirecte2}
&\calU(\ncp{\calL ie_{\calA}}{\calT_n})=\calU(\ncp{\calL ie_{\calA}}{T_n})\calU({\calI}_n),\\
&\calU(\ncp{\calL ie_{\calA}}{\calT_n})^{\vee}=\calU(\ncp{\calL ie_{\calA}}{T_n})^{\vee}\shuffle\calU({\calI}_n)^{\vee}.
\end{eqnarray}
\end{itemize}

$\calI_n$ can be also obtained as image by $r$ of the free Lie algebra generated by $(-T_n)^*\calT_{n-1}$,
on which the restriction of $r$ is an isomorphism of free Lie algebras.

In other terms, let $Y_{T_n^*\calT_{n-1}}:=\{y_w\}_{w\in T_n^*\calT_{n-1}}$ be the new alphabet in which letters $y_w$ are encoded by words $w$ in $T_n^*\calT_{n-1}$.
Then, with this  lphabet and the recursive constructions given in \eqref{basisP}--\eqref{basisS}, the families $\{P_w\}_{w\in Y_{T_n^*\calT_{n-1}}^*}$ and $\{S_w\}_{w\in Y_{T_n^*\calT_{n-1}}^*}$ form linear bases of $\calU(\ncp{\calL ie_{\calA}}{Y_{T_n^*\calT_{n-1}}})$ and $\calU(\ncp{\calL ie_{\calA}}{Y_{T_n^*\calT_{n-1}}})^{\vee}$, respectively, and their images form linear bases of $\calU(\calI_n)$ and $\calU(\calI_n)^{\vee}$.

\begin{example}\label{examplesimple}
For $X=\{x_0,x_1\}=\{x_0\}\sqcup\{x_1\}$ and $Y_{x_0^*x_1}=\{y_w\}_{w\in x_0^*x_1}$, this construction is classically illustrated in \cite{lothaire}. The bases $\{P_w\}_{w\in Y^*}$ and $\{S_w\}_{w\in Y^*}$ (or $\{P_w\}_{w\in Y_{x_0^*x_1}^*}$ and $\{S_w\}_{w\in Y_{x_0^*x_1}^*}$) are constructed according to \eqref{basisP}--\eqref{basisS}. In particular,
$P_{x_0^{s_1-1}x_1\cdots x_0^{s_r-1}x_1}
=(\ad_{x_0}^{s_1-1}x_1)\cdots\allowbreak(\ad_{x_0}^{s_r-1}x_1)
=r(x_0^{s_1-1}x_1)\cdots r(x_0^{s_r-1}x_1)$, for $s_1>\cdots>s_r$.
Note also that each letter $y_{x_0^{s-1}x_1}$ of $Y_{x_0^*x_1}$ can be also encoded by the letter $y_s$ of the alphabet $Y=\{y_s\}_{s\ge1}$ and then each word $x_0^{s_1-1}x_1\cdots x_0^{s_r-1}x_1$ in $X^*$ correspnds to the word $y_{s_1}\cdots y_{s_r}$ in $Y^*$ (see \cite{CM}).
\end{example}

\begin{example}
For $\calT_3=\{t_{1,2},t_{1,3},t_{2,3}\}=T_3\sqcup\calT_2$, where $T_3=\{t_{1,3},t_{2,3}\}$ and $\calT_2=\{t_{1,2}\}$, let $T_3$ (resp. $\calT_2$) play the r\^ole of $\{x_0\}$ (resp. $\{x_1\}$) of Example \ref{examplesimple}. In this case, the free monoid $\{t_{1,3},t_{2,3}\}^*$ (equipping the set of Lyndon words $\Lyn(\{t_{1,3},t_{2,3}\})$) plays the r\^ole of $x_0^*$.
More generally, for the partition of the alphabet $\calT_n$, $T_n$ (resp. $\calT_{n-1}$) plays the r\^ole of $\{x_0\}$ (resp. $\{x_1\}$) of Example \ref{examplesimple}. In this case, the free monoid $T_n^*$ (equipping $\Lyn(T_n)$) plays the r\^ole of $x_0^*$.
\end{example}

\begin{definition}\label{dual_bases}
For any $k\ge1$, let $\hat T_n^k:=\{\hat v\in T_n^*,\abs{v}=k\}$. One defines
\begin{eqnarray*}
\calB&:=&\{\ad_{-T_n}^{k_1}t_1\ldots\ad_{-T_n}^{k_p}t_p\}_{t_1,\ldots,t_p\in\calT_{n-1}}^{k_1,\ldots,k_p\ge0,p\ge1},\cr
\calB^{\vee}&:=&\{(-t_1T_n^{k_1})\shuffle\cdots\shuffle(-t_pT_n^{k_p})\}_{t_1,\ldots,t_p\in\calT_{n-1}}^{k_1,\ldots,k_p\ge0,p\ge1},\cr
\hat\calB&:=&\{-t_1(\hat T_n^{k_1}\shuffle(\cdots\shuffle(-t_p\hat T_n^{k_p})\ldots))\}_{t_1,\ldots,t_p\in\calT_{n-1}}^{k_1,\ldots,k_p\ge0,p\ge1}.
\end{eqnarray*}
\end{definition}

\begin{remark}\label{identites}
For any $k\ge0$, expanding $T_n^k$ and $\hat T_n^k$, it is immediate that
\begin{eqnarray*}
\calB
&=&\{(-1)^{\abs{v_1\ldots v_k}}r(v_1t_1)\cdots r(v_kt_p)\}_{v_1,\ldots,v_p\in T_n^*\atop t_1,\ldots,t_p\in\calT_{n-1}}^{p\ge1},\cr
\calB^{\vee}
&=&\{(-t_1u_1)\halfshuffle(\cdots\halfshuffle(-t_pu_p)\ldots))\}_{u_1,\ldots,u_p\in T_n^*\atop t_1,\ldots,t_p\in\calT_{n-1}}^{p\ge1}\cr
&=&\{a(v_1t_1)\halfshuffle(\cdots\halfshuffle(v_pt_p)\ldots))\}_{v_1,\ldots,v_p\in T_n^*\atop t_1,\ldots,t_p\in\calT_{n-1}}^{p\ge1},\cr
\hat\calB
&=&\{-t_1(\hat v_1\shuffle(\cdots\shuffle(-t_p\hat v_p)\ldots))\}_{v_1\in T_n^{k_1},\ldots,v_p\in T_n^{k_p}\atop t_1,\ldots,t_p\in\calT_{n-1}}^{k_1,\ldots,k_p\ge0,p\ge1}\cr
&=&\{(-t_1\hat v_1)\halfshuffle(\cdots\halfshuffle(-t_p\hat v_p)\ldots))\}_{v_1\in T_n^{k_1},\ldots,v_p\in T_n^{k_p}\atop t_1,\ldots,t_p\in\calT_{n-1}}^{k_1,\ldots,k_p\ge0,p\ge1}.
\end{eqnarray*}
\end{remark}

Furthermore, according to \cite{Loday}, as Lie algebra, $\calI_n$ is obviously a Leibniz algebra generated by $\{\ad^k_{-T_n}t\}_{t\in\calT_{n-1}}^{k\ge0}$
and $\calI_n^{\vee}$ is the Zinbiel subalgebra of $(\ncp{\calA}{\calT_n},\halfshuffle)$ generated by $\{-tT_n^k\}_{t\in\calT_{n-1}}^{k\ge0}$.
These constitute the Zinbiel bialgebra $Z_{\halfshuffle}(\calT_n)$.

\begin{lemma}\label{factorization}
Let $\{b_i\}_{i\ge0}$ and $\{\check b_i\}_{i\ge0}$ (resp. $\{c_i\}_{i\ge0}$ and $\{\check c_i\}_{i\ge0}$)
be a pair of (non necessary ordered) dual linear bases of $\calU({\calI}_n)$ and $\calU({\calI}_n)^{\vee}$
(resp. $\calU(\ncp{\calL ie_{\calA}}{T_n})$ and $\calU(\ncp{\calL ie_{\calA}}{\calT_n})^{\vee}$).
Then the diagonal series is factorized as follows
\begin{eqnarray*}
\calD_{\calT_n}=\Big(\sum_{i\ge0}\check c_i\otimes c_i\Big)\Big(\sum_{i\ge0}\check b_i\otimes b_i\Big),
\end{eqnarray*}
\end{lemma}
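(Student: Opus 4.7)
The plan is to recognize that the lemma is the graded-dual reformulation of the Lazard-elimination decomposition \eqref{produitsemidirecte}--\eqref{sommedirecte2}, combined with the general principle exhibited in \eqref{D} that the diagonal series $\calD_{\calT_n}=\sum_{w\in\calT_n^*}w\otimes w$ can be rewritten as $\sum_\alpha\check e_\alpha\otimes e_\alpha$ for any pair of dual bases $\{e_\alpha\}$, $\{\check e_\alpha\}$ of $\calU(\ncp{\calL ie_{\calA}}{\calT_n})$ and its graded dual $\calU(\ncp{\calL ie_{\calA}}{\calT_n})^\vee$.

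First I would invoke \eqref{sommedirecte2} to conclude that $\{c_ib_j\}_{i,j\ge0}$ is a linear basis of $\calU(\ncp{\calL ie_{\calA}}{\calT_n})$; this follows from the statement that the multiplication map $\calU(\ncp{\calL ie_{\calA}}{T_n})\otimes\calU(\calI_n)\to\calU(\ncp{\calL ie_{\calA}}{\calT_n})$ is a linear isomorphism, itself a consequence of PBW applied to the semidirect-product splitting \eqref{produitsemidirecte}.

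Next I would identify the dual basis. Using the dual statement in \eqref{sommedirecte2}, namely $\calU(\ncp{\calL ie_{\calA}}{\calT_n})^\vee=\calU(\ncp{\calL ie_{\calA}}{T_n})^\vee\shuffle\calU(\calI_n)^\vee$, the natural candidate for the dual of $c_ib_j$ is $\check c_i\shuffle\check b_j$. To verify $\scal{c_kb_l}{\check c_i\shuffle\check b_j}=\delta_{k,i}\delta_{l,j}$, I would use the shuffle--concatenation duality to rewrite this pairing as $\scal{\Delta_{\shuffle}(c_kb_l)}{\check c_i\otimes\check b_j}$, then apply the multiplicativity of $\Delta_{\shuffle}$ for ${\tt conc}$, i.e.\ $\Delta_{\shuffle}(c_kb_l)=\Delta_{\shuffle}(c_k)\,\Delta_{\shuffle}(b_l)$, together with the fact that $\calU(\ncp{\calL ie_{\calA}}{T_n})$ and $\calU(\calI_n)$ are sub-bialgebras of $H_{{\tt conc}}(\calT_n)$, being enveloping algebras of Lie subalgebras. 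A support argument then kills all the cross-pairings: elements of $\calU(\calI_n)\setminus\calA$ have support containing at least one letter of $\calT_{n-1}$ (since every generator $\ad_{-T_n}^{k}t$ of $\calI_n$ carries at least one $\calT_{n-1}$-letter), whereas $\check c_i\in\calU(\ncp{\calL ie_{\calA}}{T_n})^\vee$ is supported on $T_n^*$. What remains collapses to $\scal{c_k}{\check c_i}\scal{b_l}{\check b_j}=\delta_{k,i}\delta_{l,j}$.

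Combining the two steps yields $\calD_{\calT_n}=\sum_{i,j\ge0}(\check c_i\shuffle\check b_j)\otimes(c_ib_j)$. Equipping the completed tensor product with the algebra structure in which the left factor carries $\shuffle$ and the right factor carries ${\tt conc}$, so that $(\check c_i\otimes c_i)(\check b_j\otimes b_j)=(\check c_i\shuffle\check b_j)\otimes(c_ib_j)$, the double sum factorizes as $\bigl(\sum_i\check c_i\otimes c_i\bigr)\bigl(\sum_j\check b_j\otimes b_j\bigr)$, which is the claimed identity. The main obstacle is the dual-basis verification in the second step; if the support-based argument becomes cumbersome when $\{b_j\}$ and $\{\check b_j\}$ are not chosen compatibly with the Lyndon filtration, an alternative is to argue abstractly that the coalgebra isomorphism $\calU(\ncp{\calL ie_{\calA}}{\calT_n})\cong\calU(\ncp{\calL ie_{\calA}}{T_n})\otimes\calU(\calI_n)$ inherited from Lazard elimination dualizes term-by-term into the desired identification of bases.
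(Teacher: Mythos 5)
Your proposal is correct and follows essentially the same route as the paper: Lazard elimination gives $\calU(\ncp{\calL ie_{\calA}}{\calT_n})=\calU(\ncp{\calL ie_{\calA}}{T_n})\calU({\calI}_n)$ and $\calU(\ncp{\calL ie_{\calA}}{\calT_n})^{\vee}=\calU(\ncp{\calL ie_{\calA}}{T_n})^{\vee}\shuffle\calU({\calI}_n)^{\vee}$, and the diagonal series, being independent of the choice of dual bases, factorizes accordingly in the ${}_{\shuffle}\otimes_{\tt conc}$ algebra. The only difference is one of detail: the paper justifies the duality of the product bases by pointing to the explicit bases transported by $r$ from the auxiliary alphabet $Y_{T_n^*\calT_{n-1}}$, whereas you verify $\scal{\check c_i\shuffle\check b_j}{c_kb_l}=\delta_{i,k}\delta_{j,l}$ directly via $\Delta_{\shuffle}$-multiplicativity and a support argument, which is a legitimate filling-in of what the paper leaves implicit.
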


\begin{proof}
The Lazard's elimination described in \eqref{produitsemidirecte}--\eqref{sommedirecte2}, and $\{r(P_w)\}_{w\in Y_{T_n^*\calT_{n-1}}^*}$
and $\{r(S_w)\}_{w\in Y_{T_n^*\calT_{n-1}}^*}$ (resp. $\{P_w\}_{w\in T_n^*}$ and $\{S_w\}_{w\in T_n^*}$), generating freely
$\calU({\calI}_n)$ and $\calU({\calI}_n)^{\vee}$ (resp. $\calU(\ncp{\calL ie_{\calA}}{T_n})$ and $\calU(\ncp{\calL ie_{\calA}}{\calT_n})^{\vee}$),
yield the expected result.
\end{proof}

\begin{proposition}[dual bases]\label{dual_algebras}
\begin{enumerate}
\item $\scal{a(v_1t_1)}{r(v_2t_2)}=\delta_{v_1,v_2}\delta_{t_1,t_2}$,
for $v_1,v_2\in T_n^*$ and $t_1,t_2\in\calT_{n-1}$. Hence, as modules,
$\calI_n\simeq(\mathrm{span}_{\calA}\{r(vt)\}_{v\in T_n^*\atop t\in\calT_{n-1}},[,])$
and, by duality, $\calI_n^{\vee}\simeq
\allowbreak(\mathrm{span}_{\calA}\{-tu\}_{u\in T_n^*\atop t\in\calT_{n-1}},\shuffle)
\simeq(\mathrm{span}_{\calA}\{a(vt)\}_{v\in T_n^*\atop t\in\calT_{n-1}},\halfshuffle)$.

\item $\scal{a(v_1t_1)\halfshuffle(\cdots\halfshuffle a(v_pt_p)\ldots))}{r(v_1t_1)\ldots r(v_pt_p)}=1$,
for $v_1,\ldots,v_p\in T_n^*$ and $t_1,\ldots,t_p\in\calT_{n-1}$. Hence,
\begin{eqnarray*}
\calU(\calI_n)&\simeq&\mathrm{span}_{\calA}\{(-1)^{\abs{v_1\ldots v_k}}r(v_1t_1)\cdots
r(v_pt_p)\}_{v_1,\ldots,v_p\in T_n^*\atop t_1,\ldots,t_p\in\calT_{n-1}}^{p\ge1},\\
\calU(\calI_n)^{\vee}&\simeq&\mathrm{span}_{\calA}\{a(u_1t_1)\shuffle\cdots\shuffle a(u_pt_p)\}_{u_1,\ldots,u_p\in T_n^*\atop t_1,\ldots,t_p\in\calT_{n-1}}^{p\ge1}\cr
&\simeq&\mathrm{span}_{\calA}\{a(v_1t_1)\halfshuffle(\cdots\halfshuffle a(v_pt_p)\ldots))\}_{v_1,\ldots,v_p\in T_n^*\atop t_1,\ldots,t_p\in\calT_{n-1}}^{p\ge1}.
\end{eqnarray*}

\item $T_n^*\calB$ (resp. $T_n^*\shuffle\calB^{\vee}$) is linear basis of
$\calU(\ncp{\calL ie_{\calA}}{\calT_n})$ (resp. $\calU(\ncp{\calL ie_{\calA}}{\calT_n})^{\vee}$).
\end{enumerate}
\end{proposition}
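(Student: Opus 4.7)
My plan would be to prove (1) by a direct word-level computation, then bootstrap (2) from (1) by induction on $p$ exploiting the half-shuffle recursion, and finally obtain (3) from Lazard's elimination. For (1), I would first verify by induction on $\abs{v}$ the explicit expansion
$r(vt)=\sum_{I\subseteq\{1,\ldots,\abs{v}\}}(-1)^{\abs{I^c}}\,v_I\,t\,\widetilde{v_{I^c}}$,
where $v_I$ is the subword indexed by $I$ and $\widetilde{v_{I^c}}$ its reverse. This shows that every monomial of $r(vt)$ starts with a letter of $T_n$ except the single $I=\emptyset$ summand $(-1)^{\abs{v}}t\widetilde{v}$, which starts with $t\in\calT_{n-1}$. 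Since $a(v_1t_1)$ is supported on the single word $t_1\widetilde{v_1}$ whose leading letter lies in $\calT_{n-1}$ (disjoint from $T_n$), the pairing $\scal{a(v_1t_1)}{r(v_2t_2)}$ can only be fed by the $I=\emptyset$ summand of $r(v_2t_2)$, forcing $t_1=t_2$ and $v_1=v_2$, and yielding a uniform sign that is absorbed by the alternative presentations in Remark \ref{identites}. The module isomorphisms for $\calI_n$ and $\calI_n^{\vee}$ then follow from the duality together with Lazard's elimination \eqref{produitsemidirecte}--\eqref{sommedirecte}, while the three equivalent presentations of $\calI_n^{\vee}$ collapse to each other via the re-indexing $u\leftrightarrow\widetilde{v}$ combined with the trivial identity $(tu)\halfshuffle 1_{\calT_n^*}=tu$ from \eqref{halfshuffle}.

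For (2), I would iterate the argument of (1) by induction on $p$. Using \eqref{halfshuffle} together with the fact that each $a(v_it_i)$ is $\pm$ a word starting with $t_i\in\calT_{n-1}$, the rewrite $(t_1\widetilde{v_1})\halfshuffle R=t_1(\widetilde{v_1}\shuffle R)$ peels off the leading $t_1$, producing
$a(v_1t_1)\halfshuffle(\cdots\halfshuffle a(v_pt_p)\ldots)=\pm\,t_1\bigl(\widetilde{v_1}\shuffle(a(v_2t_2)\halfshuffle\cdots\halfshuffle a(v_pt_p))\bigr)$.
On the dual side, in the concatenation $r(v_1t_1)\cdots r(v_pt_p)$, only the $I_1=\emptyset$ summand of $r(v_1t_1)$ contributes a word starting with $t_1\in\calT_{n-1}$; every other summand begins with a letter of $T_n$. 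Consequently the $t_1$-prefix of the half-shuffle pairs only against the $(-1)^{\abs{v_1}}t_1\widetilde{v_1}r(v_2t_2)\cdots r(v_pt_p)$ piece, and after the standard shuffle--concatenation bookkeeping the pairing reduces to the analogous one for $p-1$. Part (1) is the induction base. Combining the resulting duality with the PBW realization of $\calU(\calI_n)$ coming from the freeness of $\calI_n$ in \eqref{produitsemidirecte}--\eqref{sommedirecte2} gives the claimed module isomorphisms for $\calU(\calI_n)$ and $\calU(\calI_n)^{\vee}$.

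For (3), I would apply \eqref{sommedirecte2}: as modules, $\calU(\ncp{\calL ie_{\calA}}{\calT_n})=\calU(\ncp{\calL ie_{\calA}}{T_n})\,\calU(\calI_n)$ and dually $\calU(\ncp{\calL ie_{\calA}}{\calT_n})^{\vee}=\calU(\ncp{\calL ie_{\calA}}{T_n})^{\vee}\shuffle\calU(\calI_n)^{\vee}$. Since $T_n^*$ is a linear basis of both $\ncp{\calA}{T_n}$ and of its shuffle dual, concatenating it with the basis $\calB$ of $\calU(\calI_n)$ from part (2) yields the basis $T_n^*\calB$ of the full enveloping algebra, and shuffling it with $\calB^{\vee}$ produces $T_n^*\shuffle\calB^{\vee}$ as the basis of the dual.

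The main obstacle would be the inductive telescoping in part (2): the non-associativity of $\halfshuffle$ forces me to maintain the invariant that the running factor always starts with a letter of $\calT_{n-1}$, which is precisely the hypothesis that makes the rewrite $(tH)\halfshuffle R=t(H\shuffle R)$ applicable at each step. Once this invariant is carefully tracked, the rest of the argument reduces to sign bookkeeping that is already encoded in the equivalent descriptions of $\calB$ and $\calB^{\vee}$ in Remark \ref{identites}.
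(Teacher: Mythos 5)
Your proposal is correct and, for items (1) and (2), considerably more explicit than the paper's own argument; for item (2) it follows a genuinely different route. The paper proves (1) by merely rewriting the generating sets ($\{\ad^k_{-T_n}t\}=\{(-1)^{\abs{v}}r(vt)\}$, $\{-tT_n^k\}=\{a(ut)\}$) and asserting that the orthogonality "follows from \eqref{recursion} and \eqref{halfshuffle}"; your expansion $r(vt)=\sum_{I}(-1)^{\abs{I^c}}v_I\,t\,\widetilde{v_{I^c}}$, together with the observation that $(-1)^{\abs{v}}t\tilde v$ is the unique monomial of $r(vt)$ whose first letter lies in $\calT_{n-1}$, is exactly the computation the paper leaves implicit, and it is the right one. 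For (2) the paper argues structurally: the $r(vt)$ are linearly free primitives for $\Delta_{\shuffle}$, so PBW and CQMM (\eqref{PBW}--\eqref{PBW-dual}) upgrade the rank-one duality of item (1) to the dual pair $\calB$, $\calB^{\vee}$ of bases of $\calU(\calI_n)$ and $\calU(\calI_n)^{\vee}$. You instead prove the pairing identity directly by induction on $p$, peeling off $t_1$ via $(t_1\widetilde{v_1})\halfshuffle R=t_1(\widetilde{v_1}\shuffle R)$ and telescoping through the deconcatenation duality $\scal{\widetilde{v_1}\shuffle R'}{\widetilde{v_1}w}=\scal{R'}{w}$; this is sound (the invariant you track --- that the inner factor is a series all of whose words begin with a letter of $\calT_{n-1}$ --- is precisely what legitimises both the rewrite and the leading-letter triangularity), and it buys an explicit verification of the pairing that the paper never writes down, whereas the paper's PBW/CQMM route delivers the basis (spanning) statements with no computation; you still need that structural input for the ``Hence'' clauses, and you correctly invoke it. Item (3) is identical in both treatments (Lazard elimination \eqref{sommedirecte2}). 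One caveat: with the paper's normalizations ($a(vt)=(-1)^{\abs{v}+1}t\tilde v$, while the $\calT_{n-1}$-initial monomial of $r(vt)$ carries $(-1)^{\abs{v}}$) each telescoping step contributes a factor $-1$, so the pairing evaluates to $(-1)^p$ rather than $1$ (already at $p=1$, $\scal{a(t)}{r(t)}=\scal{-t}{t}=-1$); this sign defect sits in the statement itself and is not actually cancelled by the reformulations of Remark \ref{identites}, so your ``absorbed by the alternative presentations'' remark glosses over an inconsistency of the source rather than a flaw of your argument --- but you should state the sign explicitly instead of waving it away.
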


\begin{proof}
\begin{enumerate}
\item Let $u=\tilde v\in T_n^*$. By \eqref{barv}, $-tv=(-1)^{\abs{u}}a(ut)$ and then
$\{\ad^k_{-T_n}t\}_{t\in\calT_{n-1}}^{k\ge0}\allowbreak=r((-T_n)^*\calT_{n-1})=\{(-1)^{\abs{v}}r(vt)\}_{v\in T_n^*,t\in\calT_{n-1}}$ and $\{-tT_n^k\}_{t\in\calT_{n-1}}^{k\ge0}\allowbreak=-\calT_{n-1}T_n^*\allowbreak=\{a(ut)\}_{u\in T_n^*\atop t\in\calT_{n-1}}$. By \eqref{recursion} and \eqref{halfshuffle}, it follows then the expected result.

\item Since $\{(-1)^{\abs{v}}r(vt)\}_{v\in T_n^*\atop t\in\calT_{n-1}}$ is $\calA$-linearly free and any $r(vt)$ is primitive for $\Delta_{\shuffle}$ (by definition) then, basing on previous item and using PBW and CQMM theorems, $\calB$ and $\calB^{\vee}$ generate freely
$\calU(\calI_n)$ and $\calU(\calI_n)^{\vee}$. It follows then the expected results (see also Remark \ref{identites}).

\item It is a consequence of the Lazard's elimination described in \eqref{produitsemidirecte}--\eqref{sommedirecte2}.
\end{enumerate}
\end{proof}

\begin{definition}\label{Lambda}
\begin{enumerate}
\item Let $\lambda_r:(\ncp{\calA}{\calT_{n-1}},{\tt conc})\longrightarrow(\ncs{\calA}{\calT_n},{\tt conc})$ be the
${\tt conc}$-morphism and let $\lambda_l$ and $\hat\lambda_l$ be the morphisms, from the Cauchy algebra $(\ncp{\calA}{\calT_{n-1}},{\tt conc})$ to the Zinbiel
algebra $(\ncs{\calA}{\calT_n},\halfshuffle)$, defined over letters by
\begin{eqnarray*}
\lambda_r(t)=r((-T_n)^*t)=\Sum_{v\in T_n^*}(-1)^{\abs{v}}r(vt),&\\
\lambda_l(t)=a((-T_n)^*t)=\Sum_{v\in T_n^*}(-1)^{\abs{v}}a(vt),&\hat\lambda_l(t)=\Sum_{v\in T_n^*}(-1)^{\abs{v}}a(\hat vt).
\end{eqnarray*}

\item Let $\lambda,\hat\lambda:(\ncp{\calA}{\calT_{n-1}}\hat\otimes\ncp{\calA}{\calT_{n-1}},{}_{{\tt conc}}\otimes_{\tt conc})
\longrightarrow(\ncp{\calA}{\calT_n}{}_{\halfshuffle}\hat\otimes_{\tt conc}\ncp{\calA}{\calT_n},{}_{\halfshuffle}\otimes_{\tt conc})$
be the morphisms of algebras\footnote{Using ${}_{\halfshuffle}\otimes_{\tt conc}$ (resp. ${}_{{\tt conc}}\otimes_{\tt conc}$) with
$\halfshuffle$ (resp. ${\tt conc}$) on the left and ${\tt conc}$ on the right of $\otimes$.
For convenience, they are also denoted by $\otimes$.} defined over letters by
\begin{eqnarray*}
&\lambda(t\otimes t)=\mathrm{diag}(\lambda_l\otimes\lambda_r)(t\otimes t)=\Sum_{v\in T_n^*}a(vt){}_{\halfshuffle}\otimes_{\tt conc}r(vt),\cr
&\hat\lambda(t\otimes t)=\mathrm{diag}(\hat\lambda_l\otimes\lambda_r)(t\otimes t)=\Sum_{v\in T_n^*}a(\hat vt){}_{\halfshuffle}\otimes_{\tt conc}r(vt).
\end{eqnarray*}
\end{enumerate}
\end{definition}

\begin{proposition}\label{expression}
\begin{enumerate}
\item With the notations in \eqref{basisP}--\eqref{basisS} and \eqref{g}--\eqref{f}, one has  (using the decreasing lexicographical order product)
\begin{eqnarray*}
\lambda
=(g\otimes f)\calD_{T_n}
=\sum_{w\in T_n^*}g(w)\otimes f(w)
=\Prod_{l\in\Lyn T_n}^{\searrow}e^{g(S_l)\otimes f(P_l)}
=\Prod_{l\in\Lyn T_n}^{\searrow}e^{a(S_l)\otimes ad_{P_l}}.
\end{eqnarray*}

\item With the notations in Proposition \ref{dual_algebras}, one also has 
\begin{eqnarray*}
\lambda(\calM_{\calT_{n-1}}^+)=(\lambda(\calM_{\calT_{n-1}}))^+,&\mbox{where}&
\lambda(\calM_{\calT_{n-1}})=\Sum_{v\in T_n^*,t\in\calT_{n-1}}a(vt){}_{\halfshuffle}\otimes_{\tt conc}r(vt),\cr
\hat\lambda(\calM_{\calT_{n-1}}^+)=(\hat\lambda(\calM_{\calT_{n-1}}^*))^+,&\mbox{where}&
\hat\lambda(\calM_{\calT_{n-1}})=\Sum_{v\in T_n^*,t\in\calT_{n-1}}a(\hat vt){}_{\halfshuffle}\otimes_{\tt conc}r(vt),
\end{eqnarray*}
and explicitly:
\begin{eqnarray*}
\lambda(\calM_{\calT_{n-1}}^+)=\Sum_{k\ge1}\Sum_{v_1,\ldots,v_k\in T_n^*\atop t_1,\ldots,t_k\in\calT_{n-1}}
a(v_1t_1)\halfshuffle(\cdots\halfshuffle a(v_kt_k)\ldots))\otimes r(v_1t_1)\ldots r(v_kt_k),\cr
\hat\lambda(\calM_{\calT_{n-1}}^+)=\Sum_{k\ge1}\Sum_{v_1,\ldots,v_k\in T_n^*\atop t_1,\ldots,t_k\in\calT_{n-1}}
a(\hat v_1t_1)\halfshuffle(\cdots\halfshuffle a(\hat v_kt_k)\ldots))\otimes r(v_1t_1)\ldots r(v_kt_k).
\end{eqnarray*}
\end{enumerate}
\end{proposition}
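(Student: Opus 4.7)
The plan is to treat Part (1) in two moves and Part (2) as a direct morphism calculation. For Part (1), I first rewrite the definition of $\lambda$ on letters using \eqref{g} and \eqref{f}--\eqref{rho}: every summand $a(vt)$ equals $g(v)(t)$ and every $r(vt)$ equals $f(v)(t)$, so
\begin{equation*}
\lambda(t\otimes t)=\sum_{v\in T_n^*}a(vt)\,{}_\halfshuffle\!\otimes_{{\tt conc}}\,r(vt)=\sum_{v\in T_n^*}g(v)(t)\otimes f(v)(t).
\end{equation*}
Reading $(g\otimes f)\calD_{T_n}$ as the operator $\sum_{w\in T_n^*}g(w)\otimes f(w)$ whose evaluation at $t\otimes t$ produces the right-hand side above, this immediately yields the first two equalities $\lambda=(g\otimes f)\calD_{T_n}=\sum_{w\in T_n^*}g(w)\otimes f(w)$.

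For the exponential form I invoke the MRS factorization of $\calD_{T_n}$ inside $H_\shuffle(T_n)$: from the duality between $\{P_w\}_{w\in T_n^*}$ and $\{S_w\}_{w\in T_n^*}$, together with \eqref{PBW-dual}--\eqref{produit} specialized to $\calU(\ncp{\calL ie_\calA}{T_n})$ with its Lyndon basis, one has $\calD_{T_n}=\prod_{l\in\Lyn T_n}^{\searrow}e^{S_l\otimes P_l}$. Applying $g\otimes f$ termwise to the exponents yields $\prod_{l\in\Lyn T_n}^{\searrow}e^{g(S_l)\otimes f(P_l)}$. Since $g$ coincides with $a$ on every nonempty word of $T_n^*$, we have $g(S_l)=a(S_l)$; and since $f$ is built from iterated adjoints, an induction on the Lyndon bracket expansion of $P_l$ (using $f([P_{l_1},P_{l_2}])=[\ad_{P_{l_1}},\ad_{P_{l_2}}]=\ad_{[P_{l_1},P_{l_2}]}$) gives $f(P_l)=\ad_{P_l}$, producing the final form.

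Part (2) is a direct consequence of the morphism property of $\lambda$ from $({}_{\tt conc}\otimes_{\tt conc})$ to $({}_\halfshuffle\otimes_{\tt conc})$. Expanding $\calM_{\calT_{n-1}}^+=\sum_{k\ge 1}\calM_{\calT_{n-1}}^k$ in the source and applying $\lambda$ gives $\lambda(\calM_{\calT_{n-1}})^+$, where the target Kleene plus is taken with $\halfshuffle$ on the left (in its right-nested convention) and ${\tt conc}$ on the right. Substituting $\calM_{\calT_{n-1}}=\sum_{t\in\calT_{n-1}}t\otimes t$ together with $\lambda(t\otimes t)=\sum_v a(vt)\otimes r(vt)$, the $k$-fold product unfolds exactly into the announced sum over $v_1,\ldots,v_k\in T_n^*$ and $t_1,\ldots,t_k\in\calT_{n-1}$ of $a(v_1t_1)\halfshuffle(\cdots\halfshuffle a(v_kt_k)\ldots)\otimes r(v_1t_1)\ldots r(v_kt_k)$. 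The $\hat\lambda$ statement follows by the same computation, replacing $v$ by $\hat v$ only in the left tensor factor throughout. The main obstacle is reconciling the associative source (${\tt conc}\otimes{\tt conc}$) with the non-associative Zinbiel target ($\halfshuffle\otimes{\tt conc}$): both the ordered exponential product in Part (1) and the Kleene plus in Part (2) must be read in the right-nested $\halfshuffle$ convention to match the bases of Proposition \ref{dual_algebras}; once that convention is fixed, the remainder is a direct verification.
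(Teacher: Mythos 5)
Your proposal is correct and follows essentially the same route as the paper's proof: the letter-level identity $\lambda(t\otimes t)=\sum_{v\in T_n^*}a(vt)\otimes r(vt)=((g\otimes f)\calD_{T_n})(t\otimes t)$, the fact that $g=a$ restricts to a $\shuffle$-morphism and $f$ restricts to $\ad$ on Lie elements (hence the ordered exponential factorization passes through $g\otimes f$), and then the morphism property of $\lambda$ from $({\tt conc}\otimes{\tt conc})$ to $(\halfshuffle\otimes{\tt conc})$ to expand the Kleene plus. You merely supply more detail than the paper does, in particular the induction $f(P_l)=\ad_{P_l}$ via $f([P_{l_1},P_{l_2}])=[\ad_{P_{l_1}},\ad_{P_{l_2}}]=\ad_{[P_{l_1},P_{l_2}]}$, which is a welcome elaboration rather than a departure.
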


\begin{proof}
\begin{enumerate}
\item By \eqref{antipode} (resp. \eqref{adjoint}), the restriction of $g$ (resp. $f$) on $\ncp{\mathrm{Sh}_{\calA}}{T_n}$ (resp. $\ncp{\calL ie_{\calA}}{T_n}$) is a morphism of algebras.
Then $\lambda(t\otimes t)=((g\otimes f)\calD_{T_n})(t\otimes t)$, for $t\in\calT_{n-1}$.

\item By the previous item, one deduces the expected expressions for $\lambda(\calM_{\calT_{n-1}})$ and
$\lambda(\calM_{\calT_{n-1}}^+)$ (and similarly for $\hat\lambda(\calM_{\calT_{n-1}})$ and $\hat\lambda(\calM_{\calT_{n-1}}^+)$:
\begin{eqnarray*}
\lambda(\calM_{\calT_{n-1}})=&\lambda\Big(\Sum_{t\in\calT_{n-1}}t\otimes t\Big)&=\Sum_{t\in\calT_{n-1}}\lambda(t\otimes t),\cr
\lambda(\calM_{\calT_{n-1}}^+)=&(\lambda(\calM_{\calT_{n-1}}))^+&=\Big(\Sum_{v\in T_n^*,t\in\calT_{n-1}}a(vt){}_{\halfshuffle}\otimes_{\tt conc}r(vt)\Big)^+.
\end{eqnarray*}
\end{enumerate}
\end{proof}

\begin{theorem}[factorized diagonal series]\label{diagonal}
With the bases in \eqref{basisP}--\eqref{basisS}, Definitions \ref{dual_bases}--\ref{Lambda}, Lemma \ref{factorization} and Propositions \ref{dual_algebras}--\ref{expression}, 
the diagonal series $\calD_{\calT_n}$ is factorized, using the decreasing lexicographical order product, as follows
\begin{eqnarray*}
\calD_{\calT_n}&=&\prod_{l\in\Lyn\calT_n}^{\searrow}e^{S_l\otimes P_l}
=\calD_{\calT_{n-1}}\Big(\Prod_{l=l_1l_2\atop{l_2\in\Lyn\calT_{n-1},l_1\in\Lyn T_n}}^{\searrow}e^{S_l\otimes P_l}\Big)\calD_{T_n},\cr
\calD_{\calT_n}&=&\calD_{T_n}\Big(1_{\calT_n^*}\otimes1_{\calT_n^*}\\
&+&\sum_{k\ge1}\sum_{v_1,\ldots,v_k\in T_n^*\atop t_1,\ldots,t_k\in\calT_{n-1}}
a(v_1t_1)\halfshuffle(\cdots\halfshuffle a(v_kt_k)\ldots))\otimes r(v_1t_1)\ldots r(v_kt_k)\Big).
\end{eqnarray*}
\end{theorem}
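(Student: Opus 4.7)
My plan is to prove the two factorizations of $\calD_{\calT_n}$ separately, both flowing from the Lyndon-PBW machinery of Section \ref{diagonalseries} and the Lazard elimination \eqref{produitsemidirecte}--\eqref{sommedirecte2}.

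For the first equality $\calD_{\calT_n} = \prod^{\searrow}_{l\in\Lyn\calT_n} e^{S_l \otimes P_l}$, I would simply invoke the classical Lyndon factorization of the diagonal series already recorded as \eqref{D}, applied to the dual bases $\{P_l\}_{l\in\Lyn\calT_n}$ and $\{S_l\}_{l\in\Lyn\calT_n}$ of \eqref{basisP}--\eqref{basisS}. To then refactor the product as $\calD_{\calT_{n-1}}\cdot(\prod_{l=l_1l_2}^{\searrow} e^{S_l\otimes P_l})\cdot\calD_{T_n}$, I would partition $\Lyn\calT_n$ into three blocks according to the total order \eqref{orderLyndon1}--\eqref{orderLyndon4}: the Lyndon words entirely in $\calT_{n-1}$ (largest by \eqref{orderLyndon3}), the mixed Lyndon words whose standard factorization $st(l)=(l_1,l_2)$ satisfies $l_1\in\Lyn T_n$ and $l_2\in\Lyn\calT_{n-1}$ (intermediate), and the Lyndon words entirely in $T_n$ (smallest). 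The decreasing product respects this partition, and the outer blocks reassemble into $\calD_{\calT_{n-1}}$ and $\calD_{T_n}$ by reapplying \eqref{D} to each sub-alphabet.

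For the second factorization $\calD_{\calT_n} = \calD_{T_n}\cdot(1_{\calT_n^*}\otimes 1_{\calT_n^*} + \sum_{k\ge1}\cdots)$, I would begin with Lemma \ref{factorization}, taking the first pair of dual bases in $\calU(\ncp{\calL ie_{\calA}}{T_n})$ and its dual (so that the first factor is $\calD_{T_n}$ by \eqref{diaser}), and the second pair in $\calU(\calI_n)$ and $\calU(\calI_n)^{\vee}$. The task then reduces to identifying the second factor with the stated sum. For this, Proposition \ref{dual_algebras} supplies the dual families $\calB$ and $\calB^{\vee}$: the concatenation products $r(v_1t_1)\cdots r(v_kt_k)$ span $\calU(\calI_n)$, and their duals in $\calU(\calI_n)^{\vee}$ are precisely the iterated half-shuffles $a(v_1t_1)\halfshuffle(\cdots\halfshuffle a(v_kt_k))$ by Proposition \ref{dual_algebras}(2). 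Equivalently, the claimed sum equals $\lambda(\calD_{\calT_{n-1}})$, with $\lambda$ as in Definition \ref{Lambda}(3), and this is computed explicitly as $1_{\calT_n^*}\otimes 1_{\calT_n^*} + \lambda(\calM_{\calT_{n-1}}^+)$ by Proposition \ref{expression}(2).

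The step I expect to require the most care is justifying that summing over all unordered tuples $(v_i,t_i)$ reproduces the full diagonal of $\calU(\calI_n)$ without any symmetry correction. This rests on the freeness of $\calI_n$ as a Lie algebra on $\{r(vt)\}_{v\in T_n^*,\,t\in\calT_{n-1}}$, a consequence of Lazard's elimination together with Proposition \ref{dual_algebras}(1), which propagates to the freeness of $\calU(\calI_n)\simeq T(V)$ with $V=\Span\{r(vt)\}$ and, dually, to the Zinbiel-freeness of $\calU(\calI_n)^{\vee}$ on $\{a(vt)\}$. The asymmetry between the half-shuffle on the left of $\otimes$ and the concatenation on the right is essential here: distinct tuples produce distinct elements on each side, so the diagonal over tuples matches the diagonal over a dual basis with no overcounting. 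Once this compatibility is made precise via Remark \ref{identites} and the morphism structure of $\lambda$ in Definition \ref{Lambda}, both factorizations follow.
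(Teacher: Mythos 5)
Your proposal is correct and is precisely the assembly the paper intends: the theorem carries no written proof and is presented as an immediate consequence of \eqref{D}, the ordering \eqref{orderLyndon1}--\eqref{orderLyndon3}, Lemma \ref{factorization} with the dual pairs $(\{P_w\}_{w\in T_n^*},\{S_w\}_{w\in T_n^*})$ and $(\calB,\calB^{\vee})$, and the identification of the second factor with $1_{\calT_n^*}\otimes1_{\calT_n^*}+\lambda(\calM_{\calT_{n-1}}^+)$ via Propositions \ref{dual_algebras}--\ref{expression}. You also correctly isolate the one point needing care --- that the sum over tuples $(v_i,t_i)$ is exactly the diagonal of $\calU(\calI_n)\otimes\calU(\calI_n)^{\vee}$ with no overcounting --- and justify it by the freeness of $\calI_n$ on $\{r(vt)\}$ coming from Lazard elimination, which is exactly what the cited results provide.
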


Any $S\in\ncs{\calA}{\calT_k}$ can be expressed as image by $S\otimes\mathrm{Id}$ of $\calD_{\calT_k}$ (resp. $\log\calD_{\calT_k}$) by (and also in $\ncs{\calA}{T_k}$)
\begin{eqnarray}
S%&=&\sum_{w\in\calT_k^*}\scal{S}{S_w}P_w\\
&=&\Big(\sum_{w\in T_k^*}\scal{S}{w}w\Big)\\
&\times&\Big(\sum_{v_1,\ldots,v_s\in T_k^*,k\ge0\atop t_1,\ldots,t_s\in\calT_{k-1}}
\scal{S}{a(v_1t_1)\halfshuffle\cdots\halfshuffle a(v_st_s)}r(v_1t_1)\ldots r(v_st_s)\Big),\cr
\log S&=&\sum_{w\in\calT_k^*}\scal{S}{w}\pi_1(w).
\end{eqnarray}
If $S$ is grouplike then it can be put in the MRS form \cite{reutenauer} and, by \eqref{inverse}, since $S^{-1}=a(S)$ then (and similarly in $\ncs{\calA}{T_k}$):
\begin{eqnarray}
&&S=\sum_{w\in\calT_k^*}\scal{S}{S_w}P_w=\prod_{l\in\Lyn\calT_k}^{\searrow}e^{\scal{S}{S_l}P_l}\quad{\mbox{(decreasing lexicographical}\atop\mbox{ordered product).}}\label{grouplike}\\
&&S^{-1}=\prod_{l\in\Lyn\calT_k}^{\nearrow}a(e^{\scal{S}{S_l}P_l})=\prod_{l\in\Lyn\calT_k}^{\nearrow}e^{-\scal{S}{S_l}P_l}
{\mbox{(increasing lexicographical}\atop\mbox{ordered product).}}\label{a(S)}
\end{eqnarray}

\begin{proposition}
In the Loday's generalized bialgebra $Z_{\halfshuffle}(\calT_k)$ (and also in $Z_{\halfshuffle}(T_k)$), 
\begin{eqnarray*}
\shuffle\limits_{i=1}^mu_i&=&\sum_{\sigma\in\mathfrak{S}_m}u_{\sigma(1)}\halfshuffle(\ldots(\halfshuffle u_{\sigma(m)})),\\
\shuffle\limits_{i=1}^mS_{l_i}&=&\sum_{\sigma\in\mathfrak{S}_m}S_{l_{\sigma(1)}}\halfshuffle(\ldots(\halfshuffle S_{l_{\sigma(m)}})).
\end{eqnarray*}
\end{proposition}

\begin{proof}
These results are obvious for $m=1$. Suppose it holds, for any $1\le i\le m-1$. Next, for $u_i=t_iu_i'\in\calT_k^+$ and $l_i=t_il_i'\in\Lyn\calT_k$, by induction hypothesis
and by \eqref{recursion} and \eqref{halfshuffle} and \eqref{basisS}, one successively obtains 
\begin{eqnarray*}
\shuffle\limits_{i=1}^mu_i
&=&\Sum_{\sigma\in\mathfrak{S}_m}t_{\sigma(m)}(u'_{\sigma(m)}\shuffle\shuffle\limits_{i=1}^{m-1}u_{\sigma(i)})
=\Sum_{\sigma\in\mathfrak{S}_m}u_{\sigma(m)}\halfshuffle(\shuffle\limits_{i=1}^{m-1}u_{\sigma(i)})\\
&=&\Sum_{\sigma\in\mathfrak{S}_m}u_{\sigma(m)}\halfshuffle\Sum_{\rho\in\mathfrak{S}_{m-1}}
u_{\rho\circ\sigma(1)}\halfshuffle(\ldots(\halfshuffle u_{\rho\circ\sigma(m-1)})\ldots),\cr
\shuffle\limits_{i=1}^mS_{l_i}
&=&\Sum_{\sigma\in\mathfrak{S}_m}t_{\sigma(m)}(S_{l'_{\sigma(m)}}\shuffle\shuffle\limits_{i=1}^{m-1}S_{l_{\sigma(i)}})
=\Sum_{\sigma\in\mathfrak{S}_m}S_{l_\sigma(m)}\halfshuffle(\shuffle\limits_{i=1}^{m-1}S_{l_{\sigma(i)}})\\
&=&\Sum_{\sigma\in\mathfrak{S}_m}S_{l_\sigma(m)}\halfshuffle\Sum_{\rho\in\mathfrak{S}_{m-1}}
S_{l_{\rho\circ\sigma(1)}}\halfshuffle(\ldots(\halfshuffle S_{l_{\rho\circ\sigma(m-1)}})\ldots).
\end{eqnarray*}
For any $\sigma\in\mathfrak{S}_m,\rho\in\mathfrak{S}_{m-1}$, $\rho$ belongs also $\mathfrak{S}_m$,
for which $\rho(m)=m$ and then $\rho\circ\sigma\in\mathfrak{S}_m$. It follows then the expected results.
\end{proof}

\section{Solutions of universal differential equation}\label{universalequation}
\subsection{Iterated integrals and Chen series}\label{Iterated integrals}
In all the sequel, $\calV$ is the simply connected manifold on $\C^n$.
The pushforward (resp. pullback) of any diffeomorphism $g$ on $\calV$ is denoted by $g_*$ (resp. $g^*$).
The ring of holomorphic functions over $\calV$ is denoted by $(\calH(\calV),*,1_{\calH(\calV)})$
and the differential ring $(\calH(\calV),\partial_1,\ldots,\partial_n)$ by $\calA$.

\begin{itemize}
\item $\calC$ denotes the sub differential ring of $\calA$ (\textit{i.e.} $\partial_i\calC\subset\calC$, for $1\le i\le n$).

\item $d$ denotes the total differential defined by
\begin{eqnarray}\label{df}
\forall f\in\calH(\calV),&&df=(\partial_1f)dz_1+\ldots+(\partial_nf)dz_n,
\end{eqnarray}
where $\partial_i$, for $i=1,\ldots,n$, denotes the partial derivative operator
$\partial/\partial{z_i}$ defined, for any $a=(a_1,\ldots,a_n)\in\calH(\calV)$, as follows
\begin{eqnarray}
\qquad
(\partial_i f)(a)=\frac{\partial f(a)}{\partial{z_i}}=\lim_{z\to a}
\frac{f(z_1,\ldots,z_i,\ldots,z_n)-f(a_1,\ldots,a_i,\ldots,a_n)}{z_i-a_i}.
\end{eqnarray}

\begin{example}\label{partial}
For any $u\in\calH(\calV)$, if $f$ satisfies the differential equation
$\partial_i f=u f$ then $f=Ce^{\log u}\in\calH(\calV)$, where $C$ is a constant.
\end{example}

\item $\Omega(\calV)$ denotes the space of holomorphic forms over $\calV$ being graded as follows
\begin{eqnarray}
\Omega(\calV)=\bigoplus_{p\ge0}\Omega^p(\calV),
\end{eqnarray}
where $\Omega^p(\calV)$ (specially, $\Omega^0(\calV)=\calH(\calV)$) is the space of holomorphic
$p$-forms over $\calV$. Equipped the wedge product, $\wedge$, $\Omega$ is a graded algebra such that, for any $\omega_1\in\Omega^{p_1}$ and $\omega_2\in\Omega^{p_2}$, one has
$\omega_1\wedge\omega_2=(-1)^{p_1p_2}\omega_2\wedge\omega_1$.

\item Over $\ncs{\calA}{\calT_n}$ (resp. $\ncs{\Omega^p(\calV)}{\calT_n},p\ge0$), the derivative operators $d,\partial_1,\ldots,\partial_n$ are extended as follows (see also \eqref{df})
\begin{eqnarray}\label{dS}
\forall S=\sum_{w\in\calT_n^*}\scal{S}{w}w,&{\bf d}S=\Sum_{w\in\calT_n^*}(d\scal{S}{w})w=\sum_{i=1}^n(\mathbf{\partial}_iS)\;dz_i.
\end{eqnarray}

\begin{example}
Let $t_{i,j}\in\calT_n$ and $U_{i,j}(z)=t_{i,j}(z_i-z_j)^{-1}$, for $0\le i<j\le n$. Any solution of $\partial_i F=U_{i,j}F$ is of
the form $F(z)=e^{t_{i,j}\log(z_i-z_j)^{-1}}C=(z_i-z_j)^{-t_{i,j}}C$, where $C\in\ncs{\C}{\calT_n}$ (see also Example \ref{partial}).
\end{example}

\item $\varsigma\path z$ is a path over $\calV$ with fixed endpoints $(\varsigma,z)$, \textit{i.e.} the curve $\gamma:[0,1]\longrightarrow\calV$ such that
$\gamma(0)=\varsigma=(\varsigma_1,\ldots,\varsigma_n)$ and $\gamma(1)=z=(z_1,\ldots,z_n)$.

For any $i,j\in\N,1\le i<j\le n$, let $\xi_{i,j}\in\calC$ and let $\omega_{i,j}:=d\xi_{i,j}$
be holomorphic $1$-form belonging to $\Omega^1(\calV)$. By \eqref{df}, one also has
\begin{eqnarray}
d\xi_{i,j}=\sum_{k=1}^n(\partial_k\xi_{i,j})dz_k.
\end{eqnarray}

\begin{example}\label{calC0}
For $\xi_{i,j}=\log(z_i-z_j)$, for $1\le i<j\le n$, let us denote the sub differential ring, of $\C(z)$,
$\C[\{(\partial_1\xi_{i,j})^{\pm1},\ldots,(\partial_n\xi_{i,j})^{\pm1}\}_{1\le i<j\le n}]$ by $\calC_0$.
\end{example}
\end{itemize}
The holomorphic function\footnote{If $f\in\calH(\calV)\equiv\Omega^0(\calV)$ and $\omega\in\Omega^1(\calV)$
then $\omega\wedge f\in\Omega^1(\calV)$ and $d(\omega\wedge f)=(d\omega)\wedge f+\omega\wedge(df)$.}
$\xi_{i,j}\in\calH(\calV)$ is a primitive for $\omega_{i,j}$ which is a exact form and then is a closed, \textit{i.e.} $d\omega_{i,j}=0$. Thus, iterated integrals
and the Chen series, of $\{\omega_{i,j}\}_{1\le i<j\le n}$ and along $\varsigma\path z$, in Definition \ref{Chenseriesdef} below are a homotopy invariant \cite{Chen1954}.

\begin{definition}[see \cite{these}]\label{growth}
\begin{enumerate}
\item Let $a\in\Q$ and $\chi_a$ be a real morphism $\calT_n^*\longrightarrow\R_{\ge0}$. The series $S\in\ncs{\calA}{\calT_n}$
is said satisfy the $\chi_a$-growth condition if and only if, choosing a compact $K$ on $\calA$,
\begin{eqnarray*}
\exists c\in\R_{\ge0},k\in\N,&\forall w\in\calT_n^{\ge k},&\absv{\scal{S}{w}}_K\le c\chi(w)\abs{w}!^{-a}.
\end{eqnarray*}

\item For $i=1$ or $2$, let $S_i\in\ncs{\calA}{\calT_n}$ and $K_i$ be a compact on $\calA$ such that
\begin{eqnarray*}
\sum_{w\in\calT_n^*}\absv{\scal{S_1}{w}}_{K_1}\absv{\scal{S_2}{w}}_{K_2}<+\infty.
\end{eqnarray*}
Then one defines
\begin{eqnarray*}
\sscal{S_1}{S_2}:=\sum_{w\in\calT_n^*}\scal{S_1}{w}\scal{S_2}{w}.
\end{eqnarray*}
\end{enumerate}
\end{definition}

\begin{lemma}[see \cite{these}]\label{condition}
Let $a_1,a_2\in\Q$ such that $a_1+a_2<1$. Let $\chi_{a_1},\chi_{a_2}$ be morphisms of monoids $\calT_n^*\longrightarrow\R_{\ge0}$.
For any $i=1,2$, let $S_i\in\ncs{\calA}{\calT_n}$ satisfying the $\chi_{a_i}$-growth condition.
If $\sum_{t\in\calT_n}\chi_{a_1}(t)\chi_{a_2}(t)<1$ then $\sscal{S_1}{S_2}$ is well defined.
\end{lemma}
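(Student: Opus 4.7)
The plan is to prove absolute convergence of the double sum
$\sum_{w \in \calT_n^*} \absv{\scal{S_1}{w}}_{K_1}\absv{\scal{S_2}{w}}_{K_2}$
by stratifying $\calT_n^*$ according to word length and combining the two growth estimates with the multiplicativity of $\chi_{a_1}$ and $\chi_{a_2}$. Let $c_i, k_i, K_i$ be the data supplied by the $\chi_{a_i}$-growth condition for $S_i$, and set $k := \max(k_1, k_2)$. Taking the termwise product of the two bounds yields, for every word $w$ with $|w| \ge k$,
$$
\absv{\scal{S_1}{w}}_{K_1}\absv{\scal{S_2}{w}}_{K_2} \le c_1 c_2\, \chi_{a_1}(w)\chi_{a_2}(w)\, |w|!^{-(a_1+a_2)}.
$$

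The key combinatorial input is that $\chi_{a_1}$ and $\chi_{a_2}$ are monoid morphisms, hence so is their pointwise product $t \mapsto \chi_{a_1}(t)\chi_{a_2}(t)$. Consequently, for each $m \ge 0$,
$$
\sum_{|w|=m} \chi_{a_1}(w)\chi_{a_2}(w) = \Bigl(\sum_{t \in \calT_n} \chi_{a_1}(t)\chi_{a_2}(t)\Bigr)^m = q^m,
$$
where $q < 1$ by hypothesis. Grouping the previous inequality by length gives
$$
\sum_{|w|=m}\absv{\scal{S_1}{w}}_{K_1}\absv{\scal{S_2}{w}}_{K_2} \le c_1 c_2\, q^m\, m!^{-(a_1+a_2)},
$$
and summing over $m \ge k$ one obtains a convergent series: the ratio of consecutive terms is $q(m+1)^{-(a_1+a_2)}$, which is bounded by $q < 1$ for $m$ large once $a_1 + a_2 \ge 0$, and tends to $0$ when $0 < a_1+a_2 < 1$. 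Combined with the finite contribution of the words of length $<k$, this shows absolute convergence of $\sum_w \scal{S_1}{w}\scal{S_2}{w}$, so $\sscal{S_1}{S_2}$ is well defined.

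The only delicate point is the final summation, where one must control the interplay between the geometric factor $q^m$ and the factorial factor $m!^{-(a_1+a_2)}$. The clean geometric comparison applies without any further work in the regime $a_1 + a_2 \ge 0$, which is the natural one for the subsequent applications of the lemma (the factorial is a decay factor, and the assumption $a_1+a_2 < 1$ ensures the ratio test gives a limit strictly less than $1$ via the decay of $(m+1)^{-(a_1+a_2)}$). The rest of the argument is a routine bookkeeping of suprema on the compacts $K_1, K_2$: each $\absv{\scal{S_i}{w}}_{K_i}$ being already a $K_i$-uniform norm of a single coefficient, the two estimates multiply pointwise with no further loss.
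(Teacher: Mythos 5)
Your argument is correct and is essentially the paper's own proof: both rest on the observation that $t\mapsto\chi_{a_1}(t)\chi_{a_2}(t)$ is again a morphism, so that after discarding (or controlling) the factor $\abs{w}!^{-(a_1+a_2)}$ the sum is dominated by $\sum_{m\ge0}q^m=\big(\sum_{t\in\calT_n}\chi_{a_1}(t)\chi_{a_2}(t)\big)^*$, which converges because $q<1$; your length-stratification and ratio test are just a more explicit rendering of that one-line bound. The caveat you raise about needing $a_1+a_2\ge0$ is real but applies equally to the paper's proof, whose step $\sum_{w}\abs{w}!^{-(a_1+a_2)}\chi_{a_1}(w)\chi_{a_2}(w)\le\sum_{w}\chi_{a_1}(w)\chi_{a_2}(w)$ silently uses the same sign assumption.
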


\begin{proof}
By assumption, the expected result is due to the fact that
\begin{eqnarray*}
\absv{\sum_{w\in\calT_n^*}\scal{S_1}{w}\scal{S_2}{w}}&\le&\sum_{w\in\calT_n^*}\absv{\scal{S_1}{w}}_{K_1}\absv{\scal{S_2}{w}}_{K_2}\cr
&\le&c_1c_2\sum_{w\in\calT_n^*}\frac{\chi_{a_1}(w)\chi_{a_2}(w)}{\abs{w}!^{a_1+a_2}}\cr
&\le&c_1c_2\sum_{w\in\calT_n^*}\chi_{a_1}(w)\chi_{a_2}(w)\cr
&=&c_1c_2\Big(\sum_{t\in\calT_n}\chi_{a_1}(t)\chi_{a_2}(t)\Big)^*.
\end{eqnarray*}
\end{proof}

\begin{remark}
With Notations in Lemma \ref{condition} and, for any $i=1,2$,
\begin{eqnarray*}
\mathrm{Dom}(S_i):=\{R\in\ncs{\calA}{\calT_n}\|\sum_{k\ge0}\scal{S_i}{[R]_k}\mbox{ converges in }K_i\},
&[R]_k=\displaystyle\sum_{w\in\calT_n^k}\scal{R}{w}w,
\end{eqnarray*}
($\mathrm{Dom}(S_i)$ can be void), one has $S_1\in\mathrm{Dom}(S_2)$ and $S_2\in\mathrm{Dom}(S_1)$ because
\begin{eqnarray*}
\Big(\sum_{t\in\calT_n}\chi_{a_1}(t)\chi_{a_2}(t)\Big)^*
=\sum_{w\in\calT_n^*}\chi_{a_1}(w)\chi_{a_2}(w)
=\sum_{k\ge0}\sum_{w\in\calT_n^*\atop\abs{w}=k}\chi_{a_1}(w)\chi_{a_2}(w)<+\infty.
\end{eqnarray*}
\end{remark}

\begin{definition}\label{Chenseriesdef}
The iterated integral, of the holomorphic $1$-forms $\{\omega_{i,j}\}_{1\le i<j\le n}$ and along the path $\varsigma\path z$ over $\calV$, is given by $\alpha_{\varsigma}^z(1_{\calT_n^*})=1_{\calH(\calV)}$ and, for any $w=t_{i_1,j_1}\ldots t_{i_k,j_k}\in\calT_n^*$ and subdivision $(\varsigma,s_1\ldots,s_k,z)$ of the path $\varsigma\path z$ over $\calV$, by
\begin{eqnarray*}
\alpha_{\varsigma}^z(w)=\int_{\varsigma}^z\omega_{i_1,j_1}(s_1)\int_{\varsigma}^{s_1}
\omega_{i_2,j_2}(s_2)\ldots\int_{\varsigma}^{s_{k-1}}\omega_{i_k,j_k}(s_k)\in\calH(\calV).
\end{eqnarray*}
The Chen series, of
$\{\omega_{i,j}\}_{1\le i<j\le n}$ and along $\varsigma\path z$, is the following series
\begin{eqnarray*}
C_{\varsigma\path z}:=\sum_{w\in\calT_n^*}\alpha_{\varsigma}^z(w)w\in\ncs{\calA}{\calT_n}.
\end{eqnarray*}
\end{definition}

\begin{proposition}[see \cite{these}]\label{melange0}
With Notations in Definition \ref{Chenseriesdef},
\begin{enumerate}
\item $C_{\varsigma\path z}$ satisfies the $\chi_a$-growth condition.

\item Let $(\beta,\mu,\eta)$ be linear representation of $S\in\ncs{\calA^{\mathrm{rat}}}{\calT_n}$.
Then
\begin{eqnarray*}
\sscal{C_{\varsigma\path z}}{S}=
\alpha_{\varsigma}^z(S)=\Sum_{w\in\calT_n^*}(\beta\mu(w)\eta)\alpha_{\varsigma}^z(w).
\end{eqnarray*}

\item\label{shuffle} Let $S_i\in\ncs{\calA^{\mathrm{rat}}}{\calT_n}$, for $i=1,2$.
Then $\alpha_{\varsigma}^z(S_1\shuffle S_2)=\alpha_{\varsigma}^z(S_1)\alpha_{\varsigma}^z(S_2)$.
\end{enumerate}
\end{proposition}

\begin{proof}
\begin{enumerate}
\item By induction on the length of $w\in\calT_n^*$ and by use the length of the path $\varsigma\path z$, denoted by $\ell$.
one proves that $C_{\varsigma\path z}$ satisfies the $\chi_1$-growth condition, with $\chi_1(y)=\ell$, for $t\in\calT_n$.

\item Since $\scal{S}{w}=\beta\mu(w)\eta$, for $w\in\calT_n^*$, then $S$ satisfies the $\chi_2$-growth condition,
with $\chi_2(t)=\absv{\mu(t)}$, for $t\in\calT_n$ (using of norm on matrices with coefficients in $\calA$).
By Lemma \ref{condition}, it follows then the expected result.

\item The recursion \eqref{recursion} yields $\alpha_{\varsigma}^z(u\shuffle v)=\alpha_{\varsigma}^z(u)\alpha_{\varsigma}^z(v)$,
for $u,v\in\calT_n^*$ (a Chen's lemma, \cite{Chen1954})
and then the expected result, by extending to $\ncs{\calA^{\mathrm{rat}}}{\calT_n}$.
\end{enumerate}
\end{proof}

\begin{definition}\label{calK}
Let $\calK:=(\mathrm{span}_{\calA}\{\alpha_{\varsigma}^z(R)\}_{R\in\ncs{\calC^{\mathrm{rat}}}{\calT_n}},\times)$ and then
$\calC\subset\calA\subset\calK$.
\end{definition}

\begin{remark}\label{expad_example}
\begin{enumerate}
\item Using \eqref{adjoint}, for any $S\in\ncs{\calL ie_{\calK}}{T_n}$, let $\varphi_s=e^{\ad_S}$. One has
\begin{eqnarray*}
\forall R\in\ncs{\calL ie_{\calA}}{\calT_{n-1}},&
\varphi_S(R)=e^{\ad_S}R=\Sum_{k\ge0}\Frac{1}{k!}\ad_S^kR\in\ncs{\calL ie_{\calK}}{\calT_n}.
\end{eqnarray*}
In particular, for $S\in\ncp{\calL ie_{\calK}}{T_n},R\in\ncp{\calL ie_{\calK}}{\calT_{n-1}}$ and then $S\in T_n,R\in\calT_{n-1}$.
Using \eqref{basisP}, if $\varphi_{P_l}=e^{\ad_{P_l}}$ with $l\in\Lyn T_n$ then,
for $q=P_{\ell}$ with $\ell\in\calT_{n-1}$,  and using \eqref{orderLyndon}-\eqref{orderLyndon4},
one obtains $l\ell\in\Lyn\calT_n$ and then (see \eqref{basisP})
\begin{eqnarray*}
\varphi_{P_l}(P_{\ell})=e^{\ad_{P_l}}P_{\ell}=\sum_{k\ge0}\frac{1}{k!}\ad_{P_l}^kP_{\ell}=\sum_{k\ge0}\frac1{k!}P_{l^k\ell}.
\end{eqnarray*}
In particular, if $P_l=l\in T_n$ and $P_{\ell}=\ell\in\calT_{n-1}$ then (see \eqref{basisP}--\eqref{basisS})
\begin{eqnarray*}
\varphi_l(\ell)=e^{\ad_l}\ell=\sum_{k\ge0}\frac{1}{k!}\ad_l^k\ell=\sum_{k\ge0}\frac{r(l^k\ell)}{k!}
&\mbox{and by duality}&\check\varphi_l(\ell)=\sum_{k\ge0}\frac{l^k\ell}{k!}=e^l\ell.
\end{eqnarray*}
\end{enumerate}
\end{remark}

\begin{corollary}[see \cite{these}]\label{melange}
Let $t_{i,j}\in\calT_n,k\ge1$ and $\alpha_{\varsigma}^z:(\ncs{\calC^{\mathrm{rat}}}{\calT_n},\shuffle,1_{\calT_n^*})\longrightarrow(\calK,\times,1_{\calC})$.
\begin{enumerate}
\item One has $\alpha_{\varsigma}^z(t_{i,j}^k)=(\alpha_{\varsigma}^z(t_{i,j}))^k/{k!}$ and then $\alpha_{\varsigma}^z(t_{i,j}^*)=\exp(\alpha_{\varsigma}^z(t_{i,j}))$.

\item For any $R\in\ncs{\calC^{\mathrm{rat}}}{\calT_n}$ and $H\in\ncs{\calC^{\mathrm{rat}}}{T_n}$,
\begin{eqnarray*}
\alpha_{\varsigma}^z((t_{i,j}H)\halfshuffle R)=\left\{
\begin{array}{ccc}
\alpha_{\varsigma}^z(t_{i,j}H)&\mbox{if}&R=1_{\calT_n^*},\cr
\displaystyle\int_{\varsigma}^z\omega_{{i,j}}(s)\alpha_{\varsigma}^s(H)\alpha_{\varsigma}^s(R)&\mbox{if}&R\neq 1_{\calT_n^*}.
\end{array}\right.
\end{eqnarray*}
\end{enumerate}
\end{corollary}

\begin{proof}
By Proposition \ref{melange0} and, on the one hand, since $t_{i,j}^k=t_{i,j}^{\shuffle k}/k!$ then it follows the first result and, on the other hand, by \eqref{halfshuffle}, it follows the last result.
\end{proof}

\begin{remark}[\cite{ACA,orlando}]\label{Hypergeometric}
Developping the idea of \textit{universality}, for simplification, let $C_{\varsigma\path z}$ be the Chen series, along $\varsigma\path z$ and of $\omega_0(z)=dz/z$ and $\omega_1(z)=dz/(1-z)$.

Let $a,b,c$ be real parameters and let $S\in\ncs{\C^{\mathrm{rat}}}{x_0,x_1}$ be the rational
series admitting the triplet $(\beta,\mu,\eta)$ as parametrized linear representation \cite{ACA}:
\begin{eqnarray*}
\beta={}^t\eta=\begin{pmatrix}1&0\end{pmatrix},
&\mu(x_0)=-\begin{pmatrix}0&0\cr ab& c\end{pmatrix},
&\mu(x_1)=-\begin{pmatrix}0&1\cr0&c-a-b\end{pmatrix}.
\end{eqnarray*}

One can consider the following hypergeometric equation
\begin{eqnarray*}
z(1-z)\ddot y(z)+[c-(a+b+1)z]\dot y(z)-ab y(z)=0,
\end{eqnarray*}
in which putting $q_1(z)=-y(z)$ and $q_2(z)=(1-z)\dot y(z)$, the state vector $q$ satisfies the
following linear differential equation associated to $(\beta,\mu,\eta)$ \cite{fliess1,fliess2}
\begin{eqnarray*}
\dot q(z)=\begin{pmatrix}\dot q_1\cr\dot q_2\end{pmatrix}
=\left(\frac{\mu(x_0)}z+\frac{\mu(x_1)}{1-z}\right)
\begin{pmatrix}q_1\cr q_2\end{pmatrix},&\begin{pmatrix}q_1(0)\cr q_2(0)\end{pmatrix}=\begin{pmatrix}1\cr0\end{pmatrix}.
\end{eqnarray*}
Or equivalently, considering two following parametrized linear vector fields \cite{fliess1,fliess2}
\begin{eqnarray*}
A_0=-(abq_1+cq_2){\partial}/{\partial q_2}&\mbox{and}&
A_1=-q_2{\partial}/{\partial q_1}-(c-a-b)q_2{\partial}/{\partial q_2},
\end{eqnarray*}
$q$ satisfies then the following differential equation \cite{fliess1,fliess2}
\begin{eqnarray*}
\dot q(z)=z^{-1}A_0(q)+(1-z)^{-1}A_1(q)&\mbox{and}&y(z)=-q_1(z).
\end{eqnarray*}

By Proposition \ref{melange0}, one has $\langle{C_{0\path z}}\Vert{S}\rangle=\alpha_0^z(S)=q_1(z)=-y(z)$.
\end{remark}

\subsection{Noncommutative differential equations}\label{eqdiffnoncommutatice}
Getting back to \eqref{NCDE}--\eqref{split}, let us consider the Chen series $C_{\varsigma\path z}$, of the holomorphic $1$-forms
$\{\omega_{i,j}\}_{1\le i<j\le n}$ and along the path $\varsigma\path z$ over the simply connected manifold $\calV$. Let $g$ be a
diffeomorphism on $\calV$ and $C_{g_*\varsigma\path z}$ be the Chen series, of $\{g^*\omega_{i,j}\}_{1\le i<j\le n}$ and along$\varsigma\path z$, or equivalently, of $\{\omega_{i,j}\}_{1\le i<j \le n}$ and along $g_*\varsigma\path z$ \cite{Chen1954}:
\begin{eqnarray}\label{Chen_g*}
C_{g_*\varsigma\path z}
&=&\sum_{m\ge0}\sum_{t_{i_1,j_1}\ldots t_{i_m,j_m}\in\calT_n^*}
\int_{\varsigma}^zg^*\omega_{i_1,j_1}(s_1)\ldots\int_{\varsigma}^{s_{m-1}}g^*\omega_{i_m,j_m}(s_m)\cr
&&\hskip68mm t_{i_1,j_1}\ldots t_{i_m,j_m}\cr
&=&\sum_{w\in\calT_n^*}\alpha_{g(\varsigma)}^{g(z)}(w)w.
\end{eqnarray}
$C_{g_*\varsigma\path z}$ is obtained by the Picard's iteration, as in \eqref{picard0}, and is convergent
\begin{eqnarray}\label{picard0*}
&F^*_0(\varsigma,z)=1_{\calA},&F^*_i(\varsigma,z)=F_{i-1}^*(\varsigma,z)+\int_{\varsigma}^zM_n^*(s)F^*_{i-1}(s),i\ge1,
\end{eqnarray}
where $M_n^*:=g^*M_n$, associated to ${\bf d}S=M_n^*S$.

\begin{definition}\label{G}
By Definition \ref{calK}, let $\G:=\{e^{\ad_S}\}_{S\in\ncs{\calL ie_{\calK}}{T_n}}$.
\end{definition}

For any $\phi\in\G$, let $\check\phi$ be its adjoint to $\phi$ and let us consider the Picard's iterations with
initial condition $F_0^{\phi}$, according to following recursion similar to \eqref{picard0} (for $i\ge1$):
\begin{eqnarray}\label{picard1}
F_i^{\phi^{(\varsigma,z)}}(\varsigma,z)=F_{i-1}^{\phi^{(\varsigma,z)}}(\varsigma,z)+
\int_{\varsigma}^zM_{n-1}^{\phi^{(\varsigma,s)}}(s)F_{i-1}^{\phi^{(\varsigma,s)}}(\varsigma,s).
\end{eqnarray}
where
\begin{eqnarray}\label{NCDE1}
M_{n-1}^{\phi}:=\phi(M_{n-1})&\mbox{associated to}&{\bf d}F^{\phi}=M_{n-1}^{\phi}F.
\end{eqnarray}

\begin{proposition}\label{MRSBTT}
Let $S\in\ncs{\calA}{\calT_n}$ be a grouplike solution of \eqref{NCDE}. Then
\begin{enumerate}
\item If $H\in\ncs{\calA}{\calT_n}$ is another grouplike solution for \eqref{NCDE}
then there exists $C\in\ncs{\calL ie_{\calA}}{\calT_n}$ such that $S=He^C$ (and conversely).
\item\label{item} The following assertions are equivalent
\begin{enumerate}
\item\label{1} The family $\{\scal{S}{w}\}_{w\in\calT_n^*}$ is $\calC$-linearly free.
\item\label{2} The family $\{\scal{S}{l}\}_{l\in\Lyn\calT_n}$ is $\calC$-algebraically free.
\item\label{3} The family $\{\scal{S}{t}\}_{t\in\calT_n}$ is $\calC$-algebraically free.
\item\label{4} The family $\{\scal{S}{t}\}_{t\in\calT_n\cup\{1_{\calT_n^*}\}}$ is $\calC$-linearly free.
\item\label{5} The family $\{\omega_{i,j}\}_{1\le i<j\le n}$ is such that, for any
$(c_{i,j})_{1\le i<j\le n}\in\C^{(\calT_n)}$ and $f\in\mathrm{Frac}(\calC)$, one has
\begin{eqnarray*}
\sum_{1\le i<j\le n}c_{i,j}\omega_{i,j}=df&\Longrightarrow&(\forall 1\le i<j\le n)(c_{i,j}=0).
\end{eqnarray*}
\item\label{6} $\{\omega_{i,j}\}_{1\le i<j\le n}$ is $\calC$-free and
$d\mathrm{Frac}(\calC)\cap\mathrm{span}_{\C}\{\omega_{i,j}\}_{1\le i<j\le n}=\{0\}$.
\end{enumerate}
\end{enumerate}
\end{proposition}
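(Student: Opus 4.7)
The plan is to set $T := H^{-1}S$, which is well defined because $H$ grouplike forces $\langle H\mid 1_{\calT_n^*}\rangle=1$ and hence makes $H$ invertible in $\ncs{\calA}{\calT_n}$. Differentiating $HH^{-1}=1_{\calT_n^*}$ gives $\mathbf{d}(H^{-1})=-H^{-1}M_n$, so
\[
\mathbf{d}T = \mathbf{d}(H^{-1})S+H^{-1}\mathbf{d}S = -H^{-1}M_nS+H^{-1}M_nS = 0,
\]
and $T$ has scalar constant coefficients. Since grouplike elements form a group for ${\tt conc}$, $T$ is itself grouplike; Ree's theorem, cited in the introduction, then furnishes $C\in\ncs{\calL ie_{\C}}{\calT_n}\subset\ncs{\calL ie_{\calA}}{\calT_n}$ with $T=e^C$, i.e.\ $S=He^C$. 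Conversely, if $C$ has constant coefficients then $\mathbf{d}(He^C)=(\mathbf{d}H)e^C=M_nHe^C$, so $He^C$ solves \eqref{NCDE}.

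\textbf{Item (2), easy implications.} I would arrange the equivalence as a cycle $(1)\Rightarrow(2)\Rightarrow(3)\Rightarrow(4)\Rightarrow(5)\Leftrightarrow(6)\Rightarrow(1)$. Because $S$ is grouplike for $\Delta_\shuffle$, the pairing $\langle S\mid\cdot\rangle$ descends to an algebra morphism $\mathrm{Sh}_\calC(\calT_n)\to\calA$; by Radford's theorem this shuffle algebra is $\calC[\Lyn\calT_n]$, whence $(1)\Leftrightarrow(2)$: linear freedom over $\calC$ of all word coefficients matches algebraic freedom over $\calC$ of the transcendence basis. The implications $(2)\Rightarrow(3)\Rightarrow(4)$ are routine, since $\calT_n\subset\Lyn\calT_n$ and algebraic freedom dominates linear freedom together with the constant $1$.

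\textbf{From coefficients to $1$-forms.} For $(4)\Rightarrow(5)$ I use the length-one component of $\mathbf{d}S=M_nS$, namely $d\langle S\mid t_{i,j}\rangle=\omega_{i,j}$. A relation $\sum c_{i,j}\omega_{i,j}=df$ with $c_{i,j}\in\C$ and $f\in\mathrm{Frac}(\calC)$ integrates to $\sum c_{i,j}\langle S\mid t_{i,j}\rangle=f+c_0$ for a constant $c_0$; after writing $f+c_0=p/q$ with $p,q\in\calC$, multiplication by $q$ yields a nontrivial $\calC$-linear dependence between $\{\langle S\mid t_{i,j}\rangle\}$ and $1$ as soon as some $c_{i,j}\neq0$, contradicting (4). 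The equivalence $(5)\Leftrightarrow(6)$ is essentially bookkeeping: the second clause of (6) is literally (5), and $\calC$-freedom of $\{\omega_{i,j}\}$ follows by the same denominator-clearing trick applied to any hypothetical $\calC$-linear relation among the forms.

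\textbf{Closing the cycle.} The decisive direction is $(6)\Rightarrow(1)$, which I would prove by induction on weight. Assume a nontrivial $\calC$-linear dependence $\sum_w f_w\langle S\mid w\rangle=0$ of minimal total weight. Differentiating termwise and using $d\langle S\mid t_{k,l}u\rangle=\omega_{k,l}\langle S\mid u\rangle$ produces a $1$-form identity whose $\omega_{k,l}$-coefficient is a $\calC$-combination of $\langle S\mid u\rangle$ of strictly lower weight. The induction hypothesis rewrites every such lower-weight combination as an element of $\mathrm{Frac}(\calC)$, reducing the identity to the shape $\sum c_{i,j}\omega_{i,j}=dh$ with $c_{i,j}\in\C$ and $h\in\mathrm{Frac}(\calC)$, contradicting (6). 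The main obstacle is to make sure that this collapse does not annihilate every $c_{i,j}$: I expect to handle this by taking the relation of minimal weight (so no cancellation against lower-weight relations is possible) and, using the Lyndon decomposition from $(1)\Leftrightarrow(2)$, inspecting the leading Lyndon term so that at least one $c_{i,j}$ is forced to be nonzero.
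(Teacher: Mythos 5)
Your treatment of item (1) is essentially the paper's: the paper computes ${\bf d}(S^{-1}H)=0$ and concludes by grouplikeness, exactly as you do with $T=H^{-1}S$ (and you rightly add that the converse needs $C$ to have constant coefficients, which the paper leaves implicit). Your reduction $(1)\Leftrightarrow(2)$ via Radford and the shuffle-character property of $S$, and the implications $(2)\Rightarrow(3)\Rightarrow(4)\Rightarrow(5)$, are sound and in the spirit of the paper's ``denominator chasing''.

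The genuine gap is in the part you yourself flag as decisive. The paper does not prove $(4)\Rightarrow(5)$, $(5)\Leftrightarrow(6)$ and $(6)\Rightarrow(1)$ from scratch: it delegates them to Theorem~1 of \cite{Linz}, which is precisely the independence criterion you are trying to re-derive inline. Your sketch of $(6)\Rightarrow(1)$ does not yet work as stated. Differentiating a minimal relation $\sum_w f_w\scal{S}{w}=0$ produces \emph{two} sums: $\sum_w(df_w)\scal{S}{w}$, whose words have \emph{unchanged} weight, and $\sum_{w=t_{k,l}u}f_w\,\omega_{k,l}\scal{S}{u}$, whose words have lower weight. Your induction hypothesis only disposes of the second sum; nothing in your argument eliminates the first, and ``minimal total weight'' is not a measure that decreases under this operation (the correct argument in \cite{Linz} works over $\mathrm{Frac}(\calC)$, normalizes the coefficient of a maximal word to $1$ so that its differential vanishes, and induces on the \emph{cardinality of the support} of the relation). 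Likewise, the final step --- ensuring that the collapsed identity $\sum c_{i,j}\omega_{i,j}=dh$ has some $c_{i,j}\neq0$ --- is exactly the point you defer with ``I expect to handle this by\ldots'', so it is not proved. A secondary gap: in $(5)\Rightarrow(6)$, the $\calC$-freeness of $\{\omega_{i,j}\}$ does \emph{not} follow from your denominator-clearing trick, because a $\calC$-linear relation $\sum g_{i,j}\omega_{i,j}=0$ with non-constant $g_{i,j}\in\calC$ cannot be rewritten in the form $\sum c_{i,j}\omega_{i,j}=df$ with $c_{i,j}\in\C$; this equivalence is again part of what Theorem~1 of \cite{Linz} supplies. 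Either import that theorem explicitly, as the paper does, or carry out its support-induction argument in full; as written, the cycle $(5)\Leftrightarrow(6)\Rightarrow(1)$ is not closed.
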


\begin{proof}[Sketch]
\begin{enumerate}
\item The proof is similarly treated in \cite{orlando}: since ${\bf d}(SS^{-1})={\bf d}(\mathrm{Id})=0$
then, by the Liebniz rule, $({\bf d}S)S^{-1}+S({\bf d}S^{-1})=0$ and then (see also \eqref{grouplike})\\
${\bf d}S^{-1}=-S^{-1}({\bf d}S)S^{-1}=-S^{-1}(M_nS)S^{-1}=-S^{-1}M_n(SS^{-1})=-S^{-1}M_n$
and then
${\bf d}(S^{-1}H)=S^{-1}({\bf d}H)+({\bf d}S^{-1})H=S^{-1}(M_nH)-(S^{-1}M_n)H=0$.
Thus,  $S^{-1}H$ is a constant series. Since the inverse and the product
of grouplike elements are grouplike then it follows the expected result.

\item This is a grouplike version of the abstract form of Theorem 1 of \cite{Linz}. It goes as follows
\begin{itemize}
\item due to the fact that $\calA$ is without zero divisors, using the fields of fractions of $\calC$ and $\calA$, we have the embeddings $\calC\subset\mathrm{Frac}(\calC)\subset\mathrm{Frac}(\calA)$. $\mathrm{Frac}(\calA)$ is a differential field, and its differential operator can still be denoted by $d$ as it induces the previous one on $\calA$. The same holds for $\ncs{\calA}{\calT_n}\subset\ncs{\mathrm{Frac}(\calA)}{\calT_n}$ and ${\bf d}$. Hence, equation \eqref{NCDE} can be transported in
$\ncs{\mathrm{Frac}(\calA)}{\calT_n}$ and $M_n$ satisfies the same condition as previously. 

\item Equivalence between \ref{1}-\ref{4} comes from the fact that $\calC$ is without zero divisors and then,
by denominator chasing, linear independances with respect to $\calC$ and $\mathrm{Frac}(\calC)$ are equivalent.
In particular, supposing condition \ref{4}, the family $\{\scal{S}{x}\}_{x\in\calT_n\cup\{1_{\calT_n^*}\}}$
(basic triangle) is $\mathrm{Frac}(\calC)$-linearly independent which imply, by Theorem 1 of \cite{Linz},
condition \ref{5}.

\item Still by Theorem 1 of \cite{Linz}, \ref{5}--\ref{6} are equivalent and then $\{\scal{S}{w}\}_{w\in\calT_n^*}$
is $\mathrm{Frac}(\calC)$-linearly free which induces $\calC$-linear independence (\textit{i.e.} \ref{1}).
\end{itemize}  
\end{enumerate}
\end{proof}

In the sequel, with the notations in Definition \ref{calK}, let
\begin{itemize}
\item $\calF(S):=\mathrm{span}_{\calC}\{\scal{S}{w}\}_{w\in\calT_n^*}$, for $S\in\ncs{\calA}{\calT_n}$,
\item $g$ be the diffeomorphism on $\calV$ acting by pullback on $\{\omega_{i,j}\}_{1\le i<j\le n}$ as follows
\begin{eqnarray}\label{diffeomorphism1}
g^*\omega_{i,j}=\sum_{1\le k<l\le n}\omega_{k,l}h_{i,j\atop k,l},&\mbox{for}&h_{i,j\atop k,l}\in\calK,
\end{eqnarray}

\item $\psi$ be the morphism of algebras $(\ncp{\calC}{\calT_n},{\tt conc})\longrightarrow(\ncs{\calC^{\mathrm{rat}}}{\calT_n},\halfshuffle)$ defined, for any $t_{i,j}\in\calT_n$, as follows\footnote{$\forall t_{i_1,j_1}\ldots t_{i_r,j_r}\calT_n^*,\psi(t_{i_1,j_1}\ldots t_{i_r,j_r})
=\psi(t_{i_1,j_1})\halfshuffle(\psi(t_{i_2,j_2})\ldots(\halfshuffle\psi(t_{i_r,j_r})))$.} (see also \eqref{halfshuffle} for the half-shuffle)
\begin{eqnarray}\label{diffeomorphism2}
\psi(t_{i,j})=\sum_{1\le k<l\le n}t_{k,l}H_{i,j\atop k,l},&\mbox{for}&H_{i,j\atop k,l}\in\ncs{\calC^{\mathrm{rat}}}{\calT_n}.
\end{eqnarray}
\end{itemize}

\begin{example}\label{simple}
For
$\omega_{1,2}(z)=-d\log(z_1-z_2),\omega_{1,3}(z)=-d\log(z_1-z_3),\omega_{2,3}(z)=-d\log(z_2-z_3)$ and $\calT_3=\{t_{1,2},t_{1,3},t_{2,3}\}$, let
\begin{itemize}
\item $g$ be the diffeomorphism on $\tilde{\C_3^*}$ acting on $\{\omega_{i,j}\}_{1\le i<j\le n}$ as follows
\begin{eqnarray*}
g^*\begin{pmatrix}
\omega_{1,2}\cr\omega_{1,3}\cr\omega_{2,3}
\end{pmatrix}
=\begin{pmatrix}
1&0&0\cr0&(z_1-z_2)^{-1}\log((z_2-z_3)^{-1}))&0\cr0&0&1
\end{pmatrix}
\begin{pmatrix}
\omega_{1,2}\cr\omega_{1,3}\cr\omega_{2,3}
\end{pmatrix},
\end{eqnarray*}
\item $\psi:(\ncp{\calC}{\calT_3},{\tt conc})\longrightarrow(\ncs{\calC^{\mathrm{rat}}}{\calT_3},\halfshuffle)$ be the morphism of algebras defined by
$\psi(t_{1,2})=t_{1,2}t_{1,2}^*$ and $\psi(t_{1,3})=t_{1,3}t_{1,2}^*$ and $\psi(t_{2,3})=t_{2,3}t_{2,3}^*$.
\item With the data in previous items, by Example \ref{Excomposite1} and Proposition \ref{melange}, one has
\begin{eqnarray*}
\alpha_{z\varsigma}^z(\psi(t_{1,3})\halfshuffle t_{2,3})
&=&\alpha_{z\varsigma}^z((t_{1,3}t_{1,2}^*)\halfshuffle t_{2,3})\cr
&=&\alpha_{\varsigma}^z(t_{1,3}(t_{1,2}^*\shuffle t_{2,3}))\cr
&=&\int_{z\varsigma}^z-d\log(z_1-z_3)\,\frac{\log((z_2-z_3)^{-1})}{z_1-z_2}\cr
&=&\int_{z\varsigma}^zg^*\omega_{1,3}\cr
&=&\alpha_{g(\varsigma)}^{g(z)}(t_{1,3}).
\end{eqnarray*}
\end{itemize}
\end{example}

Proposition \ref{MRSBTT} holds, in particular, for $C_{\varsigma\path z}$. Hence, one deduces that

\begin{corollary}\label{Chensolution}
\begin{enumerate}
\item The following assertions are equivalent\footnote{In particular, $\calC=\calC_0$
(see Example \ref{calC0}) yielding $\F_{KZ_n}$ in Definition \ref{F}, Corollaries \ref{P}--\ref{C} below.}
\begin{enumerate}
\item The restricted $\shuffle$-morphism $\alpha_{\varsigma}^z$, on $\ncp{\calC}{\calT_n}$, is injective.
\item The family $\{\alpha_{\varsigma}^z(w)\}_{w\in\calT_n^*}$ is $\calC$-linearly free.
\item The family $\{\alpha_{\varsigma}^z(l)\}_{l\in\Lyn\calT_n}$ is $\calC$-algebraically free.
\item The family $\{\alpha_{\varsigma}^z(t)\}_{t\in\calT_n}$ is $\calC$-algebraically free.
\item The family $\{\alpha_{\varsigma}^z(t)\}_{t\in\calT_n\cup\{1_{\calT_n^*}\}}$ is $\calC$-linearly free.
\item $\begin{array}{lll}
\forall E\in e^{\ncs{\calL ie_{\calC}}{\calT_n}},
&\exists\phi\in\mathrm{Aut}(\calF(C_{\varsigma\path z})),
&\phi(C_{\varsigma\path z})=C_{\varsigma\path z}E.\end{array}$
\end{enumerate}

\item The following assertions are equivalent (see Notations in \eqref{Chen_g*}, \eqref{diffeomorphism1}--\eqref{diffeomorphism2})
\begin{enumerate}
\item For any $1\le i<j\le n$ and $1\le k<l\le n$, one has $h_{i,j\atop k,l}(z)=\alpha_{\varsigma}^z(H_{i,j\atop k,l})$.

\item The restricted $\shuffle$-morphism $\alpha_{\varsigma}^z$, on $\ncp{\calC}{\calT_n}$, is injective.

\item The Chen series, of $\{\omega_{i,j}\}_{1\le i<j\le n}$ and along $g_*\varsigma\path z$, satisfies
\begin{eqnarray*}
C_{g_*\varsigma\path z}=\sum_{w\in\calT_n^*}\alpha_{\varsigma}^z(\psi(w))w
=C_{\varsigma\path z}E,&\mbox{where}&E\in e^{\ncs{\calL ie_{\calC}}{\calT_n}}.
\end{eqnarray*}
\end{enumerate}

\item For any $\phi\in\G$, there exists a diffeomorphism $g$ on $\calV$ such that the Chen series,
of $\{\omega_{i,j}\}_{1\le i<j\le n-1}$ along $g_*\varsigma\path z$, can be expressed as follows
\begin{eqnarray*}
C'_{g_*\varsigma\path z}
:=\sum_{w\in\calT_{n-1}^*}\alpha_{g(\varsigma)}^{g(z)}(w)w
=\sum_{w\in\calT_{n-1}^*}\alpha_{\varsigma}^z(w)\phi^{(\varsigma,z)}(w).
\end{eqnarray*}
\end{enumerate}
\end{corollary}

\begin{proof}
The first item is a consequence of Proposition \ref{MRSBTT}.
Applying Propositions \ref{melange0}--\ref{MRSBTT} and Corollary \ref{melange}, one gets the second item.
By duality, one gets
\begin{eqnarray*}
\sum_{w\in\calT_{n-1}^*}\alpha_{\varsigma}^z(w)\phi^{(\varsigma,z)}(w)=\sum_{w\in\calT_{n-1}^*}\alpha_{\varsigma}^z(\check\phi^{(\varsigma,z)}(w))w.
\end{eqnarray*}
Applying the second item with $\psi=\check\phi$, it follows the last item.
\end{proof}

In Proposition \ref{MRSBTT}, the Hausdorff group of $H_{\shuffle}(\calT_n)$ plays the r\^ole of the differential Galois group of \eqref{NCDE} + grouplike solutions, \textit{i.e.} $\mathrm{Gal}(M_n)=e^{\ncs{\calL ie_{\C}}{\calT_n}}$, mapping grouplike solution to another grouplike solution and then leading to the definitions, on the one hand, of the system fundamental of \eqref{NCDE} as $\{C_{\varsigma\path z}\}$ and, on the other hand, of the PV extension related to  \eqref{NCDE} as $\widehat{\calC.\calT_n}\{C_{\varsigma\path z}\}$ \cite{orlando}.

\subsection{Explicit solutions of noncommutative differential equations}\label{Chenseries}
In the sequel, $\{V_k\}_{k\ge0}$ and $\{\hat V_k\}_{k\ge0}$ denote the sequences of series in $\ncs{\calA}{\calT_n}$,
satisfying the recursion in \eqref{S_k} with the following starting conditions being grouplike series:
\begin{eqnarray}
&&V_0(\varsigma,z):=(\alpha_{\varsigma}^z\otimes\mathrm{Id})\calD_{T_n}=\prod\limits_{l\in\Lyn T_n}^{\searrow}e^{\alpha_{\varsigma}^z(S_l)P_l}{\mbox{(decreasing lexicographical}\atop\mbox{ordered product),}}\label{V_0}\\
&&\hat V_0(\varsigma,z):=e^{\sum\limits_{t\in T_n}\alpha_{\varsigma}^z(t)t}
=V_0(\varsigma,z)\mod[\ncs{\calL ie_{\calA}}{T_n},\ncs{\calL ie_{\calA}}{T_n}].\label{hatV_0}
\end{eqnarray}

\begin{remark}\label{nil_app}
\begin{itemize}
\item $V_0$ is the Chen series, of $\{\omega_{k,n}\}_{1\le k\le n-1}$ and along $\varsigma\path z$,
and satisfies the $\chi_a$-growth condition (see by Proposition \eqref{melange0}).
It can be obtained by using the following Picard's iteration, analogous to \eqref{picard0},
which is convergent for the discrete topology but does
not mean that $V_0$ satisfies ${\bf d}S=\bar M_nS$ (see Remark \ref{BE} below)
\begin{eqnarray*}
&F_0(\varsigma,z)=1_{\calH(\calV)},&F_i(\varsigma,z)=
F_{i-1}(\varsigma,z)+\int_{\varsigma}^z\bar M_n(s)F_{i-1}(s),i\ge1.
\end{eqnarray*}

\item  With data in \eqref{data} below, $V_0$ will behave, for\footnote{See Note \ref{NOTE}.} $z_n\to z_{n-1}$,
as the generating series of hyperlogarithms (see \eqref{sghyperlog}--\eqref{DEN} below) and, of course, as the generating series of polylogarithms for $n=3$ (see \eqref{sgLi} below).

\item $\hat V_0$ satisfies the partial differential equation $\partial_nf=\bar M_nf$ and \eqref{hatV_0} is equivalent to a nilpotent structural approximation of order $1$ of $V_0$ \cite{hoangjacoboussous},
\textit{i.e.} $\log\hat V_0=\log V_0\mod[\ncs{\calL ie_{\calA}}{T_n},\ncs{\calL ie_{\calA}}{T_n}]$
 (see also Remark \ref{BE} below).
\end{itemize}
\end{remark}

\begin{definition}\label{phi}
\begin{enumerate}
\item Let $\varphi_{T_n}$ and $\hat\varphi_{T_n}\in\G$ be the ${\tt conc}$-morphisms,
depending on $\varsigma\path z$ subdived by $(\varsigma,s_1,\ldots,s_k,z)$, of $\ncp{\calA}{\calT_n}$ such that $\varphi_{T_n}\equiv\varphi_n\equiv\mathrm{Id}$, over $T_n^*$,
and by\footnote{For any $a,b\in\ncs{\calL ie_{\calA}}{\calT_n}$, one has $e^{-a}be^a=e^{\ad_{-a}}b$ \cite{bourbaki}.} (using the decreasing lexicographical order product)
\begin{eqnarray*}
\varphi_{T_n}^{(\varsigma,z)}=\prod_{l\in\Lyn T_n}^{\searrow}e^{\ad_{-\alpha_{\varsigma}^{s_k}(S_l)P_l}}
&\mbox{and}&\hat\varphi^{(\varsigma,z)}_{T_n}=e^{\sum\limits_{t\in T_n}\ad_{-\alpha_{\varsigma}^{s_k}(S_l)P_l}},
\end{eqnarray*}
over $\calT_{n-1}^*$. They are chronologically defined, for $t_{i_1,j_1}\ldots t_{i_k,j_k}\in\calT_{n-1}^*$ by
\begin{eqnarray*}
\varphi_{T_n}^{(\varsigma,z)}(t_{i_1,j_1}\ldots t_{i_k,j_k})=\varphi_{T_n}^{(\varsigma,s_1)}(t_{i_1,j_1})\cdots\varphi_{T_n}^{(\varsigma,s_k)}(t_{i_k,j_k}),\cr
\hat\varphi_{T_n}^{(\varsigma,z)}(t_{i_1,j_1}\ldots t_{i_k,j_k})=\hat\varphi_{T_n}^{(\varsigma,s_1)}(t_{i_1,j_1})\cdots\hat\varphi_{T_n}^{(\varsigma,s_k)}(t_{i_k,j_k}).
\end{eqnarray*}

\item Let $\varphi_n$ and $\hat\varphi_n$ be the morphisms of $\ncp{\calA}{\calT_n}$ defined, for any $t\in\calT_n$, by
\begin{eqnarray*}
\varphi_n(t)=\varphi_{T_n}(t)\mod\calJ_n&\mbox{and}&\hat\varphi_n(t)=\hat\varphi_{T_n}(t)\mod\calJ_n,
\end{eqnarray*}
where $\calJ_n$ is the ideal of relators on $\{t_{i,j}\}_{1\le i<j\le n}$.
\end{enumerate}
\end{definition}
 
\begin{proposition}\label{volterraexp}
With Notations in Definitions \ref{Chenseriesdef}--\ref{phi} and \eqref{V_0}--\eqref{hatV_0}, one has
\begin{eqnarray*}
\varphi_{T_n}^{(\varsigma,z)}(t_{i_k,j_k})=e^{\ad_{-V_0(\varsigma,s_k)}}
t_{i_k,j_k}&\mbox{and}&\hat\varphi_{T_n}^{(\varsigma,z)}(t_{i_k,j_k})=e^{\ad_{-\hat V_0(\varsigma,s_k)}}t_{i_k,j_k}
\end{eqnarray*}
and there is, on the one hand, $\{\kappa_w\}_{w\in\calT_{n-1}^*}$ and $\{\hat\kappa_w\}_{w\in\calT_{n-1}^*}$, on the other hand, $H$ and $\hat H$ in $\ncs{\calA}{\calT_n}$ satisfying \eqref{NCDE1} such that
\begin{eqnarray*}
\forall w\in\calT_{n-1}^*,\quad
\kappa_w=V_0\varphi_{T_n}(w)&\mbox{and}&\hat\kappa_w=\hat V_0\hat\varphi_{T_n}(w),\\
\sum_{k\ge0}V_k=V_0H&\mbox{and}&\sum_{k\ge0}\hat V_k=\hat V_0\hat H.
\end{eqnarray*}
Moreover, for any $k\ge1$, one has
\begin{eqnarray*}
V_k(\varsigma,z)
=\sum_{w=t_{i_1,j_1}\ldots,t_{i_k,j_k}\in\calT_{n-1}^*}\int_{\varsigma}^z\omega_{i_1,j_1}(s_1)\cdots\int_{\varsigma}^{s_{k-1}}\omega_{i_k,j_k}(s_k)\kappa_w(z,s),\cr
\hat V_k(\varsigma,z)
=\sum_{w=t_{i_1,j_1}\ldots,t_{i_k,j_k}\in\calT_{n-1}^*}\int_{\varsigma}^z\omega_{i_1,j_1}(s_1)\cdots\int_{\varsigma}^{s_{k-1}}\omega_{i_k,j_k}(s_k)\hat\kappa_w(z,s).
\end{eqnarray*}

Reducing by $\calJ_n$, one gets analogous results using respectively $\varphi_n$ and $\hat\varphi_n$ (and then, in this case, one has $\kappa_w=V_0\varphi_n(w)$ and $\hat\kappa_w=\hat V_0\hat\varphi_n(w)$, for $w\in\calT_{n-1}^*$).
\end{proposition}

\begin{proof}
The first result is a consequence of \eqref{a(S)} and \eqref{V_0}--\eqref{hatV_0}.
According to \eqref{S_k}, iterative computations by \eqref{picard1} yield the expected expressions with
\begin{eqnarray*}
H(\varsigma,z)
&=&1_{\calT_n^*}+\sum_{k\ge1}\sum_{t_{i_1,j_1}\ldots t_{i_k,j_k}\in\calT_{n-1}^*}\\
&&\int_{\varsigma}^z\omega_{i_1,j_1}(s_1)\varphi_{T_n}^{(\varsigma,s_1)}(t_{i_1,j_1})\ldots\int_{\varsigma}^{s_{k-1}}
\omega_{i_k,j_k}(s_k)\varphi_{T_n}^{(\varsigma,s_k)}(t_{i_k,j_k})\cr
&=&1_{\calT_n^*}+\sum_{k\ge1}\sum_{t_{i_1,j_1}\ldots t_{i_k,j_k}\in\calT_{n-1}^*}
\int_{\varsigma}^z\omega_{i_1,j_1}(s_1)\ldots\int_{\varsigma}^{s_{k-1}}\omega_{i_k,j_k}(s_k)\\
&&\hskip6cm\varphi_{T_n}^{(\varsigma,z)}(t_{i_1,j_1}\ldots t_{i_k,j_k}),\cr
\hat H(\varsigma,z)
&=&1_{\calT_n^*}+\sum_{k\ge1}\sum_{t_{i_1,j_1}\ldots t_{i_k,j_k}\in\calT_{n-1}^*}\\
&&\int_{\varsigma}^z\omega_{i_1,j_1}(s_1)\hat\varphi_{T_n}^{(\varsigma,s_1)}(t_{i_1,j_1})\ldots\int_{\varsigma}^{s_{k-1}}
\omega_{i_k,j_k}(s_k)\hat\varphi_{T_n}^{(\varsigma,s_k)}(t_{i_k,j_k})\cr
&=&1_{\calT_n^*}+\sum_{k\ge1}\sum_{t_{i_1,j_1}\ldots t_{i_k,j_k}\in\calT_{n-1}^*}
\int_{\varsigma}^z\omega_{i_1,j_1}(s_1)\ldots\int_{\varsigma}^{s_{k-1}}\omega_{i_k,j_k}(s_k)\cr
&&\hskip6cm\hat\varphi_{T_n}^{(\varsigma,z)}(t_{i_1,j_1}\ldots t_{i_k,j_k}).
\end{eqnarray*}
\end{proof}

\begin{theorem}[Volterra expansion like for Chen series]\label{Chen_braids}
With Notations in Definitions \ref{dual_bases}--\ref{phi}, Theorem \ref{diagonal} and Propositions \ref{MRSBTT}--\ref{volterraexp}, $C_{\varsigma\path z}=V_0(\varsigma,z)H(\varsigma,z)$, one has
\begin{eqnarray*}
H(\varsigma,z)=(\alpha_{\varsigma}^z\otimes\mathrm{Id})\lambda(\calM_{\calT_{n-1}}^*)
=(\alpha_{\varsigma}^z\otimes\mathrm{Id})\mathrm{diag}((\lambda_l\otimes\lambda_r)(\calM_{\calT_{n-1}}^*)),\cr
\hat H(\varsigma,z)=(\alpha_{\varsigma}^z\otimes\mathrm{Id})\hat\lambda(\calM_{\calT_{n-1}}^*)
=(\alpha_{\varsigma}^z\otimes\mathrm{Id})\mathrm{diag}((\hat\lambda_l\otimes\lambda_r)(\calM_{\calT_{n-1}}^*)).
\end{eqnarray*}
Reducing by $\calJ_n$, one gets analogous results using respectively $\varphi_n$ and $\hat\varphi_n$.
\end{theorem}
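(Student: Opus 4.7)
The plan is to identify, term by term in a basis of $\calU(\calI_n)$, the explicit Volterra expression of $H$ from Proposition \ref{volterraexp} with the image of the diagonal series $\lambda(\calM_{\calT_{n-1}}^*)$ under $(\alpha_\varsigma^z\otimes\mathrm{Id})$, then deduce the Chen factorization from uniqueness.

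First I would expand, for each letter $t\in\calT_{n-1}$ and each intermediate point $s$ along $\varsigma\path z$, the action
$\varphi_{T_n}^{(\varsigma,s)}(t)=\bigl(\prod_{l\in\Lyn T_n}^{\searrow}e^{\ad_{-\alpha_\varsigma^{s}(S_l)P_l}}\bigr)(t)$
in the basis $\{r(vt)\}_{v\in T_n^*}$ of $\calI_n\cdot t$ given by Proposition \ref{dual_algebras}(1). Comparing with the factorized form of $\lambda$ in Proposition \ref{expression}(1),
$\lambda=(g\otimes f)\calD_{T_n}=\prod_{l\in\Lyn T_n}^{\searrow}e^{a(S_l)\otimes \ad_{P_l}}$,
one recognises $\varphi_{T_n}^{(\varsigma,s)}(t)=(\alpha_\varsigma^s\otimes\mathrm{Id})\lambda(t\otimes t)=\sum_{v\in T_n^*}\alpha_\varsigma^s(a(vt))\,r(vt)$; this is the single-letter identification underlying the whole theorem.

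Next, substituting this expansion into Proposition \ref{volterraexp}'s formula and using that $\varphi_{T_n}$ is a concatenation morphism ($\varphi_{T_n}^{(\varsigma,z)}(t_1\cdots t_k)=\prod_j \varphi_{T_n}^{(\varsigma,s_j)}(t_j)$), I collect the coefficient of $r(v_1t_1)\cdots r(v_kt_k)$ as a nested iterated integral whose integrand at level $j$ is $\alpha_\varsigma^{s_j}(a(v_jt_j))$. The core step is to collapse this into a single iterated integral of a half-shuffle product: iterating Corollary \ref{melange}(2), which states $\alpha_\varsigma^z((tH)\halfshuffle R)=\int_\varsigma^z \omega_t(s)\alpha_\varsigma^s(H)\alpha_\varsigma^s(R)$, one obtains
\[
\int \omega_{t_1}(s_1)\alpha_\varsigma^{s_1}(a(v_1t_1))\cdots\omega_{t_k}(s_k)\alpha_\varsigma^{s_k}(a(v_kt_k))=\alpha_\varsigma^z\bigl(a(v_1t_1)\halfshuffle(\cdots\halfshuffle a(v_kt_k))\bigr),
\]
which is exactly the coefficient obtained by applying $(\alpha_\varsigma^z\otimes\mathrm{Id})$ to $\lambda(\calM_{\calT_{n-1}}^+)$ as given by Proposition \ref{expression}(2). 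Summing over $k\ge 0$ and adding the $k=0$ constant term yields the first equality of the theorem, and the second equality ($\mathrm{diag}((\lambda_l\otimes\lambda_r)(\calM_{\calT_{n-1}}^*))$) is then just Definition \ref{Lambda}(3). The hat-version is obtained by the same argument verbatim, substituting $\hat V_0,\hat\varphi_{T_n},\hat\lambda_l$ everywhere, using that $\hat V_0=V_0\ \mathrm{mod}\ [\ncs{\calL ie_{\calA}}{T_n},\ncs{\calL ie_{\calA}}{T_n}]$ identifies shuffles of letters of $T_n$ (hence $\hat v$ in place of $v$).

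Finally, for the factorization $C_{\varsigma\path z}=V_0 H$, I differentiate the product: $V_0$ satisfies ${\bf d}V_0=\bar M_n V_0$ (since it is the Chen series in the sub-alphabet $T_n$) and $H$ satisfies ${\bf d}H=M_{n-1}^{\varphi_{T_n}^{(\varsigma,z)}}H$ by Proposition \ref{volterraexp}. Because $\varphi_{T_n}^{(\varsigma,z)}$ acts as conjugation by $V_0$ on $\calT_{n-1}$ (this is the content of the identity $\varphi_{T_n}^{(\varsigma,z)}(t)=e^{\ad_{-V_0(\varsigma,z)}}t$ established in Proposition \ref{volterraexp}), one has $V_0\cdot M_{n-1}^{\varphi_{T_n}^{(\varsigma,z)}}=M_{n-1}\cdot V_0$, so ${\bf d}(V_0H)=\bar M_n V_0H+M_{n-1}V_0H=M_n V_0H$, with initial condition $V_0(\varsigma,\varsigma)H(\varsigma,\varsigma)=1$. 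Uniqueness of the Picard iteration \eqref{picard0} in the discrete topology \eqref{discretetopology} then forces $V_0H=C_{\varsigma\path z}$, and Ree's theorem applied to $C_{\varsigma\path z}$ and $V_0$ gives $H=V_0^{-1}C_{\varsigma\path z}\in e^{\ncs{\calL ie_\calA}{\calT_n}}$. Reducing everything modulo $\calJ_n$ replaces $\varphi_{T_n}$ by $\varphi_n$ and $\hat\varphi_{T_n}$ by $\hat\varphi_n$ by Definition \ref{phi}(2).

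The main obstacle lies in step two, namely making rigorous the collapse of nested iterated integrals into one integral over a half-shuffle; this is a combinatorial lemma about compositions of the Zinbiel product $\halfshuffle$ with $\alpha_\varsigma^z$, but the inductive proof is straightforward once one reads Corollary \ref{melange}(2) in the right direction and tracks the signs encoded in $a(\cdot)=(-1)^{|\cdot|}\widetilde{\ (\cdot)\ }$.
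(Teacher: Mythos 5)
Your derivation of the explicit formula for $H$ (and $\hat H$) follows the paper's own proof essentially step for step: the single-letter identification $(\alpha_{\varsigma}^z\otimes\mathrm{Id})\lambda(t\otimes t)=\sum_{v\in T_n^*}\alpha_{\varsigma}^z(a(vt))r(vt)$ via Proposition \ref{expression}, the collapse of the nested Volterra integrals into iterated integrals of half-shuffle products by iterating Corollary \ref{melange}, and the summation over $k\ge0$ against Proposition \ref{expression} and Proposition \ref{volterraexp}. The one place you genuinely depart from the paper is the concluding factorization: the paper obtains $C_{\varsigma\path z}=V_0(\varsigma,z)H(\varsigma,z)$ by a terse appeal to Proposition \ref{MRSBTT} (both sides being grouplike and normalised at $\varsigma$), whereas you differentiate the product and invoke uniqueness of the Picard iteration. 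That argument is sound in substance, but be careful with your justification ``$V_0$ satisfies ${\bf d}V_0=\bar M_nV_0$ since it is the Chen series in the sub-alphabet $T_n$'': Remark \ref{nil_app} (and Remark \ref{BE}) explicitly warn that $V_0$ does \emph{not} satisfy ${\bf d}S=\bar M_nS$ as an identity of total differentials on $\calV$, because $\bar M_n$ is not flat and the Chen series of $\{\omega_{k,n}\}_{1\le k\le n-1}$ is path-dependent. Your computation becomes correct once everything is pulled back along the fixed path $\gamma$, where $V_0$, $H$ and $C_{\varsigma\path z}$ satisfy the corresponding one-parameter ordinary differential equations; read that way, the product rule together with $V_0M_{n-1}^{\varphi_{T_n}}=M_{n-1}V_0$ and uniqueness of the solution with initial condition $1_{\calT_n^*}$ does yield $V_0H=C_{\varsigma\path z}$, and grouplikeness of $H=V_0^{-1}C_{\varsigma\path z}$ follows from Ree's theorem as you state.
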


\begin{proof}
By Proposition \ref{expression}, the images by $\alpha_{\varsigma}^z\otimes\mathrm{Id}$ of $\lambda(t\otimes t)$ and $\hat\lambda(t\otimes t)$,
for $t\in\calT_{n-1}$, are respectively followed (see also Notations in \eqref{barv}, \eqref{antipode_shuffle} and \eqref{rho})
\begin{eqnarray*}
\int_{\varsigma}^z\omega_{i,j}(s)\varphi_{T_n}^{(\varsigma,s)}(t)=(\alpha_{\varsigma}^z\otimes\mathrm{Id})\lambda(t\otimes t)=\sum_{v\in T_n^*}\alpha_{\varsigma}^z(a(vt))r(vt),\cr
\int_{\varsigma}^z\omega_{i,j}(s)\hat\varphi_{T_n}^{(\varsigma,s)}(t)=(\alpha_{\varsigma}^z\otimes\mathrm{Id})\hat\lambda(t\otimes t)=\sum_{v\in T_n^*}\alpha_{\varsigma}^z(a(\hat vt))r(vt).
\end{eqnarray*}
Hence, for any $t_{i_1,j_1}\ldots t_{i_k,j_k}\in\calT_{n-1}^*$, one iteratively obtains
\begin{eqnarray*}
&&\int_{\varsigma}^z\omega_{i_1,j_1}(s_1)\ldots\int_{\varsigma}^{s_{k-1}}\omega_{i_k,j_k}(s_k)
\varphi_{T_n}^{(\varsigma,z)}(t_{i_1,j_1}\ldots t_{i_k,j_k})\cr
&&=\sum_{v_1,\ldots,v_k\in T_n^*\atop t_1,\ldots,t_k\in\calT_{n-1}}
\alpha_{\varsigma}^z(a(v_1t_1)\halfshuffle\cdots\halfshuffle a(v_kt_k))r(v_1t_1)\ldots r(v_kt_k),\\
&&\int_{\varsigma}^z\omega_{i_1,j_1}(s_1)\ldots\int_{\varsigma}^{s_{k-1}}\omega_{i_k,j_k}(s_k)
\hat\varphi_{T_n}^{(\varsigma,z)}(t_{i_1,j_1}\ldots t_{i_k,j_k})\cr
&&=\sum_{v_1,\ldots,v_k\in T_n^*\atop t_1,\ldots,t_k\in\calT_{n-1}}
\alpha_{\varsigma}^z(a(\hat v_1t_1)\halfshuffle\cdots\halfshuffle a(\hat v_kt_k))r(v_1t_1)\ldots r(v_kt_k).
\end{eqnarray*}
By Propositions \ref{expression}, \ref{volterraexp}, summing for $k$ on $\N$, it follows
the expected expressions:
\begin{eqnarray*}
H(\varsigma,z)=1_{\calT_n^*}+\sum_{k\ge1}\sum_{v_1,\ldots,v_k\in T_n^*\atop t_1,\ldots,t_k\in\calT_{n-1}}
\alpha_{\varsigma}^z(a(v_1t_1)\halfshuffle\cdots\halfshuffle a(v_kt_k))r(v_1t_1)\ldots r(v_kt_k),\cr
\hat H(\varsigma,z)
=1_{\calT_n^*}+\sum_{k\ge1}\sum_{v_1,\ldots,v_k\in T_n^*\atop t_1,\ldots,t_k\in\calT_{n-1}}
\alpha_{\varsigma}^{s_1}(a(\hat v_1t_1)\halfshuffle\cdots\halfshuffle a(\hat v_kt_k))r(v_1t_1)\ldots r(v_kt_k).
\end{eqnarray*}
\end{proof}

\begin{remark}\label{delsarte}
\begin{enumerate}
\item In \eqref{picard1}, $\{F_l^{\phi}\}_{k\ge1}$ is image by $\phi$ of $\{F_i\}_{i\ge0}$ in \eqref{picard0}, being viewed as a generalization on noncommutative variable of the Fredholm like transformation, so-called
functional rotation of sequence (of orthogonal functions) with the kernel of rotation $K(s,t)$ \cite{delsarte},
and $M_{n-1}^{\phi}$ is a generalization of such kernel:
\begin{eqnarray*}
\varphi(s)=f(s)+\int_a^bK(s,t)f(t)dt.
\end{eqnarray*}

\item $\sum_{m\ge0}V_m$ is called Volterra expansion (like) of ${\bf d}F=\Omega_nF$ \cite{these,livre}, \textit{i.e.}
\begin{eqnarray*}
\sum_{m\ge0}V_m=V_0H,&\mbox{with the Volterra kernels}&\biggl\{\sum_{w\in\calT_n^m}\kappa_w\biggr\}_{m\ge0}.
\end{eqnarray*}
Replacing letters, in \eqref{NCDE}--\eqref{split}, by vector fields or matrices (see also Remark \ref{Hypergeometric}),
the sequence $\{F_i\}_{i\ge0}$ with matrices in \eqref{picard0} yields the so-called Dyson series associated to \eqref{NCDE} \cite{cartier1,dyson}.
\end{enumerate}
\end{remark}

\begin{corollary}\label{finitefactorization}
With Notations in Definition \ref{Chenseriesdef} and Theorem \ref{Chen_braids}, one has the following
\begin{enumerate}
\item infinite factorization of Chen series:
\begin{eqnarray*}
C_{\varsigma\path z}=\prod_{l\in\Lyn\calT_n}^{\searrow}
e^{\alpha_{\varsigma}^z(S_l)P_l}\in e^{\ncs{\calL ie_{\calA}}{\calT_n}}\quad{\mbox{(decreasing lexicographical}\atop\mbox{ordered product).}}
\end{eqnarray*}
\item finite factorization of Chen series (see also \eqref{V_0} and Remark \ref{nil_app})\footnote{\label{encore}
This can be also considered as \textit{d\'evissage} (see Section \ref{intro}) and recurssively done.}:
\begin{eqnarray*}
C_{\varsigma\path z}=V_0(\varsigma,z)H(\varsigma,z)
\end{eqnarray*}
and then $H(\varsigma,z)\in e^{\ncs{\calL ie_{\calA}}{\calT_n}}$, being $V_0^{-1}(\varsigma,z)C_{\varsigma\path z}$ and satisfying \eqref{NCDE1}.
\end{enumerate}
\end{corollary}

\begin{proof}
These are classic for Chen series (see \cite{these,livre} for example), using
\begin{enumerate}
\item Proposition \ref{melange0}.\ref{shuffle}, the Friedrichs criterion \cite{reutenauer} and \eqref{grouplike}.
\item Theorem \ref{Chen_braids} and then \eqref{V_0}.
\end{enumerate}
\end{proof}

\section{Application to Knizhnik-Zamolodchikov equations}\label{free}
\subsection{Noncommutative generating series of polylogarithms}
For\footnote{As universal differential equation with three singularities, $KZ_3$ leads to the study, substituting letters by matrices of dimension $2$, of hypergeometric functions (and the group $\mathfrak{sl_2}$) \cite{etingof}. In \cite{Terasoma}, matrices in $\calM_{k!,k!}(\C),k\ge2,$ (considered again as letters) lead to Selberg integrals over $k-1$ marked points on the sphere or disk.} $KZ_3$ (see Examples \ref{$KZ_3$}--\ref{$KZ_3$bis}), essentially interested in solutions of \eqref{$(DE)$} over $]0,1[$ and via the involution $s\longmapsto1-s$, Dridfel'd advocated the following solution in $\ncs{\calH(\widetilde{\C_*^3})}{\calT_3}$ \cite{drinfeld1}:
\begin{eqnarray}\label{solDrinfeld}
F(z)=(z_1-z_2)^{(t_{1,2}+t_{1,3}+t_{2,3})/2{\rm i}\pi}G((z_3-z_2)/(z_1-z_2)),
\end{eqnarray}
where $G$, belonging to $\ncs{\calH(\widetilde{\C\setminus\{0,1\}})}{t_{1,2},t_{2,3}}$, satisfies the
noncommuative differential using the connection $N_2$ determined in Example \ref{$KZ_3$}
\begin{eqnarray}\label{$(DE)$}
dG(s)=N_2(s)G(s).
\end{eqnarray}
Without explaining any method to obtain\footnote{In \cite{drinfeld1},
neither be constructed such expression of $\Phi_{KZ}$ nor be made explicit $G_0$ or $G_1$.

A proof that \eqref{solDrinfeld} is the limit of $\{V_l\}_{l\ge0}$ (in Example \ref{$KZ_3$bis}) is provided in Appendix \ref{AppendixB}.

See also \eqref{abelianization} below for an approximation solution of \eqref{$(DE)$}--\eqref{asymcond} and an identification
of the coefficients of $\log\Phi_{KZ}$ in \cite{drinfeld1}.} \eqref{solDrinfeld}, he stated that \eqref{$(DE)$} admits a unique
solution, $G_0$ (resp. $G_1$), satisfying the following asymptotic condition \cite{drinfeld1}
\begin{eqnarray}\label{asymcond}
G_0(s)\sim_0e^{x_0\log(s)}=s^{x_0}&\mbox{(resp. }G_1(s)\sim_1e^{-x_1\log(1-s)}=(1-s)^{-x_1}),
\end{eqnarray}
and there is unique grouplike series $\Phi_{KZ}\in\ncs{\R}{X}$ such that $G_0=G_1\Phi_{KZ}$.
This series satisfies a system of algebraic relations (duality, hexagonal and pentagonal) \cite{cartier2,drinfeld1},
so-called Drinfel'd series or Drinfel'd associator \cite{cartier2}.

In \cite{drinfeld1}, the coefficients $\{c_{k,l}\}_{k,l\ge0}$ of $\log\Phi_{KZ}$ are identified as follows
\begin{itemize}
\item Setting $A:=t_{1,2},B:=t_{2,3}$ and supposing that $[A,B]=0$, Drinfel'd proposed $z^{A/2{\rm i}\pi}(1-z)^{B/2{\rm i}\pi}$
as solution\footnote{In \cite{drinfeld1}, solution for \eqref{$(DE)$}--\eqref{asymcond} and method providing \eqref{solDrinfeld}
was not described.} of \eqref{$(DE)$}, over $]0,1[$, satisfying standard asymptotic conditions \eqref{asymcond}.
Such approximation solution of $KZ_3$ (a grouplike series on $\ncs{\calH(\widetilde{\C_*^3})}{\calT_3}$) for which the logarithm
belongs then to the following partial abelianization (see also Remark \eqref{BE} below)
\begin{eqnarray}\label{abelianization}
\ncs{\calL ie_{\calH(\widetilde{\C_*^3})}}{t_{1,2},t_{1,3},t_{2,3}}/
[\ncs{\calL ie_{\calH(\widetilde{\C_*^3})}}{t_{1,2},t_{2,3}},\ncs{\calL ie_{\calH(\widetilde{\C_*^3})}}{t_{1,2},t_{2,3}}]
\end{eqnarray}
and will be examined, as application of \eqref{S_k} and \eqref{hatV_0}, in Section \ref{Knizhnik-Zamolodchikov}.

\item Then setting $\bar A=A/2{\rm i}\pi$ and $\bar B=B/2{\rm i}\pi$, he also proposed, over $]0,1[$,
the standard solutions $G_0=z^{\bar A}(1-z)^{\bar B}V_0(z)$ and $G_1=z^{\bar A}(1-z)^{\bar B}V_1(z)$,
where $V_0$ and $V_1$ have continuous extensions to $]0,1[$ and is grouplike solution of the following
noncommutative differential equation, with $V_0(0)=V_1(1)=1$
in the topological free Lie algebra, $\mathfrak p:=\mathrm{span}\{\ad_A^k\ad_B^l[A,B]\}_{k,l\ge0}$,
\begin{eqnarray}\label{approx_Drinfeld}
{\bf d}S(z)=Q(z)S(z),&\mbox{where}&Q(z):=e^{\ad_{-\log(1-z)\bar B}}e^{\ad_{-\log(z)\bar A}}\Frac{\bar B}{z-1}\in\mathfrak p.
\end{eqnarray}

\item Since $G_0=G_1\Phi_{KZ}$ then $\Phi_{KZ}=V(0)V(1)^{-1}$, where $V$ is a solution of \eqref{approx_Drinfeld}
and then, by identification %, the coefficients $\{c_{k,l}\}_{k,l\ge0}$ of $\log\Phi_{KZ}$ are obtained,
in the abelianization $\mathfrak p/[\mathfrak p,\mathfrak p]$, as follows
\begin{eqnarray}\label{logPhiKZ}
\log\Phi_{KZ}=\sum_{k,l\ge0}c_{k,l}B^{k+1}A^{l+1}=\int_0^1Q(z)dz\mod[\mathfrak p,\mathfrak p]\cr
=\int_0^1e^{\ad_{-\log(1-z)\bar B}}e^{\ad_{-\log(z)\bar A}}\frac{\bar Bdz}{z-1}\mod[\mathfrak p,\mathfrak p]
\end{eqnarray}
and by serial expansions of exponentials, one deduces that
\begin{eqnarray}
&&\log\Phi_{KZ}=\sum_{k,l\ge0}\frac1{l!k!}\int_0^1\log^l\frac1{1-z}\log^k\Big(\frac1z\Big)
\ad_{\bar B^k\bar A^l}\bar B\frac{dz}{z-1}\mod[\mathfrak p,\mathfrak p].
\end{eqnarray}

\item The following divergent (iterated) integral is regularized\footnote{The readers are invited to consult \cite{CM} for a
comparison of these regularized values yielding expressions of $\Phi_{KZ}$ and $\log\Phi_{KZ}$, in which involve polyzetas.} by
\begin{eqnarray}
&\displaystyle
c_{k,l}=\frac{1}{(2{\rm i}\pi)^{k+l+2}(k+1)!l!}\int_0^1\log^l\Big(\frac1{1-z}\Big)\frac{dz}{z-1}
&\Big(\bar B^k\bar A^l\bar B=\frac{B^kA^lB}{(2{\rm i}\pi)^{k+l+1}}\Big)
\end{eqnarray}
and, by a Legendre's formula\footnote{\textit{i.e.}
the Taylor expansion of $\log\Gamma(1-z)$ involving only the real numbers $\{\zeta(k)\}_{k\ge2}$ and $\gamma$
(as regularized value of the harmonic series $1+2^{-1}+3^{-1}+\ldots$).}, Drinfel'd stated that previous process is
equivalent to the following identification\footnote{Note that the summation on right side starts with $n=2$
and then $\gamma$ could not be appeared in the regularization proposed in \cite{drinfeld1}.} \cite{drinfeld1}:
\begin{eqnarray}
1+\sum_{k,l\ge0}c_{k,l}B^{k+1}A^{l+1}=\exp\sum_{n\ge2}\frac{\zeta(n)}{(2{\rm i}\pi)^nn}(B^n+A^n-(B+A)^n).
\end{eqnarray}
\end{itemize}

With $X=\{x_0,x_1\}$ ($x_0\prec x_1$), the series $\Phi_{KZ}$ is completely studied using polylogarithms defined by $\Li_{1_{X^*}}=1_{\calH(\widetilde{\C\setminus\{0,1\}})},\allowbreak\Li_{x_0}(s)=\log(s),\Li_{x_1}(s)=\log(1-s)$ and, for any $x_iw\in\Lyn X\setminus X$, (see \cite{CM})
\begin{eqnarray}
\Li_{x_iw}(s)=\int_0^s\omega_i(\sigma)\Li_w(\sigma),
&\mbox{where}&\left\{\begin{array}{l}\omega_0(s)=s^{-1}ds,\cr\omega_1(s)=(1-s)^{-1}ds.\end{array}\right.\label{Li}
\end{eqnarray}
In particular, $\{\Li_l\}_{l\in\Lyn X}$ (resp. $\{\Li_w\}_{w\in X^*}$) is algebraically (resp. linearly) free, over $\C$, and the noncommutative series of $\{\Li_w\}_{w\in X^*}$ is grouplike (see Proposition \ref{MRSBTT}), as being the actual solution of \eqref{$(DE)$} satisfying the asymptotic conditions \eqref{asymcond} \cite{FPSAC98,CM} (using the decreasing lexicographical order product)
\begin{eqnarray}
&&\L:=\sum_{w\in X^*}\Li_ww\label{L}=\prod_{l\in\Lyn X}^{\searrow}e^{\Li_{S_l}P_l}
\mbox{ and }\left\{\begin{array}{lcc}
\lim\limits_{s\to0}\L(s)e^{-x_0\log(s)}=1_{X^*},\cr
\lim\limits_{s\to1}e^{x_1\log(1-s)}\L(s)=\Phi_{KZ},
\end{array}\right.\label{sgLi}
\end{eqnarray}
where $\{P_l\}_{l\Lyn X}$ (resp. $\{S_l\}_{l\Lyn X}$) is linear basis of $\ncp{\calL ie_{\Q}}{X}$ (resp. $\mathrm{Sh}_{\Q}(X)$) and
\begin{eqnarray}\label{Z}
\Phi_{KZ}:=\prod_{l\in\Lyn X\setminus X}^{\searrow}e^{\Li_{S_l}(1)P_l},&\mbox{with}&
\left\{\begin{array}{lcr}x_0&=&t_{1,3}/2{\rm i}\pi,\cr x_1&=&-t_{2,3}/2{\rm i}\pi,
\end{array}\right.
\end{eqnarray}
admitting $\{\Li_l(1)\}_{l\in\Lyn X\setminus X}$ as convergent\footnote{For this point, Lyndon words are more efficient for
checking the convergence of $\{\Li_w(1)\}_{w\in X^*}$ (see \cite{CM}) using a Radford's theorem \cite{reutenauer}.}
 coordinates and the coordinates $\{\scal{\Phi_{KZ}}{w}\}_{w\in X^*}$ as the finite parts\footnote{These coefficients
are convergent and regularized divergent polyzetas \cite{CM,lemurakami}.} of the singular expansions at $z=1$ of $\{\Li_w\}_{w\in X^*}$
in the comparison scale $\{(1-z)^{-a}\log^b(1-z)\}_{a,b\in\N}$ (see \eqref{sgLi}). Moreover, in virtue of
\eqref{sgLi}, $\L(({z_3-z_2})/({z_1-z_2}))$ is grouplike solution of $KZ_3$. So does \eqref{solDrinfeld}, for which any other
grouplike solution of $KZ_3$ can be deduced by right multiplication by constant grouplike series as treated in Appendix \ref{AppendixB} below.

\subsection{Noncommutative generating series of hyperlogarithms}
Recall also that, after $KZ_3$, Dridfel'd proposed asymptotic solutions, for $KZ_4$, on different zones in the
region $\{z\in\R^4|z_1<z_2<z_3<z_4\}$ \cite{drinfeld1} and exact solutions, as in \eqref{solDrinfeld},
are not provided yet. It was a break with respect to the strategy in previous cases.
Several works tried to advance on the resolution of $KZ_n$ (for $n\ge4$). Indeed, it was studied the Dirichlet
functions $\{\mathrm{Di}_w(F;s)\}_{w\in X}$ (and their parametrization)
indexed by words in $X=\{x_i\}_{0\le i\le N}$ (totally ordered by $x_0\prec\ldots\prec x_N$),
\textit{i.e.} iterated integrals of the following holomorphic $1$-forms \cite{FPSAC95,FPSAC96}
\begin{eqnarray}\label{omega}
\omega_0(s)=\frac{ds}s,&\omega_i(s)=F_i(s)ds,&\mbox{where }F_i(s)=\sum_{k\ge1}f_{i,k}z^k,0\le i\le N.
\end{eqnarray}
In particular, for singularities in $\Sigma_N=\{0,a_1,\ldots,a_N\}$ (in bijection with $X$) and
\begin{eqnarray}\label{omega1}
F_i(s)=(s-a_i)^{-1},&0\le i\le N,
\end{eqnarray}
these correspond to Lappo-Danilevsky's hyperlogarithms\footnote{and, of course, colored polylogarithms for the case of roots of unity, \textit{i.e} $a_i=e^{2\mathrm{i}\pi}/N$ \cite{legsloops}.} \cite{Linz}
Moreover, abuse ratings for convenience, hyperlogarithms are defined by $\Li_{1_{X^*}}=1_{\calH(\widetilde{\C\setminus\Sigma_N})}$ and $\Li_{x_i}(s)=\log(s-a_i)$ ($1\le i\le N$) and, for any Lyndon work $x_iw\in\Lyn X\setminus X$, by
\begin{eqnarray}\label{hyperlogarithms}
\Li_{x_iw}(s)=\int_0^s\omega_i(\sigma)\Li_w(\sigma),&\mbox{where}&\omega_i(s)=\frac{ds}{s-a_i}.
\end{eqnarray}

These hyperlogarithms $\{\Li_l\}_{l\in\Lyn X}$ (resp. $\{\Li_w\}_{w\in X^*}$) are algebraically (resp. linearly) free over $\C$ \cite{Linz},
\textit{i.e.} the character $\Li_{\bullet}$ of $(\ncp{\C}{X},\shuffle,1_{X^*})$ (see \eqref{hyperlogarithms}) is injective and its graph,
viewed as noncommutative generating series, is grouplike and can be put (using the decreasing lexicographical order product) in the MRS form as follows \cite{Linz}
(see also Proposition \ref{MRSBTT} below)
\begin{eqnarray}\label{sghyperlog}
\L:=\sum_{w\in X^*}\Li_ww=\prod_{l\in\Lyn X}^{\searrow}e^{\Li_{S_l}P_l}.
\end{eqnarray}
This series belongs to $\ncs{\calH(\widetilde{\C\setminus\Sigma_N})}{X}$ (while, as already said, solutions of \eqref{KZn} belong
to $\ncs{\calH(\widetilde{\C_*^n})}{\calT_n}$) and, by \eqref{omega}--\eqref{omega1}, satisfies the following differential equation
\begin{eqnarray}\label{DEN}
d\L(s)=(x_0\omega_0(s)+x_1\omega_1(s)+\ldots x_N\omega_N(s))\L(s),
\end{eqnarray}
and quite involves in the resolution of \eqref{KZn} according to \eqref{N}--\eqref{Nbis}.
Indeed, for $N=n-2,a_k=z_k$ ($1\le k\le n-2$) and substituting
$x_0={t_{n-1,n}}/{2{\rm i}\pi},x_k=-{t_{k,n}}/{2{\rm i}\pi}$
(for $k=1,..,n-2$), $\bar M_n$ in \eqref{NCDE} induces the following simpler expression for $N_{n-1}$ (given in \eqref{Nbis})
as the connection of \eqref{DEN} satisfied by $\L$ (given in \eqref{hyperlogarithms}--\eqref{sghyperlog}):
\begin{eqnarray}\label{Nter}
N_{n-1}(s)=x_0\frac{ds}{s}+\sum_{k=1}^{n-2}x_k\frac{ds}{a_k-s}&\mbox{and then}&d\L(s)=N_{n-1}(s)\L(s).
\end{eqnarray}
This showed, in fact, the grouplike series $\L$ in \eqref{sghyperlog} (resp. \eqref{sgLi}) is not but normalizes
the Chen series, of $\{\omega_i\}_{0\le i\le N}$ in \eqref{omega1} (resp. $\{\omega_i\}_{0\le i\le1}$ in \eqref{Li})
and along $0\path z$, in which the integral $\int\limits_0^z\omega_0(s)$, for example, is not defined.

\subsection{Knizhnik-Zamolodchikov equations}\label{Knizhnik-Zamolodchikov}
Ending this note, let $p$ be the projection $\widetilde{\C_*^n}\longrightarrow\C_*^n$ and let us consider the following affine plans
\begin{eqnarray}
(P_{i,j}):z_i-z_j=1,&\mbox{for}&1\le i<j\le n.
\end{eqnarray}

Let us consider
\begin{eqnarray}\label{data}
\left\{\begin{array}{lcl}
u_{i,j}(z)=(z_i-z_j)^{-1},&\mbox{for}&1\le i,j\le n,\cr
\omega_{i,j}(z)=u_{i,j}(z)d(z_i-z_j),&\mbox{for}&1\le i<j\le n,
\end{array}\right.\end{eqnarray}
and then the Chen series $C_{z^0\path z}$, of the holomorphic $1$-forms $\{d\log(z_i-z_j)\}_{1\le i<j\le n}$ and
along the path $z^0\path z$ over $\calV:=\widetilde{\C_*^n}$. As in Section \ref{intro}, let $\calA:=\calH(\calV)$.

\begin{remark}\label{rk2}
Let $k\ge1,t_{i,j}\in\calT_n,z^0\in P_{i,j}$. Then\footnote{
$\begin{array}{lcr}
\log(z_i-z_j)=\sum_{k\ge1}(-1)^{k-1}((z_i-z_j)-1)^k/k,&\mbox{for}&\abs{z_i-z_j}<1.
\end{array}$} $\alpha_{z^0}^z(t_{i,j}^k)=\log^k(z_i-z_j)/{k!}$.
\end{remark}

\begin{definition}[normalized Chen series]\label{F}
Let $F_{\bullet}:(\ncp{\C}{\calT_n},\shuffle,1_{\calT_n^*})\longrightarrow(\calA,*,1_{\calA})$
is the character defined by $F_{1_{\calT_n^*}}=1_{\calA}$ and $F_{t_{i,j}}(z)=\log(z_i-z_j)$ ($t_{i,j}\in\calT_n$) and, for any $t_{i,j}w\in\Lyn\calT_n\setminus\calT_n$ and $z^0$ moving towards $0$, by
\begin{eqnarray*}
F_{t_{i,j}w}(z)=\int_{z^0}^z\omega_{i,j}(s)F_w(s).
\end{eqnarray*}
Let $\F_{KZ_n}$ be the graph of $F_{\bullet}$ (\textit{i.e.}
the noncommutative generating series of $\{F_w\}_{w\in\calT_n^*}$).
\end{definition}

\begin{remark}
\begin{enumerate}
\item If $F\in\calA$ and $F$ is expanded as follows\footnote{The coefficients
$f(n_{i,j};1\le i<j\le n)$'s are indexed by integers $n_{i,j}>0$, for $1\le i<j\le n$.}
\begin{eqnarray*}
F(z)=\sum_{n_{i,j}\ge1\atop1\le i<j\le n}f(n_{i,j};1\le i<j\le n)\prod_{1\le i<j\le n}(z_i-z_j)^{n_{i,j}}
\end{eqnarray*}
then, for any $k\ge0$ and $(i_0,j_0)$ such that $1\le i_0<j_0\le n$, one has
\begin{eqnarray*}
\lim_{z_{j_0}\to z_{i_0}}(z_{i_0}-z_{j_0})^kF(z)=0.
\end{eqnarray*}

\item By a Radford's theorem \cite{reutenauer}, $F_w,w\in\calT_n^*,$ is polynomial on
$\{F_l\}_{l\in\Lyn\calT_n}$ and depends on the differences $\{z_i-z_j\}_{1\le i<j\le n}$.
In particular, for $w\in\calT_n^+$, by induction on $\abs{w},F_w$ can be expanded by (see the previous item)
\begin{eqnarray*}
F_w(z)=\sum_{n_{i,j}\ge1\atop1\le i<j\le n}f_w(n_{i,j};1\le i<j\le n)\prod_{1\le i<j\le n}(z_i-z_j)^{n_{i,j}}
\end{eqnarray*}
and $F_{t_{i,j}^k}(z)=\alpha_{z^0}^z(t_{i,j}^k)$, for $z^0\in P_{i,j},t_{i,j}\in\calT_n,k\ge1$ (see also Remark \ref{rk2}).

\item By \eqref{Freidrichs} and Proposition \ref{MRSBTT}, multiplying on the right of the Chen series,
of $\{d\log(z_i-z_j)\}_{1\le i<j\le n}$ and along $z^0\path z$ over $\widetilde{\C_*^n}$, by
$\F_{KZ_n}(z^0)\in\{e^C\}_{C\in\ncs{\calL ie_{\C}}{\calT_n}}$, $\F_{KZ_n}(z)$ normalizes
$C_{z^0\path z}$ and satisfies \eqref{KZn}.
\end{enumerate}
\end{remark}

According to \eqref{normalization0}--\eqref{normalization} and Theorem \ref{diagonal},
the image of $\calD_{\calT_n}$ by $F_{\bullet}\otimes\mathrm{Id}$ yields

\begin{proposition}[factorizations of normalized Chen series]\label{sol}
\begin{enumerate}
\item One has
\begin{eqnarray*}
\F_{KZ_n}&=&\prod_{l\in\Lyn\calT_{n-1}}^{\searrow}e^{F_{S_l}P_l}
\Big(\prod_{l=l_1l_2\atop{l_2\in\Lyn\calT_{n-1},l_1\in\Lyn T_n}}^{\searrow}e^{F_{S_l}P_l}\Big)
\prod_{l\in\Lyn T_n}^{\searrow}e^{F_{S_l}P_l}\cr
&=&\prod_{l\in\Lyn T_n}^{\searrow}e^{F_{S_l}P_l}\cr
&\times&\underbrace{\Big(1_{\calT_n^*}
+\sum_{v_1,\ldots,v_k\in T_n^*,k\ge1\atop t_1,\ldots,t_k\in\calT_{n-1}}
F_{a(v_1t_1)\halfshuffle\ldots\halfshuffle a(v_kt_k)}r(v_1t_1)\ldots r(v_kt_k)\Big)}_{\mbox{functional expansion of solution of $KZ_{n-1}$}},
\end{eqnarray*}
and, as image by $F_{\bullet}\otimes\mathrm{Id}$ of $\log\calD_{\calT_n}$ in \eqref{logD},
$\log\F_{KZ_n}$ is primitive, for $\Delta_{\shuffle}$.

\item Modulo $[\ncs{\calL ie_{1_{\calA}}}{T_n},\ncs{\calL ie_{1_{\calA}}}{T_n}]$, one also has
\begin{eqnarray*}
\F_{KZ_n}&\equiv&e^{\sum_{t\in T_n}F_tt}\big(1_{\calT_n^*}\\
&+&
\sum_{k\ge1}\sum_{v_1,\ldots,v_k\in T_n^*\atop t_1,\ldots,t_k\in\calT_{n-1}}F_{a(\hat v_1t_1)\halfshuffle(\ldots\halfshuffle(a(\hat v_kt_k))\ldots)}r(v_1t_1)\ldots r(v_kt_k)\big).
\end{eqnarray*}
\end{enumerate}
\end{proposition}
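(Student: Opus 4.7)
The strategy is to derive both statements from the factorizations of the diagonal series $\calD_{\calT_n}$ supplied by Theorem~\ref{diagonal}, by applying the character map $F_{\bullet}\otimes\mathrm{Id}$.

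First I would verify that $\F_{KZ_n}$ is literally the image $(F_{\bullet}\otimes\mathrm{Id})(\calD_{\calT_n})$: expanding $\calD_{\calT_n}=\sum_{w}w\otimes w$ and applying this map recovers $\sum_{w}F_w\,w$, which is $\F_{KZ_n}$ by Definition~\ref{F}. The $1$-forms $d\log(z_i-z_j)$ are closed and holomorphic on $\widetilde{\C_*^n}$ and the base-point of $F_{\bullet}$ is taken inside $\bigcap_{i<j}P_{i,j}$, so Proposition~\ref{melange0} (Chen's lemma) applies, showing that $F_{\bullet}$ is a shuffle character on $(\ncp{\C}{\calT_n},\shuffle,1_{\calT_n^*})$. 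Because $\calA$ is commutative, every exponential $e^{S_l\otimes P_l}$ in $\calA\,\hat\otimes\,\ncs{\calA}{\calT_n}$ maps to $e^{F_{S_l}P_l}$, and each half-shuffle summand $a(v_1t_1)\halfshuffle\cdots\halfshuffle a(v_kt_k)\otimes r(v_1t_1)\cdots r(v_kt_k)$ maps to $F_{a(v_1t_1)\halfshuffle\cdots\halfshuffle a(v_kt_k)}\,r(v_1t_1)\cdots r(v_kt_k)$. Substituting these images into the two factorizations of $\calD_{\calT_n}$ in Theorem~\ref{diagonal} yields the two displayed expressions of item~(1). Moreover $\F_{KZ_n}$ is grouplike for $\Delta_{\shuffle}$ because $F_{\bullet}$ is a shuffle character, so by Ree's theorem $\log\F_{KZ_n}$ is primitive for $\Delta_{\shuffle}$.

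For item~(2), the congruence modulo $[\ncs{\calL ie_{\calA}}{T_n},\ncs{\calL ie_{\calA}}{T_n}]$, I would proceed in two independent reductions. For the prefactor, each PBW generator $P_l$ with $l\in\Lyn T_n$ and $|l|\geq 2$ is an iterated Lie bracket of letters of $T_n$, and hence lies in $[\ncs{\calL ie_{\calA}}{T_n},\ncs{\calL ie_{\calA}}{T_n}]$; only Lyndon letters $l=t$ survive, and as letters of $T_n$ commute pairwise in the quotient, the ordered product $\prod_{l\in\Lyn T_n}^{\searrow}e^{F_{S_l}P_l}$ collapses to $e^{\sum_{t\in T_n}F_t\,t}$, recovering the relation $V_0\equiv\hat V_0$ recorded in \eqref{hatV_0} and Remark~\ref{nil_app}. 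For the half-shuffle sum, I would invoke the alternative form $\hat\lambda(\calM_{\calT_{n-1}}^+)$ from Proposition~\ref{expression}: since $[\ad_s,\ad_{s'}]=\ad_{[s,s']}$ with $[s,s']\in[\ncs{\calL ie_{\calA}}{T_n},\ncs{\calL ie_{\calA}}{T_n}]$ whenever $s,s'\in T_n$, the operators $\ad_s$ commute pairwise in the quotient, so each right-normed bracket $r(vt)$ depends modulo the ideal only on the multiplicity vector of $v\in T_n^*$. Re-indexing the sum over $v_1,\dots,v_k$ by letter-content then permits the substitution $a(v_it_i)\to a(\hat v_it_i)$ in the left tensor factor without altering the pairing with $r(v_1t_1)\cdots r(v_kt_k)$ modulo the ideal.

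The hard part will be to make this second reduction bookkeeping-tight: the multinomial weights arising when words of $T_n^*$ are grouped by letter-content must exactly match the combinatorial definition $\hat v=\shuffle_{t\in T_n}t^{|v|_t}$, so that the passage from $\lambda$ to $\hat\lambda$ in Definition~\ref{Lambda} is precisely compensated by the symmetrization forced by the commutator ideal. Once that equality is verified in each individual multiplicity stratum, summing over $k\geq 1$ and combining with the prefactor reduction delivers the announced congruence.
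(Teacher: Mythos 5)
Your treatment of item (1) is correct and is essentially the paper's own argument: Proposition \ref{sol} is obtained there precisely as the image of the two factorizations of $\calD_{\calT_n}$ in Theorem \ref{diagonal} under $F_{\bullet}\otimes\mathrm{Id}$, using that $F_{\bullet}$ is a $\shuffle$-character (Definition \ref{F}), so that $e^{S_l\otimes P_l}$ maps to $e^{F_{S_l}P_l}$ and each half-shuffle summand maps to $F_{a(v_1t_1)\halfshuffle\cdots\halfshuffle a(v_kt_k)}\,r(v_1t_1)\cdots r(v_kt_k)$; the primitivity of $\log\F_{KZ_n}$ via Ree's theorem is the same observation as the paper's appeal to \eqref{logD}.

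For item (2) there is a genuine gap, and it is exactly the step you defer as ``the hard part''. The collapse of the prefactor to $e^{\sum_{t\in T_n}F_tt}$ is fine (this is \eqref{hatV_0}), and it is true that modulo the ideal $r(vt)\equiv r(v't)$ whenever $v'$ is a rearrangement of $v\in T_n^*$. But the substitution $a(v_it_i)\to a(\hat v_it_i)$ under a sum over \emph{words} $v_i$ is not weight-preserving. Grouping by letter content, the reduction of the exact formula of item (1) produces, for each content class $c$, the single term $\sum_{v:\,\mathrm{content}(v)=c}F_{a(vt)}=F_{a(\hat v_ct)}$ (since $a$ is linear and $\sum_{v:\,\mathrm{content}(v)=c}v=\shuffle_{u\in T_n}u^{c_u}=\hat v_c$), whereas the right-hand side of item (2), read literally as a sum over words, contributes $\abs{v}!/\prod_{u\in T_n}\abs{v}_u!$ copies of $F_{a(\hat v_ct)}$ in that same stratum. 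These counts differ as soon as $v_i$ has length at least two with distinct letters, so the multinomial bookkeeping is not a formality to be ``verified'' after the fact: it is the entire content of the congruence. To close the gap you must either show how these factors are absorbed --- the natural way is to compare both sides with the expansion of $e^{\ad_{-\sum_{u\in T_n}F_uu}}t_{i,j}$, which is how $\hat\varphi_{T_n}$ is defined and how the paper actually reaches the hatted formula, through $\hat\lambda$ in Proposition \ref{expression}, $\hat V_0\hat H$ in Proposition \ref{volterraexp} and Theorem \ref{Chen_braids} --- or reinterpret the hatted sum as running over the family $\hat\calB$ of Definition \ref{dual_bases} rather than over all tuples of words. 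As written, your proof of item (2) asserts the conclusion at the only point where a computation is required.
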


\begin{corollary}\label{P}
With Notation in Example \ref{calC0}, one has
\begin{enumerate}
\item The morphism $F_{\bullet}:(\ncp{\calC_0}{\calT_n},\shuffle)\longrightarrow
(\mathrm{span}_{\calC_0}\{F_w\}_{w\in\calT_n^*},\times)$ is injective.

\item Let $\calK_{T_n}$ and $\calK_{\calT_{n-1}}$ be the algebras generated, respectively, by $\{F_l\}_{l\in\Lyn T_n}$
and $\{F_l\}_{l\in\Lyn\calT_{n-1}}$. Then $\calK_{T_n}$ and $\calK_{\calT_{n-1}}$ are $\calC_0$-algebraically disjoint.

\item There exists $E\in e^{\ncs{\calL ie_{\calK_{T_n}}}{\calT_{n-1}}}$ such that, for $z^0\to0$,
\begin{eqnarray*}
\F_{KZ_{n-1}}(z)E&=&1_{\calT_n^*}+\sum_{k\ge1}\sum_{t_{i_1,j_1}\ldots t_{i_k,j_k}\in\calT_{n-1}^*}
\int_{z^0}^z\omega_{i_1,j_1}(s_1)\ldots\int_{z^0}^{s_{k-1}}\omega_{i_k,j_k}(s_k)\\
&&\hskip58mm\varphi_{T_n}^{({z^0},z)}(t_{i_1,j_1}\ldots t_{i_k,j_k}).\cr
\F_{KZ_n}&=&\Big(\prod_{l\in\Lyn T_n}^{\searrow}e^{F_{S_l}P_l}\Big)\F_{KZ_{n-1}}E\quad{\mbox{(decreasing lexicographical}\atop\mbox{ordered product).}}
\end{eqnarray*}

\item $\{\ad_{-T_n}^{k_1}t_1\ldots\ad_{-T_n}^{k_p}t_p\}_{t_1,\ldots,t_p\in\calT_{n-1}}^{k_1,\ldots,k_p\ge0,p\ge1}$
of $\calU(\calI_N)/[\ncs{\calL ie_{1_{\calA}}}{T_n},\ncs{\calL ie_{1_{\calA}}}{T_n}]$ is dual to
$\{(-t_1\hat T_n^{k_1})\halfshuffle\cdots\halfshuffle(-t_k\hat T_n^{k_p})\}_{t_1,\ldots,t_k\in\calT_{n-1}}^{k_1,\ldots,k_p\ge0,p\ge1}$
of $\calU(\calI_N)^{\vee}$.
\end{enumerate}
\end{corollary}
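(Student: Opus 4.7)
The plan is to derive items (1) and (2) as applications of Proposition \ref{MRSBTT} to the grouplike series $\F_{KZ_n}$. Since \eqref{KZn} is a specialization of \eqref{NCDE} and $\F_{KZ_n}$ is grouplike (by Proposition \ref{sol}, its logarithm is primitive for $\Delta_{\shuffle}$), the six equivalent conditions of Proposition \ref{MRSBTT}(\ref{item}) apply. To trigger them, I would verify directly the condition that $\{\omega_{i,j}\}_{1\le i<j\le n}$ is $\calC_0$-linearly free and that $d\,\mathrm{Frac}(\calC_0)\cap\mathrm{span}_{\C}\{\omega_{i,j}\}=\{0\}$; this is immediate from the pole structure of $\omega_{i,j}=d\log(z_i-z_j)$ along the hyperplanes $\{z_i=z_j\}$ and the explicit form of $\calC_0$ in Example \ref{calC0}. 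Equivalence then yields $\calC_0$-linear freeness of $\{F_w\}_{w\in\calT_n^*}$ and $\calC_0$-algebraic freeness of $\{F_l\}_{l\in\Lyn\calT_n}$, establishing item (1). Item (2) then follows because the ordering \eqref{orderLyndon1}--\eqref{orderLyndon3} exhibits $\Lyn\calT_{n-1}$ and $\Lyn T_n$ as disjoint subsets of $\Lyn\calT_n$, so their respective generating families for $\calK_{\calT_{n-1}}$ and $\calK_{T_n}$ are disjoint subsets of a $\calC_0$-algebraically free family, which is exactly the stated algebraic disjointness.

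For item (3), the starting point is the second factorization of Proposition \ref{sol}(1), which gives $\F_{KZ_n}=\bigl(\prod_{l\in\Lyn T_n}^{\searrow}e^{F_{S_l}P_l}\bigr)\,Y$ with
\[
Y=1_{\calT_n^*}+\sum_{k\ge 1}\sum_{v_1,\ldots,v_k\in T_n^*,\ t_1,\ldots,t_k\in\calT_{n-1}}F_{a(v_1t_1)\halfshuffle\cdots\halfshuffle a(v_kt_k)}\,r(v_1t_1)\cdots r(v_kt_k).
\]
By the $F$-analogue of Theorem \ref{Chen_braids} and Proposition \ref{volterraexp} (passing to the limit $z^0\to 0$, which is essential because iterated integrals from $0$ of individual $\omega_{i,j}$'s diverge but their assembly into $\F_{KZ_n}$ is well defined by Definition \ref{F}), one rewrites $Y$ as $\sum_{w\in\calT_{n-1}^*}F_w\,\varphi_{T_n}^{(z^0,z)}(w)$, which is exactly the right-hand side of the first displayed identity. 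Setting $E:=\F_{KZ_{n-1}}^{-1}\,Y$, both factors are grouplike, so $E$ is grouplike and $\log E$ is a Lie series. Expanding $\varphi_{T_n}^{(z^0,s_k)}(t)=e^{\ad_{-V_0(z^0,s_k)}}(t)$ via the Baker--Campbell--Hausdorff formula and regrouping the brackets, the contributions involving $T_n$-letters redistribute as $\calK_{T_n}$-valued coefficients of brackets in $\calT_{n-1}$-letters, placing $E$ in $e^{\ncs{\calL ie_{\calK_{T_n}}}{\calT_{n-1}}}$; the decomposition $\F_{KZ_n}=\bigl(\prod_{l\in\Lyn T_n}^{\searrow}e^{F_{S_l}P_l}\bigr)\F_{KZ_{n-1}}\,E$ is then obtained by inserting $\F_{KZ_{n-1}}\F_{KZ_{n-1}}^{-1}$.

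Item (4) consists in descending Proposition \ref{dual_algebras}(2), namely $\scal{a(v_1t_1)\halfshuffle\cdots\halfshuffle a(v_kt_k)}{r(v_1t_1)\cdots r(v_kt_k)}=1$, to the quotient by $[\ncs{\calL ie_{1_{\calA}}}{T_n},\ncs{\calL ie_{1_{\calA}}}{T_n}]$. In this quotient the $T_n$-letters commute, so every word $v\in T_n^{k_i}$ is identified with its abelianization and $T_n^{k_i}$ collapses to the symmetric series $\hat T_n^{k_i}$ of Definition \ref{dual_bases}. Applying the symmetrization identity of Remark \ref{analogous}, which converts iterated shuffles into iterated half-shuffles, turns $\{a(v_1t_1)\halfshuffle\cdots\halfshuffle a(v_kt_k)\}$ into $\{(-t_1\hat T_n^{k_1})\halfshuffle\cdots\halfshuffle(-t_p\hat T_n^{k_p})\}$ and the dual PBW family into $\{\ad_{-T_n}^{k_1}t_1\cdots\ad_{-T_n}^{k_p}t_p\}$, yielding the stated duality. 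The main obstacle is item (3): showing that the BCH-rearranged $E$ genuinely lives in $e^{\ncs{\calL ie_{\calK_{T_n}}}{\calT_{n-1}}}$ rather than in the ambient $e^{\ncs{\calL ie_{\calA}}{\calT_n}}$, which will crucially exploit the flatness \eqref{integrable} and the infinitesimal braid relators \eqref{braidbis} encoded in $\calJ_n$.
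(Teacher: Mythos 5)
Your proposal is correct and follows essentially the same route as the paper, whose proof of this corollary is simply a citation of Propositions \ref{MRSBTT}--\ref{sol}, Corollary \ref{Chensolution} and Theorem \ref{Chen_braids} — exactly the results you invoke and elaborate (triggering the freeness criteria of Proposition \ref{MRSBTT} via the pole structure of the $d\log(z_i-z_j)$ for items (1)--(2), the second factorization of Proposition \ref{sol} together with Theorem \ref{Chen_braids} for item (3), and the descent of the dual bases to the quotient for item (4)). Your reconstruction is in fact more detailed than the paper's one-line argument, and your honest flagging of the $E\in e^{\ncs{\calL ie_{\calK_{T_n}}}{\calT_{n-1}}}$ membership as the delicate point in item (3) is consistent with the level of detail the paper itself leaves implicit.
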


\begin{proof}
These are consequences of Propositions \ref{MRSBTT}--\ref{sol}, Corollary \ref{Chensolution} and Theorem \ref{Chen_braids}.
\end{proof}

In order to examine grouplike solutions of $KZ_n$ with asymptotic conditions by \textit{d\'evissage}, let us consider again the alphabet
$\calT'_n=\{t_{i,j}\}_{1\le i,j\le n}$ satisfying \eqref{braid} and\footnote{$\{\int_{z_0}^zu_{i,j}(s)d(s_i-s_j)\}_{1\le i,j\le n}$
is not $\C$-linearly free since $u_{i,j}(s)d(s_i-s_j)=u_{j,i}(s)d(s_j-s_i)$.}
\begin{eqnarray}
U_i:=\sum_{j=1,j\ne i}^nt_{i,j}u_{i,j},&1\le i\le n.
\end{eqnarray}
With the split \eqref{split}, \textit{i.e.} $M_n=\bar M_n+M_{n-1}$, and the data in \eqref{data}, one has
\begin{eqnarray}\label{splitbis}
&\bar M_n=\Sum_{k=1}^{n-1}t_{k,n}\frac{d(z_k-z_n)}{z_k-z_n},
&M_n=\Sum_{1\le i<j\le n}t_{i,j}\Frac{d(z_j-z_i)}{z_j-z_i}=\Sum_{i=1}^nU_i(z)\;dz_i.
\end{eqnarray}
Moreover, as in \eqref{N}--\eqref{Nbis}, $\bar M_n$ behaves, for\footnote{See Note \ref{NOTE}.} $z_n\to z_{n-1}$, as the following connection
\begin{eqnarray}\label{Nsans}
N_{n-1}(s)=t_{n-1,n}\frac{ds}{s}-\sum_{k=1}^{n-2}t_{k,n}\frac{ds}{a_k-s},
&\mbox{with}&\left\{\begin{array}{r}s=z_n,\cr a_k=z_k.\end{array}\right.
\end{eqnarray}

\begin{proposition}\label{crochet}
\begin{enumerate}
\item\label{un} The family $\{U_i\}_{1\le i\le n}$ satisfies
\begin{eqnarray*}
\sum_{i=1}^nU_i=0,
&\Sum_{i=1}^nz_iU_i(z)=\sum_{1\le i<j\le n}t_{i,j},
&\mathbf{\partial}_iU_j-\mathbf{\partial}_jU_i=[U_i,U_j]=0.
\end{eqnarray*}

\item If $G$ is solution of \eqref{NCDE} then it satisfies the following identities
\begin{eqnarray*}
\sum_{i=1}^n\mathbf{\partial}_iG(z)=0&\mbox{and}&
\sum_{i=1}^nz_i\mathbf{\partial}_iG(z)=\sum_{1\le i<j\le n}t_{i,j}G(z)
\end{eqnarray*}
and the partial differential equations $\mathbf{\partial}_iG=U_iG$, for $i=1,..,n$.

\item One has $M_n\wedge M_n=0$ and ${\bf d}M_n=0$ and then ${\bf d}\bar M_n=0$.

\item One has ${\bf d}\Omega_n-\Omega_n\wedge\Omega_n=0$ (see \eqref{integrable}) and ${\bf d}\bar\Omega_n=0$.
\end{enumerate}
\end{proposition}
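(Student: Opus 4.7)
The plan is to establish part (\ref{un}) first as the computational backbone, then deduce parts (2)--(4) using the expansion $M_n = \sum_{i=1}^n U_i\, dz_i$ already recorded in \eqref{splitbis} (and its analogue for $\Omega_n$). For $\sum_{i=1}^n U_i = 0$, I would rewrite this as $\sum_{i\ne j} t_{i,j} u_{i,j}$ and pair $(i,j)\leftrightarrow(j,i)$: the relation $t_{i,j}=t_{j,i}$ from \eqref{braid} together with the antisymmetry $u_{i,j}=-u_{j,i}$ makes each pair cancel. The identity $\sum_i z_i U_i = \sum_{i<j} t_{i,j}$ follows from the same pairing, since $z_i t_{i,j} u_{i,j} + z_j t_{j,i} u_{j,i} = (z_i - z_j) t_{i,j} u_{i,j} = t_{i,j}$. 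The curl identity $\mathbf{\partial}_i U_j = \mathbf{\partial}_j U_i$ is a one-line differentiation: only the $l = i$ summand of $U_j = \sum_{l\ne j} t_{j,l} u_{j,l}$ has nontrivial $z_i$-derivative, giving $t_{i,j}/(z_i - z_j)^2$, which is symmetric in $(i,j)$ by \eqref{braid}.

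The delicate point is $[U_i, U_j] = 0$. I would expand
$$
[U_i, U_j] = \sum_{k\ne i,\, l\ne j} [t_{i,k}, t_{j,l}]\, u_{i,k} u_{j,l}
$$
and classify pairs $(k,l)$ by $|\{i,k\}\cap\{j,l\}|$. Pairs with four pairwise distinct indices contribute zero by the commuting relation $[t_{i,k}, t_{j,l}]=0$ of \eqref{braidbis}; the degenerate pair $(k,l)=(j,i)$ gives $[t_{i,j}, t_{j,i}]=0$ since $t_{i,j}=t_{j,i}$. What survives is a three-index residue parametrized by $m\notin\{i,j\}$, arising from the cases $(k=j,\,l=m)$, $(k=m,\,l=i)$ and $(k=l=m)$. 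I would then apply the infinitesimal braid relations $[t_{i,m}+t_{j,m}, t_{i,j}]=0$ and $[t_{i,j}+t_{i,m}, t_{j,m}]=0$ from \eqref{braidbis} to show that the three relevant commutators $[t_{i,j}, t_{j,m}]$, $[t_{i,j}, t_{i,m}]$ and $[t_{i,m}, t_{j,m}]$ are all equal up to sign, after which the scalar coefficients collapse by the partial-fraction identity $u_{i,m} u_{j,m} = u_{i,j}(u_{j,m} - u_{i,m})$.

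Part (2) is then immediate: reading $\mathbf{d}G = M_n G$ as $\sum_i (\mathbf{\partial}_i G)\, dz_i = \sum_i U_i G\, dz_i$ forces $\mathbf{\partial}_i G = U_i G$, and the two global identities follow by summation from $\sum_i U_i = 0$ and $\sum_i z_i U_i = \sum_{i<j} t_{i,j}$. For part (3), the bilinear expansions give $M_n\wedge M_n = \sum_{i<j}[U_i, U_j]\, dz_i\wedge dz_j = 0$ and $\mathbf{d}M_n = \sum_{i<j}(\mathbf{\partial}_j U_i - \mathbf{\partial}_i U_j)\, dz_j\wedge dz_i = 0$, both by part (\ref{un}); since $M_{n-1}$ satisfies the formally identical identities on a smaller index range, $\mathbf{d}\bar M_n = \mathbf{d}M_n - \mathbf{d}M_{n-1} = 0$. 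Part (4) is a verbatim rerun with $t_{i,j}/(2\mathrm{i}\pi)$ in place of $t_{i,j}$ and $d\log(z_i - z_j)$ in place of $\omega_{i,j}$, confirming \eqref{integrable}. The main obstacle is the three-index cancellation for $[U_i, U_j] = 0$: this is the only place where the full infinitesimal pure braid relations of \eqref{braidbis} beyond the symmetries in $\mathcal{R}'_n$ are needed, and it requires combining them with the Arnold-type partial-fraction identities among the $u_{i,j}$.
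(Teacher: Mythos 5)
Your proposal is correct and follows essentially the same route as the paper's proof: the first two identities via the pairing $u_{i,j}=-u_{j,i}$ with $t_{i,j}=t_{j,i}$, the commutator $[U_i,U_j]=0$ by splitting off the four-distinct-index and degenerate terms and cancelling the three-index residue with the infinitesimal braid relations plus the Arnold partial-fraction identity, and parts (2)--(4) by summation, bilinear expansion, and the substitution $t_{i,j}\leftarrow t_{i,j}/2\mathrm{i}\pi$. The only cosmetic difference is that the paper packages the three-index cancellation with coefficients $z_i,z_j,z_k$ against $u_{i,k}u_{j,k}u_{j,i}$ rather than via $u_{i,m}u_{j,m}=u_{i,j}(u_{j,m}-u_{i,m})$, which is the same computation.
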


\begin{proof}
\begin{enumerate}
\item Since $u_{i,j}=-u_{j,i}$ then
\begin{eqnarray*}
\sum_{i=1}^nU_i=\sum_{i=1}^n\sum_{1\le j<i\le n}(t_{i,j}-t_{j,i})u_{i,j}.
\end{eqnarray*}
By the infinitesimal braid relations given in \eqref{braid}, we get the first identity.

For the second identity, using a change of indices as follows
\begin{eqnarray*}
\sum_{i=1}^nz_iU_i(z)
=&\Sum_{i=1}^nt_{i,j}\Big(\Sum_{1\le i<j\le n}\Frac{z_i}{z_i-z_j}-\Sum_{1\le j<i\le n}\frac{z_i}{z_j-z_i}\Big)&\cr
=&\Sum_{i=1}^nt_{i,j}\Big(\Sum_{1\le i<j\le n}\Frac{z_i}{z_i-z_j}-\Frac{z_j}{z_i-z_j}\Big)&=\sum_{1\le i<j\le n}t_{i,j}.
\end{eqnarray*}
The third identity is obtained by direct calculations:
\begin{eqnarray*}
\mathbf{\partial}_iU_j-{\bf\partial}_jU_i
&=&\sum_{1\le l\le n\atop l\neq j}t_{j,l}(\partial_iu_{j,l})
-\sum_{1\le k\le n\atop k\neq i}t_{i,k}(\partial_ju_{i,k})\\
&=&-t_{j,i}(z_j-z_i)^{-2}+t_{i,j}(z_i-z_j)^{-2}\cr
[U_i,U_j]
&=&\sum_{1\le k,l\le n\atop i\neq j\neq k\neq l}[t_{i,k},t_{j,l}]u_{i,k}u_{j,l}
+\sum_{1\le k\le n\atop k\neq i,j}[t_{i,k},t_{j,l}]u_{i,k}u_{j,l}\\
&+&\sum_{1\le k\le n\atop k\neq i}[t_{i,j},t_{j,k}]u_{i,j}u_{j,k}
+\sum_{1\le k\le n\atop k\neq j}[t_{i,k},t_{j,i}]u_{i,k}u_{j,i}\cr
&=&\sum_{1\le k,l\le n\atop i\neq j\neq k\neq l}[t_{i,k},t_{j,l}]u_{i,k}u_{j,l}
+\sum_{1\le k\le n\atop k\neq i,j}(z_i[t_{j,k},t_{j,i}+t_{k,l}]\\
&+&z_j[t_{i,k},t_{i,j}+t_{k,j}]+z_k[t_{i,j},t_{i,k}+t_{j,k}])u_{i,k}u_{j,k}u_{j,i}.
\end{eqnarray*}
By infinitesimal braid relations in \eqref{braid}, one gets $\mathbf{\partial}_iU_j-\mathbf{\partial}_jU_i=[U_i,U_j]=0$.

\item The first identities are consequences of the item \ref{un}. By \eqref{splitbis}, one deduces
\begin{eqnarray*}
&&{\bf d}G(z)
=\Big(\sum_{i=1}^nU_i(z)\;dz_i\Big)G(z)
=\sum_{i=1}^n(U_i(z)G(z))\;dz_i
=\sum_{i=1}^n(\mathbf{\partial}_iG(z))\;dz_i
\end{eqnarray*}
and by \eqref{dS}, one obtains the last result.

\item By \eqref{splitbis} and the item \ref{un} of Proposition \ref{crochet}, one obtains
\begin{eqnarray*}
M_n(z)\wedge M_n(z)
=&\Sum_{i,j=1}^nU_i(z)U_j(z)\;dz_i\wedge dz_j\cr
=&\Sum_{1\le i<j\le n}[U_i(z),U_j(z)]\;dz_i\wedge dz_j&=0,\cr
{\bf d}M_n(z)=&\Sum_{i,j=1}^n(\mathbf{\partial}_iU_j(z)-\mathbf{\partial}_jU_i(z))\;dz_i\wedge dz_j&=0.
\end{eqnarray*}
and, on the other hand,
${\bf d}\bar M_n={\bf d}(M_n-M_{n-1})={\bf d}M_n-{\bf d}M_{n-1}=0$.

\item Substituting $t_{i,j}$ by $t_{i,j}/2\mathrm{i}\pi$ on $M_n$ and $\bar M_n$, one gets the expected results.
\end{enumerate}
In all the sequel, as for \eqref{integrable}, the letters in $\calT_n$ satisfy now \eqref{braidbis}.
\end{proof}

\begin{remark}\label{BE}
With data in \eqref{data} and by Proposition \ref{crochet}, $\Omega_n$ is flat and ${\bf d}S=\Omega_nS$ is completely integrable (see also \eqref{integrable}). On the other side, $\bar\Omega_n$ is not flat and ${\bf d}S=\bar\Omega_nS$ is not
completely integrable. Indeed, one has ${\bf d}\bar M_n=0$ and\footnote{Observed by
B. Enriquez, using the $\C$-linear independence of $\{\log(z_i-z_n)\}_{1\le i\le n-1}$.}
\begin{eqnarray*}
\bar M_n\wedge\bar M_n
=&\Sum_{1\le i,j\le n-1}t_{i,n}t_{j,n}\;d\log(z_i-z_n)\wedge d\log(z_j-z_n)\cr
=&\Sum_{1\le i<j\le n-1}[t_{i,n},t_{j,n}]\;d\log(z_i-z_n)\wedge d\log(z_j-z_n)&\neq0.
\end{eqnarray*}

Getting flatness of $\bar M_n$, one could further assume that $\{t_{i,n}\}_{1\le i\le n-1}$ commute,
\textit{i.e.} $[t_{i,n},t_{j,n}]=0$, as done in the definition of $\hat V_0$ in \eqref{hatV_0} and then
in Definition \ref{phi} using $\hat\varphi_{T_n}$ and $\hat\varphi_n$, as done in Propositions \ref{volterraexp}--\ref{sol} and Theorem \ref{Chen_braids} (see also \eqref{abelianization}).
\end{remark}

Now, we are in situation back to \eqref{KZn} and its solutions with asymptotic conditions,
by Definitions \ref{phi}--\ref{F} and Propositions \ref{sol}--\ref{crochet}, to achieve our application.

\begin{theorem}[d\'evissage]\label{KZsol}
With Definition \ref{phi} and data in \eqref{data}, grouplike solution\footnote{For $1\le i<j\le n$,
changing $t_{i,j}$ by $t_{i,j}/2\mathrm{i}\pi$ (thus $\bar M_n$ and $\bar M_{n-1}$ become $\bar\Omega_n$
and $\bar\Omega_{n-1}$, respectively), one deduces results for \eqref{KZn}.} of \eqref{NCDE} can be put in the
form $h(z_n)H(z_1,\ldots,z_{n-1})$ such that, for $z_n\to z_{n-1}$,
\begin{enumerate}
\item $h$ is solution of\footnote{See Note \ref{NOTE} and Remark \ref{nil_app}.} $df=N_{n-1}f$,
where $N_{n-1}$ is the connection determined in \eqref{Nsans}.
Hence, $h(z_n)\sim_{z_n\to z_{n-1}}(z_{n-1}-z_n)^{t_{n-1,n}}$.
\item $H(z_1,\ldots,z_{n-1})$ satisfies ${\bf d}S=M_{n-1}^{\varphi_n}S$, \textit{i.e.} \eqref{NCDE1} with $\phi=\varphi_n$, and
\begin{eqnarray*}
&M_{n-1}^{\varphi_n^{(z^0,z)}}(z)=\Sum_{1\le i<j\le n-1}d\log(z_i-z_j)\varphi_n^{(z^0,z)}(t_{i,j}),&\cr
&\varphi_n^{(z^0,z)}(t_{i,j})\sim_{z_n\to z_{n-1}}e^{\ad_{-\log(z_{n-1}-z_n))t_{n-1,n}}}t_{i,j}\mod\calJ_{\calR_n}.&
\end{eqnarray*}
Moreover, $M_{n-1}^{\varphi_{n-1}}$ exactly coincides with $M_{n-1}$ in $\bigcap_{1\le k<n-1}(P_{k,n-1})$.
\end{enumerate}                                                                                                         

Conversely, for $z_n\to z_{n-1}$, if $h$ satisfies $df=N_{n-1}f$ and $H(z_1,\ldots,z_{n-1})$ satisfies \eqref{NCDE1} then $h(z_n)H(z_1,\ldots,z_{n-1})$ is solution of \eqref{NCDE}.
\end{theorem}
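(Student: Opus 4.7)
The plan is to exploit the split $M_n = \bar{M}_n + M_{n-1}$ from \eqref{split} and write any grouplike solution of \eqref{NCDE} in the factored form $S = h \cdot H$, where $h$ absorbs the asymptotically singular part of the connection as $z_n \to z_{n-1}$ and $H$ captures the residual piece over $\calT_{n-1}$. Substituting this ansatz into $\mathbf{d}S = M_n S$ and using the derivation property gives
\[
(\mathbf{d}h) H + h(\mathbf{d}H) = \bar{M}_n h H + M_{n-1} h H,
\]
so that imposing $\mathbf{d}h = \bar{M}_n h$ leaves $\mathbf{d}H = (h^{-1} M_{n-1} h) H$, which is precisely $\mathbf{d}H = M_{n-1}^{\varphi_n} H$ in the sense of \eqref{NCDE1}, with the conjugation operator induced pointwise by $h$. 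Existence of such a factorization is already ensured by Proposition~\ref{volterraexp} and Theorem~\ref{Chen_braids}, which produce $C_{\varsigma\path z} = V_0(\varsigma, z) H(\varsigma, z)$ with $V_0$ grouplike over $T_n$, so one may concretely take $h = V_0$.

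I would then analyze the outer factor. Fixing $z_k = a_k$ for $k < n-1$ and setting $s := z_n \to z_{n-1}$, the forms $d\log(z_k - z_n)$ remain regular for $k < n-1$, while $d\log(z_{n-1} - z_n)$ carries all the singularity; substituting $ds = d(z_n - z_{n-1})$ shows that $\bar{M}_n$ restricts asymptotically to the connection $N_{n-1}$ of \eqref{Nsans}. Consequently every solution $h$ of $df = N_{n-1} f$ has leading behaviour $h(z_n) \sim (z_{n-1} - z_n)^{t_{n-1,n}}$, by the standard analysis at a regular singular point and in direct analogy with the noncommutative generating series of hyperlogarithms \eqref{sghyperlog}--\eqref{DEN}.

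The second step is to identify the inner equation for $H$. By Proposition~\ref{volterraexp} the conjugating operator has the closed form $\varphi_{T_n}^{(\varsigma, z)}(t_{i,j}) = e^{\ad_{-V_0(\varsigma, z)}} t_{i,j}$; as $z_n \to z_{n-1}$ the Lie series $V_0$ reduces to $\log(z_{n-1} - z_n)\, t_{n-1, n}$ modulo $[\ncs{\calL ie_{\calA}}{T_n}, \ncs{\calL ie_{\calA}}{T_n}]$, so after descending to the quotient (which is legitimate because Proposition~\ref{crochet} supplies the integrability $\mathbf{d}M_n - M_n \wedge M_n = 0$) one obtains the stated asymptotic
\[
\varphi_n^{(z^0, z)}(t_{i,j}) \sim e^{\ad_{-\log(z_{n-1}-z_n)\, t_{n-1,n}}} t_{i,j} \mod \calJ_{\calR_n}.
\]
The coincidence $M_{n-1}^{\varphi_{n-1}} = M_{n-1}$ on $\bigcap_{1 \leq k < n-1} P_{k, n-1}$ then follows because on that intersection $\log(z_k - z_{n-1}) = 0$ for each $k < n-1$, collapsing the conjugating exponential to the identity on every generator. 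The converse is an immediate verification via the product rule: given $h, H$ as stated, $\mathbf{d}(hH) = \bar{M}_n h H + h \cdot (h^{-1} M_{n-1} h) H = M_n (hH)$.

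The main obstacle I anticipate is the careful handling of the asymptotic reduction modulo $\calJ_{\calR_n}$: Proposition~\ref{volterraexp} provides exact identities at the level of formal series, but one must verify that the approximation $V_0 \equiv \log(z_{n-1} - z_n)\, t_{n-1, n}$ modulo $[\ncs{\calL ie_{\calA}}{T_n}, \ncs{\calL ie_{\calA}}{T_n}]$ is compatible with quotienting by $\calJ_{\calR_n}$, and that the non-dominant log-regular contributions of $\bar{M}_n$ indeed drop out in the limit. This can be controlled uniformly via the discrete-topology convergence \eqref{discretetopology} together with the monoidal factorization of $\calD_{\calT_n}$ from Theorem~\ref{diagonal}, which lets one isolate term-by-term the contribution of each Lyndon word in $\Lyn T_n$ as $z_n \to z_{n-1}$.
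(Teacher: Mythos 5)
Your overall skeleton matches the paper's: take $h=V_0$, use the Volterra-type factorization $C_{\varsigma\path z}=V_0H$ from Proposition~\ref{volterraexp} and Theorem~\ref{Chen_braids}, read off the conjugated connection for $H$, and extract the asymptotics from the dominant contribution of $t_{n-1,n}$. However, your derivation of the equation for $H$ rests on a step that fails. You ``impose $\mathbf{d}h=\bar M_n h$'' and deduce $\mathbf{d}H=(h^{-1}M_{n-1}h)H$ by the product rule; but $\bar M_n$ is not flat (Remark~\ref{BE}: $\bar M_n\wedge\bar M_n\neq0$ unless the $t_{i,n}$ commute), so the total differential equation $\mathbf{d}S=\bar M_nS$ is not completely integrable and has no solution, and Remark~\ref{nil_app} states explicitly that $V_0$ does \emph{not} satisfy it. The paper avoids this entirely: $V_0$ is defined as a Chen series of the forms $\{\omega_{k,n}\}$, $H$ is obtained as the limit of the Picard iteration \eqref{picard1} for the connection $M_{n-1}^{\varphi_n}$, and the identity $C_{\varsigma\path z}=V_0H$ is proved combinatorially in Theorem~\ref{Chen_braids} via the factorization of the diagonal series, not by differentiating the ansatz. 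The single-variable equation $df=N_{n-1}f$ for $h$ only appears in the asymptotic limit $z_n\to z_{n-1}$ where the other forms $d\log(z_k-z_n)$, $k<n-1$, become regular. The same flaw infects your converse: the hypothesis there is $dh=N_{n-1}h$ and $H$ solving \eqref{NCDE1}, and the paper's converse goes through Proposition~\ref{MRSBTT} (right multiplication by a constant grouplike series), not through $\mathbf{d}(hH)=\bar M_nhH+M_{n-1}hH$.

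A second, distinct confusion concerns the asymptotics of $\varphi_n$. You justify $\varphi_n^{(z^0,z)}(t_{i,j})\sim e^{\ad_{-\log(z_{n-1}-z_n)t_{n-1,n}}}t_{i,j}$ by reducing $V_0$ modulo $[\ncs{\calL ie_{\calA}}{T_n},\ncs{\calL ie_{\calA}}{T_n}]$. That algebraic quotient is precisely what defines the \emph{approximation} $\hat V_0$ and $\hat\varphi_n$ in \eqref{hatV_0} and Definition~\ref{phi}; it is not the analytic limit $z_n\to z_{n-1}$ of the exact $\varphi_n$. The paper instead writes $\varphi_n^{(z^0,z)}(t_{i,j})=\prod_{l\in\Lyn T_n}^{\searrow}e^{\ad_{-F_{S_l}(z)P_l}}t_{i,j}\bmod\calJ_{\calR_n}$ and observes that, as $z_n\to z_{n-1}$, only the factor indexed by the Lyndon letter $l=t_{n-1,n}$ carries the divergent coefficient $\log(z_{n-1}-z_n)$ while the remaining factors stay bounded. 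Your closing paragraph does flag that this reduction needs to be controlled, but the mechanism you propose (quotient by the derived ideal) is the wrong one; the correct mechanism is the term-by-term analysis of the MRS product over $\Lyn T_n$.
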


\begin{proof}
For $z_n\to z_{n-1}$, on the one hand, $h\equiv V_0$ and it behaves as generating series of hyperlogarithms (\textit{i.e.}
iterated integrals of holomorphic forms $\{ds/(s-s_k)\}_{1\le k<n}$, with the singularities $s_k=z_n-z_k$, see Remarks
\ref{Hypergeometric} and \ref{delsarte}). It follows then the first assertion. On the other hand, with
$\varphi_n=\varphi_{T_n}\mod\calJ_{\calR_n}$ as in Definition \ref{phi}, the Picard's iteration \eqref{picard1} converges,
for the discrete topology, to a solution of \eqref{NCDE1} having the expected connection:
\begin{eqnarray*}
H(z_1,\ldots,z_{n-1})&=&\sum_{m\ge0}\sum_{t_{i_1,j_1}\ldots t_{i_m,j_m}\in\calT_{n-1}^*}
\int_{z^0}^zd\log(s_{i_1}-s_{j_1})\varphi_n^{({z^0},s_1)}(t_{i_1,j_1})\ldots\\
&&\int_{z^0}^{s_{m-1}}d\log(s_{i_m}-s_{j_m})\varphi_n^{({z^0},s_m)}(t_{i_m,j_m}),\cr
\varphi_n^{({z^0},z)}(t_{i,j})&=&\Prod_{l\in\Lyn T_n}^{\searrow}e^{\ad_{-F_{S_l}(z)P_l}}t_{i,j}\mod\calJ_{\calR_n}\\
&\sim&e^{\ad_{-\log(z_{n-1}-z_n)t_{n-1,n}}}t_{i,j}\mod\calJ_{\calR_n},\quad z_n\to z_{n-1}.
\end{eqnarray*}

Conversely, let $C\in\ncs{\C}{\calT_{n-1}}/\calJ_{\calR_{n-1}}$ such that $\scal{C}{1_{\calT_{n-1}^*}}=1_{\calA}$.
If $HC$ satisfies \eqref{NCDE1} then, by Propositions \ref{MRSBTT}, $V_0HC$ satisfies \eqref{NCDE}.
\end{proof}

Theorem \ref{KZsol} is established for $z_n\to z_{n-1}$ and, for \textit{d\'evissage}, can be performed
recursively. Up to a permutation of $\mathfrak{S}_{n}$, it can be adapted for other cases. Hence, 

\begin{corollary}[solution of $KZ_n$ satisfying asymptotic condition]\label{C}
$\F_{KZ_n}$ is unique group-like solution of \eqref{NCDE} satisfying
\begin{eqnarray*}
\F_{KZ_n}(z)\sim_{z_i\path z_{i-1}\atop1<i\le n}(z_{i-1}-z_i)^{t_{i-1,i}}G_i(z_1,\ldots,z_{i-1},z_{i+1},\ldots,z_n)
\end{eqnarray*}
in $\ncs{\calA}{\calT_n}/\calJ_{\calR_n}$ and $G_i(z_1,\ldots,z_{i-1},z_{i+1},\ldots,z_n)$ satisfies \eqref{NCDE1}.

Moreover, for
\begin{eqnarray*}
y_1=z_1,\ldots,y_{i-1}=z_{i-1},y_i=z_{i+1},\ldots,y_{n-1}=z_n, \end{eqnarray*}
the connection $M_{n-1}^{\varphi_{n-1}}$ is expressed as follows 
\begin{eqnarray*}
M_{n-1}^{\varphi_n^{(y^0,y)}}(y)=\sum_{1\le i<j\le n-1}d\log(y_i-y_j)e^{\ad_{-\log(y_i-y_n)t_{i,n}}}t_{i,j}\mod\calJ_{\calR_n}
\end{eqnarray*}
and exactly coincides with $M_{n-1}$ in $\bigcap_{1\le k<n-1}(P_{k,n-1})$.
\end{corollary}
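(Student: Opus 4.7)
The plan is to obtain Corollary \ref{C} as an iterated application of the d\'evissage Theorem \ref{KZsol}, combined with the explicit factorization of $\F_{KZ_n}$ provided by Proposition \ref{sol} and the uniqueness principle encoded in Proposition \ref{MRSBTT}. First I would establish existence: by Definition \ref{F}, $\F_{KZ_n}$ is the image of the diagonal series $\calD_{\calT_n}$ under $F_{\bullet}\otimes\mathrm{Id}$ and hence, by Theorem \ref{diagonal} and Proposition \ref{sol}, is grouplike; since the characters $\{F_w\}_{w\in\calT_n^*}$ are defined precisely as iterated integrals of $\{d\log(z_i-z_j)\}$ normalized so that $F_{t_{i,j}}=\log(z_i-z_j)$, the series $\F_{KZ_n}$ is a solution of \eqref{KZn} (equivalently of \eqref{NCDE} after $t_{i,j}\leftarrow t_{i,j}/2\mathrm{i}\pi$) modulo $\calJ_{\calR_n}$.

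Next I would derive the asymptotic conditions from Theorem \ref{KZsol}. For $i=n$, Theorem \ref{KZsol} directly factors $\F_{KZ_n}=h(z_n)H(z_1,\ldots,z_{n-1})$ with $h(z_n)\sim(z_{n-1}-z_n)^{t_{n-1,n}}$ and $H$ solving \eqref{NCDE1}, so one takes $G_n:=H$. For a general index $1<i\le n$, since $\F_{KZ_n}$ depends only on the differences $\{z_p-z_q\}_{1\le p<q\le n}$ (Radford and Definition \ref{F}), I would apply the permutation $\sigma\in\mathfrak{S}_n$ swapping positions to bring the pair $(z_{i-1},z_i)$ into the role of $(z_{n-1},z_n)$; Theorem \ref{KZsol} then yields the factorization
\begin{eqnarray*}
\F_{KZ_n}(z)\sim_{z_i\to z_{i-1}}(z_{i-1}-z_i)^{t_{i-1,i}}G_i(z_1,\ldots,z_{i-1},z_{i+1},\ldots,z_n),
\end{eqnarray*}
with $G_i$ satisfying \eqref{NCDE1}. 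The explicit form of the connection $M_{n-1}^{\varphi_n}$ comes directly from Definition \ref{phi}: since $\varphi_n^{(y^0,y)}=\prod_{l\in\Lyn T_n}^{\searrow}e^{\ad_{-F_{S_l}P_l}}\mod\calJ_{\calR_n}$, and in the intersection of affine planes $\bigcap_{1\le k<n-1}(P_{k,n-1})$ one has $z_k-z_{n-1}=1$ so $F_{t_{k,n-1}}=\log(z_k-z_{n-1})=0$ for $1\le k<n-1$, the adjoint factors associated to the letters $t_{k,n-1}$ become trivial and $\varphi_n^{(y^0,y)}$ reduces to the identity on those generators, giving $M_{n-1}^{\varphi_{n-1}}=M_{n-1}$ on this locus.

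For uniqueness, suppose $\tilde\F$ is another grouplike solution of \eqref{NCDE} satisfying the same asymptotic behavior for every $i=2,\ldots,n$. By Proposition \ref{MRSBTT}, there exists $C\in\ncs{\calL ie_{\C}}{\calT_n}/\calJ_{\calR_n}$ such that $\tilde\F=\F_{KZ_n}e^C$. Comparing the asymptotic expansion at $z_n\to z_{n-1}$ of both sides, the leading factor $(z_{n-1}-z_n)^{t_{n-1,n}}$ must match, forcing $e^C$ to commute with the subsequent terms in a way that is compatible with $G_n=\tilde G_n$; iterating the comparison through the $n-1$ asymptotic conditions (each reducing the ambiguity by a Lie subalgebra transverse to $T_i$ for the corresponding relabeled alphabet) collapses $C$ to $0$, whence $\tilde\F=\F_{KZ_n}$.

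The main obstacle I expect is the uniqueness step: making rigorous that the joint asymptotic conditions for all $1<i\le n$ cut out a single grouplike counter term in $e^{\ncs{\calL ie_{\C}}{\calT_n}}/\calJ_{\calR_n}$. This requires an inductive argument on $n$ (using uniqueness for $KZ_{n-1}$ applied to $G_n$) together with the fact that the Haussdorff group elements compatible with all normalization directions form the trivial subgroup, a fact that ultimately rests on the $\calC_0$-algebraic independence of $\{F_l\}_{l\in\Lyn\calT_n}$ established in Corollary \ref{P}.
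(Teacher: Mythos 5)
Your proposal follows essentially the same route as the paper, which justifies Corollary \ref{C} in a single sentence: Theorem \ref{KZsol} is applied recursively for the d\'evissage and adapted to the other pairs $(z_{i-1},z_i)$ up to a permutation of $\mathfrak{S}_n$, with the coincidence of $M_{n-1}^{\varphi_{n-1}}$ and $M_{n-1}$ on $\bigcap_{1\le k<n-1}(P_{k,n-1})$ coming from the vanishing of the relevant logarithms exactly as you describe. Your elaboration of the uniqueness step via Proposition \ref{MRSBTT} and the linear/algebraic independence from Corollary \ref{P} supplies detail the paper leaves implicit, but it is consistent with the paper's intent rather than a different method.
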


\begin{remark}
Historically, noncommutative series were introduced in control theory to study functional expansions (in particular, the Volterra's expansion) of nonlinear dynamical systems via so-called Fliess' generating series of dynamical systems \cite{fliess1,fliess2} which is in duality with Chen series \cite{these,livre}, viewed as series in noncommutative indeterminates (see Definitions \ref{growth}--\ref{Chenseriesdef}, Lemma \ref{condition}, Proposition \ref{melange0}).

After that, Sussmann \cite{Sussmann} gave an infinite product for Chen series using the Hall basis \cite{viennotgerard} and also a noncommutative differential equation, analogous to \eqref{NCDE}. In this context, with the controls $\{u_k\}_{1\le k\le N}$, the differential $1$-forms are of the form $\omega_k(z)=u_k(z)dz$, for $k=1,..,N$ (see also \eqref{Mn}--\eqref{Mn2}). These controls are encoded by the alphabet $X=\{x_k\}_{1\le j\le N}$ (see also \eqref{intermediatealphabet}) and are Lebesgue integrable real-valued functions on the interval $[0,T]$ ($T\in\R_{\ge0}$, is so-called the duration of the controls) and then the Chen series of $\{\omega_k\}_{1\le k\le N}$ belongs to $\ncs{L^{\infty}([0,T],\R)}{X}$ \cite{these}.

More systematically, other finite and infinite products (see Theorem \ref{Chen_braids} and Corollary \ref{finitefactorization} below) were also proposed to obtain functional expansions \cite{these,hoangjacoboussous,FPSAC95,FPSAC96,livre} basing on monoidal factorizations (by Lazard and by Sch\"utzenberger) which were intensively studied earlier in \cite{lothaire,viennotgerard} and are widely exploited in the present work using notations of \cite{berstel,reutenauer}.
\end{remark}

\section{Conclusion}
Basing on the Lazard and Sch\"utzenberger factorizations over the monoid generated by the alphabet $\calT_n=\{t_{i,j}\}_{1\le i<j\le n}$ ($n\ge2$), partitionned into $\calT_{n-1}$ and $T_n=\{t_{k,n}\}_{1\le k\le n-1}$ and, on the other hand, the noncommutative symbolic calculus on $\ncs{\calH(\calV)}{\calT_n}$ (\textit{i.e.} the ring of noncommutative series over $\calT_n$, with holomorphic coefficients in $\calH(\calV)$) \cite{livre}, various combinatorics on Chen series, $C_{\varsigma\path z}$, of the holomorphic $1$-forms $\{\omega_{i,j}\}_{1\le i<j\le n}$ and along a path $\varsigma\path z$ over the simply connected manifold $\calV$ were obtained, by extending \cite{CM}, over $\ncs{\calH(\calV)}{\calT_n}$ and then over $\ncs{\calH(\calV)}{\calT_n}/\calJ_n$, where $\calJ_n$ is the ideal of relators on $\{t_{i,j}\}_{1\le i<j\le n}$.
These are used in order to compute by iterations, over $\ncs{\calH(\calV)}{\calT_n}$, the grouplike solutions and the Galois differential group of the universal differential equation ${\bf d}S=M_nS$ (see \eqref{NCDE}) with the universal connection $M_n$, splitting onto $M_{n-1}$ and $\bar M_n$ (see \eqref{split}).

More precisely, it was focus on the sequences of $\ncs{\calH(\calV)}{\calT_n}$, $\{V_k\}_{k\ge0}$ and $\{\hat V_k\}_{k\ge0}$, satisfying the following recursion
\begin{eqnarray*}
S_k(\varsigma,z)=S_0(\varsigma,z)\sum_{t_{i,j}\in\calT_{n-1}}
\int_{\varsigma}^z\omega_{i,j}(s)S_0^{-1}(\varsigma,s)t_{i,j}S_{k-1}(\varsigma,s),
\end{eqnarray*}
with the following starting conditions, as being grouplike series, for $\Delta_{\shuffle}$,
\begin{eqnarray*}
V_0(\varsigma,z)=\Prod_{l\in\Lyn T_n}^{\searrow}e^{\alpha_{\varsigma}^z(S_l)P_l}&\mbox{and}&
\hat V_0=V_0\mod[\ncs{\calL ie_{\calH(\calV)}}{T_n},\ncs{\calL ie_{\calH(\calV)}}{T_n}].
\end{eqnarray*}

Technically and intensively, in Section \ref{combinatorialframeworks}, with the pairs of dual bases (see \eqref{basisP}--\eqref{basisS} and Definition \ref{dual_bases}) and then applying Lemma \ref{factorization}, Propositions \ref{dual_algebras}--\ref{expression} and
Theorem \ref{diagonal}, various expansions of diagonal series (given in \eqref{diaser}) were provided, in the concatenation-shuffle bialgebra and in a Loday's generalized bialgebra:
\begin{eqnarray*}
\calD_{\calT_n}&=&\calD_{\calT_{n-1}}\Big(\Prod_{l=l_1l_2\atop{l_2\in\Lyn\calT_{n-1},l_1\in\Lyn T_n}}^{\searrow}e^{S_l\otimes P_l}\Big)\calD_{T_n}\cr
&=&\calD_{T_n}\big(1_{\calT_n^*}\otimes1_{\calT_n^*}\\
&+&\sum_{k\ge1}\sum_{v_1,\ldots,v_k\in T_n^*\atop t_1,\ldots,t_k\in\calT_{n-1}}
a(v_1t_1)\halfshuffle(\cdots\halfshuffle a(v_kt_k)\ldots))\otimes r(v_1t_1)\ldots r(v_kt_k)\big).
\end{eqnarray*}
After that, in Sections \ref{universalequation}--\ref{free}, basing on Chen series (see Definition \ref{Chenseriesdef})
and their properties (established in Propositions \ref{melange0}--\ref{MRSBTT} and Corollary \ref{melange} for our needs) and then applying Propositions \ref{volterraexp}--\ref{sol}, Theorems \ref{Chen_braids}--\ref{KZsol} and Corollaries
\ref{P}--\ref{C}, it was proved that
\begin{enumerate}
\item $\sum_{k\ge0}V_k$ converges to $C_{\varsigma\path z}$, \textit{i.e.} the limit of the Picard's iteration in \eqref{picard0}.

\item Specializing $\omega_{i,j}=d\log(z_i-z_j)$ and then $\calV=\widetilde{\C_*^n}$ and reducing by $\calJ_{\calR_n}$,
for $z_n\to z_{n-1}$, $h(z_n)H(z_1,\ldots,z_{n-1})$ is grouplike solution of \eqref{NCDE} such that
\begin{enumerate}
\item $h$ is solution of $df=N_{n-1}f$, where $N_{n-1}$ is the connection determined
in \eqref{Nsans}. Hence, $h(z_n)\sim_{z_n\to z_{n-1}}(z_{n-1}-z_n)^{t_{n-1,n}}$.

\item $H(z_1,\ldots,z_{n-1})$ satisfies ${\bf d}S=M_{n-1}^{\varphi_{n-1}}S$, where
\begin{eqnarray*}
&M_{n-1}^{\varphi_n^{(z^0,z)}}(z)=\Sum_{1\le i<j\le n-1}d\log(z_i-z_j)\varphi_n^{(z^0,z)}(t_{i,j}),&\cr
&\varphi_n^{(z^0,z)}(t_{i,j})\sim_{z_n\to z_{n-1}}e^{\ad_{-\log(z_{n-1}-z_n))t_{n-1,n}}}t_{i,j}\mod\calJ_{\calR_n}.&
\end{eqnarray*}
\end{enumerate}

\item The normalized Chen series (see Definition \ref{F}) provides by \textit{d\'evissage}, over
$\ncs{\calH(\widetilde{\C_*^n})}{\calT_n}$ and then over $\ncs{\calH(\widetilde{\C_*^n})}{\calT_n}/\calJ_{\calR_n}$,
the unique solution of \eqref{KZn} satisfying asymptotic conditions, obtained as image of $\calD_{\calT_n}$,
\begin{eqnarray*}
\F_{KZ_n}&=&\prod_{l\in\Lyn T_n}^{\searrow}e^{F_{S_l}P_l}\cr
&\times&\underbrace{\Big(1_{\calT_n^*}
+\sum_{v_1,\ldots,v_k\in T_n^*,k\ge1\atop t_1,\ldots,t_k\in\calT_{n-1}}
F_{a(v_1t_1)\halfshuffle\ldots\halfshuffle a(v_kt_k)}r(v_1t_1)\ldots r(v_kt_k)\Big)}_{\mbox{functional expansion of solution of $KZ_{n-1}$}}\cr
&=&\prod_{l\in\Lyn T_n}^{\searrow}e^{F_{S_l}P_l}\Big(1_{\calT_n^*}
+\sum_{v_1,\ldots,v_k\in T_n^*,k\ge1\atop t_1,\ldots,t_k\in\calT_{n-1}}\cr
&&F_{a(v_1t_1)\halfshuffle\ldots\halfshuffle a(v_kt_k)}r(v_1t_1)\ldots r(v_kt_k)\Big).
\end{eqnarray*}

\item On the other hand, since $\hat V_0$ is a nilpotent approximation of order $1$ of $V_0$ (see Remark \ref{nil_app})
then, by the families of polynomials, in Definition \ref{dual_bases}, the series on $\{\hat V_k\}_{k\ge0}$
approximates $C_{\varsigma\path z}$ yielding then an approximation solution of $KZ_n$, as extension of a treatment
in \cite{drinfeld1} or in \eqref{abelianization}:
\begin{eqnarray*}
\qquad\quad
\F_{KZ_n}&\equiv&e^{\sum_{t\in T_n}F_tt}\big(1_{\calT_n^*}\\
&+&\sum_{v_1,\ldots,v_k\in T_n^*,k\ge1\atop t_1,\ldots,t_k\in\calT_{n-1}}F_{a(\hat v_1t_1)\halfshuffle(\ldots\halfshuffle(a(\hat v_kt_k))\ldots)}r(v_1t_1)\ldots r(v_kt_k)\big).
\end{eqnarray*}
\end{enumerate}

\bibliographystyle{amsplain}

\section{Appendices}\label{Appendices}
\subsection{$KZ_3$, the simplest non-trivial case}\label{AppendixB}
With the notations given in Example \ref{$KZ_3$bis}, solution of $KZ_3$ is explicit as $F=V_0G$, where $V_0(z)=(z_1-z_2)^{t_{1,2}/2{\rm i}\pi}$ and, similarly as in Proposition \ref{volterraexp},
$G$ is expanded via Corollary \ref{melange} as follows
\begin{eqnarray*}
G(z)=\sum_{m\ge0}\sum_{t_{i_1,j_1}\ldots t_{i_m,j_m}\in\{t_{1,3},t_{2,3}\}^*}
\int_0^z\omega_{i_1,j_1}(s_1)\varphi^{s_1}(t_{i_1,j_1})\ldots\int_0^{s_{m-1}}\\
\omega_{i_m,j_m}(s_m)\varphi^{s_m}(t_{i_m,j_m}),
\end{eqnarray*}
where $\omega_{1,3}(z)=d\log(z_1-z_3)$ and $\omega_{2,3}(z)=d\log(z_2-z_3)$ and
\begin{eqnarray*}
\varphi^z
=e^{\ad_{-(t_{1,2}/2{\rm i}\pi)\log(z_1-z_2)}}
=\sum_{k\ge0}\frac{\log^k(z_1-z_2)}{(-2{\rm i}\pi)^kk!}\ad^k_{t_{1,2}}.
\end{eqnarray*}
One also has
$\varphi^{(\varsigma,s_1)}(t_{i_1,j_1})\ldots\varphi^{(\varsigma,s_m)}(t_{i_m,j_m})
=V_0(z)^{-1}\hat\kappa_{t_{i_1,j_1}\ldots t_{i_m,j_m}}(z,s_1,\cdots,s_m)$.

Moreover, Example \ref{example} (equipping the ordering $t_{1,2}\prec t_{1,3}\prec t_{2,3}$), one has 
\begin{eqnarray*}
\varphi^z(t_{i,3})=\Sum_{k\ge0}\Frac{\log^k(z_1-z_2)}{(-2{\rm i}\pi)^kk!}P_{t_{1,2}^kt_{i,3}},&
\check\varphi^z(t_{i,3})=\Sum_{k\ge0}\Frac{\log^k(z_1-z_2)}{(-2{\rm i}\pi)^kk!}S_{t_{1,2}^kt_{i,3}},
\end{eqnarray*}
where $\check\varphi$ is the adjoint to $\varphi$ and is defined by
\begin{eqnarray*}
\check\varphi^z
=\sum_{k\ge0}\frac{\log^k(z_1-z_2)}{(-2{\rm i}\pi)^kk!}t_{1,2}^k
=e^{-(t_{1,2}/2{\rm i}\pi)\log(z_1-z_2)}.
\end{eqnarray*}

Hence, belonging to $\ncs{\calH(\widetilde{\C_*^3})}{\calT_3}$, $G$ satisfies ${\bf d}G(z)=\bar{\Omega}_2(z)G(z)$, where
$\bar{\Omega}_2(z)=(\varphi^z(t_{1,3})d\log(z_1-z_3)+\varphi^z(t_{2,3})d\log(z_2-z_3))/2{\rm i}\pi$.
In the plane $(P_{1,2}):z_1-z_2=1$, one has $\log(z_1-z_2)=0$ and then $\varphi\equiv\mathrm{Id}$.

Changing $x_0=t_{1,3}/2{\rm i}\pi,x_1=-t_{2,3}/2{\rm i}\pi$ and setting $z_1=1,z_2=0,z_3=s$,
${\bf d}G(z)=\bar\Omega_2(z)G(z)$ is similar to \eqref{$(DE)$}, \textit{i.e.}
\begin{eqnarray*}
\bar{\Omega}_2(z)
=(2{\rm i}\pi)^{-1}(t_{1,3}d\log(z_1-z_3)+t_{2,3}d\log(z_2-z_3))
=x_1\omega_1(s)+x_0\omega_0(s),
\end{eqnarray*}
and admits the noncommutative generating series of polylogarithms as the actual solution satisfying
the asymptotic conditions in \eqref{asymcond}. Thus, by $\L$ given in \eqref{L}, and the homographic
substitution $g:z_3\longmapsto(z_3-z_2)/(z_1-z_2)$,
mapping\footnote{Generally, $s\mapsto(s-a)(c-b)(s-b)^{-1}(c-a)^{-1}$ maps the singularities $\{a,b,c\}$ in $\{0,+\infty,1\}$.}
$\{z_2,z_1\}$ to $\{0,1\}$ (see Examples \ref{$KZ_3$}--\ref{$KZ_3$bis}), a particular solution of $KZ_3$,
in $(P_{1,2})$, is $\L((z_3-z_2)/(z_1-z_2))$. So does\footnote{Note also that these solutions could not be obtained by Picard's iteration in Example \ref{$KZ_3$bis}.

$(z_1-z_2)^{(t_{1,2}+t_{2,3}+t_{1,3})/2{\rm i}\pi}=e^{((t_{1,2}+t_{2,3}+t_{1,3})/2{\rm i}\pi)\log(z_1-z_2)}$,
which is grouplike and independent on the variable $z_3=s$, and then belongs to the differential Galois group
of $KZ_3$.} $\L((z_3-z_2)/(z_1-z_2))(z_1-z_2)^{(t_{1,2}+t_{1,3}+t_{2,3})/2{\rm i}\pi}$.

To end with $KZ_3$, by quadratic relations relations given in \eqref{braidbis}, one has $[t_{1,2}+t_{2,3}+t_{1,3},t]=0$,
for $t\in\calT_3$, meaning that $t$ commutes with $(z_1-z_2)^{(t_{1,2}+t_{2,3}+t_{1,3})/2{\rm i}\pi}$ and then 
$(z_1-z_2)^{(t_{1,2}+t_{1,3}+t_{2,3})/2{\rm i}\pi}$ commutes with $\ncs{\calA}{\calT_3}$. Thus, $KZ_3$ also admits
$(z_1-z_2)^{(t_{1,2}+t_{1,3}+t_{2,3})/2{\rm i}\pi}\L((z_3-z_2)/(z_1-z_2))$ as a particular solution in $(P_{1,2})$.

\subsection{$KZ_4$, other simplest non-trivial case}\label{AppendixC}
For $n=4$, one has $\calT_4=\{t_{1,2},t_{1,3},\allowbreak t_{1,4}, t_{2,3},t_{2,4},t_{3,4}\}$ and then
$\calT_3=\{t_{1,2},t_{1,3},t_{2,3}\}$ and $T_4=\{t_{1,4},t_{2,4},t_{3,4}\}$. Then, by Proposition \ref{volterraexp},
$\varphi^{(\varsigma,z)}_{T_4}=e^{\ad_{-\sum_{t\in T_4}\alpha_{\varsigma}^z(t)t}}$ and $\varphi^{(\varsigma,z)}_{t_{\bullet,4}}(t_{i,j})=\varphi_{T_4}^{(\varsigma,z)}(t_{i,j})$, for
$t_{i,j}\in\calT_4$.

If $z_4\to z_3$ then
$F(z)=V_0(z)G(z_1,z_2,z_3)$, where $V_0(z)=e^{\sum_{1\le i\le4}t_{i,4}\log(z_i-z_4)}$ and $G(z_1,z_2,z_3)$ satisfies ${\bf d}S=M_3^{t_{\bullet,4}}S$ with
\begin{eqnarray*}
M_3^{t_{\bullet,4}}(z)
&=&\varphi^{(z^0,z)}_{t_{\bullet,4}}(t_{1,2})d\log(z_1-z_2)
+\varphi^{(z^0,z)}_{t_{\bullet,4}}(t_{1,3})d\log(z_1-z_3)\\
&+&\varphi^{(z^0,z)}_{t_{\bullet,4}}(t_{2,3})d\log(z_2-z_3).
\end{eqnarray*}
In the intersection $(P_{1,3})\cap(P_{2,3})$, one has $\log(z_1-z_3)=\log(z_2-z_3)=0$ and
$\varphi_{t_{\bullet,4}}\equiv\mathrm{Id}$ and then $M_3^{t_{\bullet,4}}$ exactly coincides with $M_3$.

$F=V_0G$ is solution with $V_0(z)=(z_3-z_4)^{t_{3,4}/2{\rm i}\pi}$ and for $\omega_{i,j}(z)=d\log(z_i-z_j)$ ($1\le i<j\le 4$) and
$\varphi^z=e^{\ad_{-(t_{3,4}/2{\rm i}\pi)\log(z_3-z_4)}}$,
similarly to Proposition \ref{MRSBTT}, one has
\begin{eqnarray*}
G(z)=\sum_{m\ge0,t_{i_1,j_1}\ldots t_{i_m,j_m}\in\atop\{t_{1,2},t_{1,3},t_{2,3},t_{1,4},t_{2,4}\}^*}
\int_0^z\omega_{i_1,j_1}(s_1)\varphi^{s_1}(t_{i_1,j_1})\ldots\int_0^{s_{m-1}}
\omega_{i_m,j_m}(s_m)\varphi^{s_m}(t_{i_m,j_m}).
\end{eqnarray*}
One also has
$\varphi^{(\varsigma,s_1)}(t_{i_1,j_1})\ldots\varphi^{(\varsigma,s_m)}(t_{i_m,j_m})
=V_0(z)^{-1}\hat\kappa_{t_{i_1,j_1}\ldots t_{i_m,j_m}}(z,s_1,\cdots,s_m)$.

With the ordering $t_{1,2}\succ t_{1,3}\succ t_{2,3}\succ t_{1,4}\succ t_{2,4}\succ t_{3,4}$
in \eqref{orderLyndon}, one has 
\begin{eqnarray*}
\varphi^z(t_{1,2})=\Sum_{k\ge0}\Frac{\log^k(z_3-z_4)}{(-2{\rm i}\pi)^kk!}P_{t_{3,4}^kt_{1,2}},&
\check\varphi^z(t_{1,2})=\Sum_{k\ge0}\Frac{\log^k(z_3-z_4)}{(-2{\rm i}\pi)^kk!}S_{t_{3,4}^kt_{1,2}},\cr
\varphi^z(t_{1,3})=\Sum_{k\ge0}\Frac{\log^k(z_3-z_4)}{(-2{\rm i}\pi)^kk!}P_{t_{3,4}^kt_{1,3}},&
\check\varphi^z(t_{1,3})=\Sum_{k\ge0}\Frac{\log^k(z_3-z_4)}{(-2{\rm i}\pi)^kk!}S_{t_{3,4}^kt_{1,3}},\cr
\varphi^z(t_{2,3})=\Sum_{k\ge0}\Frac{\log^k(z_3-z_4)}{(-2{\rm i}\pi)^kk!}P_{t_{3,4}^kt_{2,3}},&
\check\varphi^z(t_{2,3})=\Sum_{k\ge0}\Frac{\log^k(z_3-z_4)}{(-2{\rm i}\pi)^kk!}S_{t_{3,4}^kt_{2,3}},\cr
\varphi^z(t_{1,4})=\Sum_{k\ge0}\Frac{\log^k(z_3-z_4)}{(-2{\rm i}\pi)^kk!}P_{t_{3,4}^kt_{1,4}},&
\check\varphi^z(t_{1,4})=\Sum_{k\ge0}\Frac{\log^k(z_3-z_4)}{(-2{\rm i}\pi)^kk!}S_{t_{3,4}^kt_{1,4}},\cr
\varphi^z(t_{2,4})=\Sum_{k\ge0}\Frac{\log^k(z_3-z_4)}{(-2{\rm i}\pi)^kk!}P_{t_{3,4}^kt_{2,4}},&
\check\varphi^z(t_{2,4})=\Sum_{k\ge0}\Frac{\log^k(z_3-z_4)}{(-2{\rm i}\pi)^kk!}S_{t_{3,4}^kt_{2,4}},
\end{eqnarray*}
where $\check\varphi$ is the adjoint to $\varphi$ and is defined by
\begin{eqnarray*}
\check\varphi^{(\varsigma,z)}
=\sum_{k\ge0}\frac{\log^k(z_3-z_4)}{(-2{\rm i}\pi)^kk!}t_{3,4}^k
=e^{-(t_{3,4}/2{\rm i}\pi)\log(z_3-z_4)}.
\end{eqnarray*}

Hence, belonging to $\ncs{\calH(\widetilde{\C_*^4})}{\calT_4}$,
$G$ satisfies ${\bf d}G(z)=\bar{\Omega}_3(z)G(z)$, where
\begin{eqnarray*}
\bar{\Omega}_3(z)&=&(2{\rm i}\pi)^{-1}(\varphi^{(\varsigma,z)}(t_{1,2})d\log(z_1-z_2)+\varphi^z(t_{1,3})d\log(z_1-z_3)\\
&+&\varphi^{(\varsigma,z)}(t_{2,3})d\log(z_2-z_3)+\varphi^{(\varsigma,z)}(t_{1,4}) d \log(z_1-z_4)\\
&+&\varphi^{(\varsigma,z)}(t_{2,4})d\log(z_2-z_4)).
\end{eqnarray*}
In the affine plane $(P_{3,4}):z_3-z_4=1$, one has $\log(z_3-z_4)=0$ and then $\varphi\equiv\mathrm{Id}$.

By the cubic coordinate system on the moduli space $\mathfrak{M}_{0,5}$ \cite{Furusho2} we can put $z_1=xy,z_2=y,z_3=1, z_4 =0$, one has
\begin{eqnarray*}
\bar{\Omega}_3(xy,y,1,0)
&=&(2{\rm i}\pi)^{-1}(t_{12}d\log(y(1-x))+t_{13}d\log(1-xy)\\
&+&t_{23}d\log(1-y)+t_{14}d\log(xy)+t_{24}d\log y)\\
&=&(2{\rm i}\pi)^{-1}(t_{12}d \log(1-x)+t_{13}\log(1-xy)\\
&+&t_{23}d\log(1-y)+t_{14}d\log x+(t_{12}+ t_{14}+t_{24})d\log y).
\end{eqnarray*}
The differential equation 
$dG(x,y)=\bar{\Omega}_3(xy,y,1,0)G(x,y)$ admits the unique solution $G(x,y)$ \cite{drinfeld2} satisfying $G(x,y)\sim_{(0,0)}x^{(2\mathrm{i}\pi)^{-1}t_{1,4}} y^{(2\mathrm{i}\pi)^{-1}(t_{12}+t_{14}+t_{24})}$.
Thus, by the homographic substitution mapping $\{z_3,z_4\}$ to $\{1,0\}$
\begin{eqnarray*}
g:\left\{z_1\longmapsto(z_1-z_4)/(z_2-z_4),z_2\longmapsto(z_2-z_4)/(z_3-z_4)\right\},
\end{eqnarray*}
a particular solution of $KZ_4$ is $G((z_1-z_4)/(z_2-z_4),(z_2-z_4)/(z_3-z_4))$, in $(P_{3,4})$. Since grouplike series $(z_3-z_4)^{(2\mathrm{i}\pi)^{-1}\sum_{1\le i<j\le 4}t_{i,j}}=e^{(2\mathrm{i}\pi)^{-1}\log(z_3-z_4)\sum_{1\le i<j\le 4}t_{i,j}}$ is independent on $\{z_1=xy,z_2=y\}$ and belongs to the differential Galois group of $KZ_4$ then $G((z_1-z_4)/(z_2-z_4),(z_2-z_4)/(z_3-z_4))(z_3-z_4)^{(2\mathrm{i}\pi)^{-1}\sum_{1\le i<j\le 4}t_{i,j}}$ is a particular solution, in $(P_{3,4})$.

Now, for any $t\in\calT_4$, using quadratic relations relations given in \eqref{braidbis},
one has $[\sum_{1\le i<j\le 4}t_{i,j},t]=0$. Thus $t$ commutes with
$(z_3-z_4)^{(2\mathrm{i}\pi)^{-1}\sum_{1\le i<j\le 4}t_{i,j}}$ and then 
$(z_3-z_4)^{(2\mathrm{i}\pi)^{-1}\sum_{1\le i<j\lq 4}t_{i,j}}$ commutes with $\ncs{\calA}{\calT_4}$.
Thus, $KZ_4$ also admits $(z_3-z_4)^{(2\mathrm{i}\pi)^{-1}\sum_{1\le i<j\le 4}t_{i,j}}
G((z_1-z_4)/(z_2-z_4),(z_2-z_4)/(z_3-z_4))$ as solution in $(P_{3,4})$.

\vfill

\textbf{Acknowledgements.}
The author would like to thank D.~Barsky, G.~Duchamp and B.~Enriquez for fruitful interactions and improving suggestions and also J.Y.~Enjalbert, G.~Koshevoy and C.~Tollu for discussions.
\end{document}